\newcommand{\diag}{{\rm diag}}
\newcommand{\Diag}{{\rm Diag}}
\newcommand{\proj}{{\rm proj}}
\newcommand{\Isom}{{\rm Isom}}
\newcommand{\isoR}{{\rm Isom}(\mathbb{R}^n)}
\newcommand{\isoC}{{\rm Isom}(\mathbb{C}^n)}
\newcommand{\dist}{{\rm dist}}
\newcommand{\K}{{\mathcal K}}
\newcommand{\KK}{{\widehat{K}}}
\newtheorem{thm}{Theorem}
\newtheorem{remark}[thm]{Remark}
\newtheorem{question}[thm]{Question}
\newtheorem{lemma}[thm]{Lemma}
\newtheorem{proposition}[thm]{Proposition}
\newtheorem{definition}[thm]{Definition}
\begin{document}

\title[Two-Point Correlation Functions]{Two-Point Correlation Functions and Universality for the Zeros of Systems of SO(n+1)-invariant Gaussian Random Polynomials}

\author[P. ~Bleher]{Pavel M. Bleher$^{1}$}

\author[Y. ~Homma]{Yushi Homma$^{1,2}$}

\author[R. ~Roeder]{Roland K.\ W.\ Roeder$^{1}$}

\footnotetext[1]{Department of Mathematical Sciences, Indiana University-Purdue University Indianapolis, 402 N. Blackford St., Indianapolis, IN 46202, USA}
\footnotetext[2]{Carmel High School, 520 E. Main St., Carmel, IN 46032, USA.}
\footnotetext[3]{Stanford University, Palo Alto, CA, United States.}

\begin{abstract}
We study the two-point correlation functions for the zeroes of systems of
$SO(n+1)$-invariant Gaussian random polynomials on $\mathbb{RP}^n$ and systems
of $\isoR$-invariant Gaussian analytic functions. Our result reflects the
same ``repelling," ``neutral," and ``attracting" short-distance asymptotic
behavior, depending on the dimension, as was discovered in the complex case by
Bleher, Shiffman, and Zelditch. 

We then prove that the correlation function for the $\isoR$-invariant Gaussian analytic functions is ``universal,'' describing the scaling limit
of the correlation function for the restriction of systems of the $SO(k+1)$-invariant Gaussian random polynomials to any $n$-dimensional $C^2$ submanifold $M \subset \mathbb{RP}^k$.  This provides a real counterpart to the universality results that were proved in the complex case by Bleher, Shiffman, and Zelditch.
\end{abstract}

\maketitle

\section{Introduction}

This paper concerns the $SO(n+1)$-invariant ensemble of Gaussian random
polynomials on $\mathbb{RP}^n$ and the $\Isom(\mathbb{R}^n)$-invariant ensemble
of Gaussian random analytic functions on $\mathbb{R}^n$.
The $SO(n+1)$-invariant ensemble consists of random polynomials of the form:
\begin{equation} \label{GRP}
F(\bm{X}) :=\sum\limits_{|\bm{\alpha}| = d} \sqrt{\binom{d}{\bm{\alpha}}} a_{\bm{\alpha}} \bm{X}^{\bm{\alpha}},
\end{equation}
where $\bm{X}\in\mathbb{R}^{n+1}$ and the $a_{\bm{\alpha}}$ are independent and identically distributed (iid) on the standard normal distribution, $\mathcal{N}\left(0, 1\right)$. Here, we use the following multi-index notation: for any $\bm{\alpha} \in  \left(\mathbb{Z}_{\geq 0}\right)^{n+1}$, one defines: \begin{equation}\bm{X}^{\bm{\alpha}}:=\prod\limits_{i=1}^{n+1} X_i^{\alpha_i}, \quad |\bm{\alpha} | = \sum\limits_{i = 1}^{n+1} \alpha_i \quad \mbox{and}\quad \binom{d}{\bm{\alpha}}  = \frac{d!}{\prod\limits_{j=1}^{n+1} \alpha_j!}.\end{equation}
We will study the simultaneous zeroes on the projective space $\mathbb{RP}^n$ of the systems: 
\begin{equation}\label{System} 
\bm{F}: \mathbb{R}^{n+1} \rightarrow \mathbb{R}^n \qquad \mbox{where} \qquad
\bm{F}=\left(F_1(\bm{X}), F_2(\bm{X}), \dots,  F_n(\bm{X})\right),
\end{equation}
 where each $F_i$ is an independently chosen random function of
the form in Equation~(\ref{GRP}). Almost surely, the common zero set of
$\bm{F}$ will be finitely many points. We equip $\mathbb{RP}^n$ with the
Riemannian metric obtained from its double cover by the unit sphere
$\mathbb{S}^n \subset \mathbb{R}^{n+1}$. The simultaneous zeroes of ensemble
$(\ref{System})$ are invariant under the isometries by elements of $SO(n+1)$;
see Section \ref{Invariance}. Because of this symmetry, authors have described
this ensemble as the ``most natural" ensemble of a random polynomials defined
on $\mathbb{RP}^n$. For this reason, it has been extensively studied by
Kostlan-Edelman\cite{Kostlan}, Shub-Smale\cite{ShubSmale}, and others.

The $\Isom(\mathbb{R}^n)$-invariant ensemble of Gaussian random analytic functions
is defined by the following:
\begin{equation}\label{GAF}
\bm{f}: \mathbb{R}^{n} \rightarrow \mathbb{R}^n \quad \mbox{where} \quad
\bm{f}=\left(f_1(\bm{x}), f_2(\bm{x}), \dots,  f_n(\bm{x})\right), \quad
\mbox{with} \quad f_i(\bm{x}):=\sum\limits_{\bm{\alpha}}
\frac{a_{\bm{\alpha}}
\bm{x}^{\bm{\alpha}}}{\sqrt{\bm{\alpha}!}},
\end{equation}
where $a_{\alpha}$ are iid on the standard normal
distribution, $\mathcal{N}\left(0, 1\right)$. We will show in Section
\ref{Invariance} that the zeroes of this ensemble are invariant under all
isometries of $\mathbb{R}^n$.   We will see shortly that this ensemble is intimately tied to the $SO(n+1)$-invariance ensemble in the scaling limit as the degree $d \rightarrow \infty$.

The {\em probability density} of the zeros of the system (\ref{System}) at $\bm{x} \in \mathbb{RP}^n$ is defined to be 
\begin{align}\label{DEF_DENSITY}
\rho(\bm{x})=\lim\limits_{\delta \rightarrow 0} \frac{1}{{\rm Vol}\left(N_\delta({\bm{x}})\right)} \text{Pr}\left(\exists \text{ a zero of $\bm{F}$ in } N_\delta({\bm{x}})\right),
\end{align}
where $\begin{displaystyle} N_\delta({\bm{x}}):=\{\bm{\bm{y}}\in\mathbb{RP}^n\ :\ \mathrm{dist}(\bm{x},\bm{y})<\delta \}\end{displaystyle}$. It follows from the invariance that this ensemble (\ref{System}) has a constant density of zeroes given by 
\begin{equation}\label{Density_FINITE_D} 
\rho_d(\bm{x}) = \pi^{-\frac{n+1}{2}} \Gamma \left(\frac{n+1}{2}\right) d^{\frac{n}{2}}; 
\end{equation} see, for example, \cite[Sec. 7.2]{Kostlan}. Note: the volume of the real projective space is $ \pi^{\frac{n+1}{2}} \Gamma \left(\frac{n+1}{2}\right)^{-1}$, so the expected number of zeroes is simply $d^{\frac{n}{2}}$.
The analogous definition applies to the ensemble (\ref{GAF}) which, because of the invariance under isometries of $\mathbb{R}^n$ has 
constant density
\begin{equation}  \label{Density} \rho(\bm{x}) =  \pi^{-\frac{n+1}{2}} \Gamma\left(\frac{n+1}{2}\right).
\end{equation}

The correlation function between the zeros of the system (\ref{System}) at the two points $\bm{x}$ and $\bm{y}$ in $\mathbb{RP}^n$ is defined to be
\begin{equation}
\label{DefK}K_{n,d}(\bm{x},\bm{y}):=\lim\limits_{\delta\rightarrow 0}  \ \frac{\text{Pr}\left(\exists\text{ a zero of } \bm{F} \text{ in } N_\delta({\bm{x}})\text{ and } \exists \text{ a zero of }\bm{F} \text{ in } N_\delta({\bm{y}}) \right)}{\text{Pr}\left(\exists\text{ a zero of } \bm{F} \text{ in } N_\delta({\bm{x}}) \right) \text{Pr}\left(\exists \text{ a zero of }\bm{F} \text{ in } N_\delta({\bm{y}}) \right)}.
\end{equation}
It follows from the $SO(n+1)$ invariance that $K_{n,d}(\bm{x},\bm{y})$
depends only on the distance between $\bm{x}$ and $\bm{y}$.  For this
reason, we can write $K_{n,d}(\bm{x},\bm{y}) \equiv
K_{n,d}(t)$, where $t = \dist_{\mathbb{RP}^n}(\bm{x},\bm{y})$.
Similarly,
for any $\bm{x},\bm{y} \in \mathbb{R}^n$, the two point
correlation function $\mathcal{K}_n(\bm{x},\bm{y})$ between zeros of
(\ref{GAF}) depends only on $\dist_{\mathbb{R}^n}(\bm{x},\bm{y})$.  We have

\begin{thm} \label{CORR_FINITE_D} 
For any $\bm{x} \neq \bm{y} \in \mathbb{RP}^n$, let $t = \dist_{\mathbb{RP}^n}(\bm{x},\bm{y})$.
For fixed $d \geq 3$, the correlation function between zeros of the $SO(n+1)$-invariant ensemble satisfies 
\begin{align}\label{SHORT_RANGE_D}
K_{n,d}(\bm{x},\bm{y}) \equiv K_{n,d}(t) = A_{n,d} \ t^{2-n} + O\left(t^{3-n}\right) \quad \mbox{as} \quad t \rightarrow 0, \quad  \mbox{where} \quad
A_{n,d} = \left(\frac{d-1}{d^{\frac{n}{2}}}\right) \frac{\sqrt{\pi} \Gamma\left(\frac{n+2}{2}\right)}{2\Gamma\left( \frac{n+1}{2}
\right)}.
\end{align}
\end{thm}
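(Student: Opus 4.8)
The plan is to evaluate $K_{n,d}(t)$ by the Kac--Rice formula for two-point correlations of zeros and then extract the leading term of its Taylor expansion as $t\to 0$. By $SO(n+1)$-invariance we may place $\bm{x}$ and $\bm{y}$ in a convenient position: passing to the double cover $\mathbb{S}^n\subset\mathbb{R}^{n+1}$ (for $t\to 0$ only zeros near $\bm{x}$ and near $\bm{y}$ affect the asymptotics), I would take $\bm{x}=e_0$ and $\bm{y}=\cos t\, e_0+\sin t\, e_1$ for the standard basis $e_0,\dots,e_n$ of $\mathbb{R}^{n+1}$, and work in the parallel orthonormal frames $(e_1,\dots,e_n)$ of $T_{\bm{x}}\mathbb{S}^n$ and $(\tilde e_1,\dots,\tilde e_n)$ of $T_{\bm{y}}\mathbb{S}^n$, where $\tilde e_1=-\sin t\, e_0+\cos t\, e_1$ and $\tilde e_j=e_j$ for $j\ge 2$. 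Writing $D\bm{F}$ for the Jacobian in these frames, the Kac--Rice formula gives
\[
K_{n,d}(t)=\frac{1}{\rho_d^{\,2}}\; p_{\bm{F}(\bm{x}),\bm{F}(\bm{y})}(0,0)\;\mathbb{E}\!\left[\,\bigl|\det D\bm{F}(\bm{x})\bigr|\,\bigl|\det D\bm{F}(\bm{y})\bigr|\ \Big|\ \bm{F}(\bm{x})=\bm{F}(\bm{y})=0\,\right],
\]
where $\rho_d$ is the constant density in (\ref{Density_FINITE_D}); the reduction of the probabilistic ratio in (\ref{DefK}) to this density-level identity is routine since $\bm{x}\neq\bm{y}$.

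Next I would assemble the Gaussian ingredients. The binomial weights in (\ref{GRP}) are exactly what make the covariance kernel of a single component equal to $\mathbb{E}[F(\bm{X})F(\bm{Y})]=\langle\bm{X},\bm{Y}\rangle^d$, and differentiating this along geodesics yields all covariances of values and tangential derivatives in terms of $c:=\cos t$. In the chosen frames the $(2n+2)$-dimensional Gaussian vector $\bigl(F(\bm{x}),\nabla F(\bm{x}),F(\bm{y}),\nabla F(\bm{y})\bigr)$ then decomposes into a ``value'' block $\bigl(F(\bm{x}),F(\bm{y})\bigr)$ with covariance $\bigl(\begin{smallmatrix}1&c^d\\ c^d&1\end{smallmatrix}\bigr)$; a coupled ``radial'' block $\bigl(\partial_{e_1}F(\bm{x}),\partial_{\tilde e_1}F(\bm{y})\bigr)$; and $n-1$ mutually independent ``tangential'' blocks $\bigl(\partial_{e_j}F(\bm{x}),\partial_{\tilde e_j}F(\bm{y})\bigr)$, $2\le j\le n$, each with covariance $\bigl(\begin{smallmatrix}d&d c^{d-1}\\ d c^{d-1}&d\end{smallmatrix}\bigr)$. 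In particular $p_{\bm{F}(\bm{x}),\bm{F}(\bm{y})}(0,0)=(2\pi)^{-n}(1-c^{2d})^{-n/2}$, and $1-c^{2d}=d\,t^2+O(t^4)$ produces the singular prefactor $(2\pi)^{-n}d^{-n/2}t^{-n}\bigl(1+O(t^2)\bigr)$.

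The crux is the conditional law of $\bigl(D\bm{F}(\bm{x}),D\bm{F}(\bm{y})\bigr)$ given $\bm{F}(\bm{x})=\bm{F}(\bm{y})=0$. Only the radial block feels the conditioning, and a Schur-complement computation gives its conditional covariance as $\bigl(\begin{smallmatrix}\sigma_1^2&\rho_1\\ \rho_1&\sigma_1^2\end{smallmatrix}\bigr)$ with $\sigma_1^2=\tfrac{d(d-1)}{2}t^2+O(t^4)$ and $\rho_1=-\tfrac{d(d-1)}{2}t^2+O(t^4)$, while every other block is unchanged. Hence, as $t\to 0$: the first column of $D\bm{F}(\bm{x})$ has size $O(t)$; the rescaled pair $t^{-1}\bigl(\text{first column of }D\bm{F}(\bm{x}),\ \text{first column of }D\bm{F}(\bm{y})\bigr)$ converges in law to $(\bm{\zeta},-\bm{\zeta})$ with $\bm{\zeta}\sim\mathcal{N}\!\bigl(0,\tfrac{d(d-1)}{2}I_n\bigr)$; and the remaining $n-1$ columns of $D\bm{F}(\bm{x})$ and of $D\bm{F}(\bm{y})$ both converge to one and the same collection of i.i.d.\ $\mathcal{N}(0,d\,I_n)$ columns, independent of $\bm{\zeta}$. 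Therefore $\bigl|\det D\bm{F}(\bm{x})\bigr|\,\bigl|\det D\bm{F}(\bm{y})\bigr|=t^2\,\tfrac{d(d-1)}{2}\,d^{\,n-1}(\det G)^2\bigl(1+O(t)\bigr)$ for a standard $n\times n$ real Ginibre matrix $G$, and $\mathbb{E}\bigl[(\det G)^2\bigr]=n!$. Combining this with the prefactor above and $\rho_d^{\,2}=\pi^{-(n+1)}\Gamma\!\bigl(\tfrac{n+1}{2}\bigr)^2 d^{\,n}$ yields
\[
K_{n,d}(t)=\frac{d-1}{d^{\,n/2}}\cdot\frac{n!\,\pi}{2^{\,n+1}\,\Gamma\!\bigl(\tfrac{n+1}{2}\bigr)^{2}}\;t^{2-n}+O\!\bigl(t^{3-n}\bigr),
\]
and the Legendre duplication identity $\Gamma\!\bigl(\tfrac{n+1}{2}\bigr)\Gamma\!\bigl(\tfrac{n+2}{2}\bigr)=2^{-n}\sqrt{\pi}\,n!$ rewrites the constant as the asserted $A_{n,d}$.

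The step I expect to be the main obstacle is upgrading this leading-order computation to the stated remainder $O(t^{3-n})$. Two things must be checked: that the subleading corrections to the conditional covariance --- which make the conditional correlation of the two radial columns equal to $-1+O(t^2)$ rather than exactly $-1$, and hence perturb the first column of $D\bm{F}(\bm{y})$ away from minus the first column of $D\bm{F}(\bm{x})$ by a term of size $O(t^2)$ --- alter $\mathbb{E}\bigl[|\det D\bm{F}(\bm{x})|\,|\det D\bm{F}(\bm{y})|\bigr]$ only at order $t^3$; and that the expansion survives taking the expectation, i.e., that $t^{-2}\bigl|\det D\bm{F}(\bm{x})\bigr|\,\bigl|\det D\bm{F}(\bm{y})\bigr|$ is uniformly integrable as $t\to 0$. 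Both follow from the observation that, once the two radial columns are rescaled by $t^{-1}$, the entire Jacobian vector is Gaussian with a covariance matrix that depends real-analytically on $t$ and tends to a fixed nondegenerate limit, so all of its moments are bounded uniformly near $t=0$; the absolute values create no difficulty since the determinants involved have densities.
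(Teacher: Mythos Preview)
Your argument is correct and recovers the stated constant $A_{n,d}$; it is, however, organized quite differently from the paper's proof. The paper passes to affine coordinates with the two points at $(0,\dots,0,\pm t/2)$, writes down the full covariance matrix explicitly, computes the submatrix $\bold{\Omega}$ of $\bold{C}^{-1}$ appearing in the integral form of the Kac--Rice formula, and orthogonally diagonalizes it. It then makes an elaborate change of variables to $n$ copies of $(2n-1)$-spherical coordinates, factors a $t$ from the ``radial'' column of each determinant, and reduces the limiting integral to a second-moment formula for the volume of a random unit parallelotope (derived in an appendix). Your route --- working in parallel geodesic frames on $\mathbb{S}^n$, using the conditional-expectation form of Kac--Rice, extracting the Schur complement for the radial block, and observing that the rescaled Jacobians degenerate to $\pm$ the \emph{same} matrix so that the product of absolute determinants collapses to $(\det G)^2$ with $\mathbb{E}[(\det G)^2]=n!$ --- is more probabilistic and avoids both the spherical change of variables and the parallelotope moment lemma entirely. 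The tradeoff is that the paper's computation is fully explicit at each step, while yours leans on the limiting-law identification and a uniform-integrability argument. One small slip: the limiting covariance of the rescaled Jacobian vector is \emph{degenerate} (the tangential columns coincide and the radial columns are perfectly anti-correlated), not nondegenerate as you wrote; this is harmless for the UI step, which only needs boundedness of the covariance. Your handling of the absolute values deserves one more sentence of justification: since the limiting product $X_tY_t\to -A^2\le 0$ with $A=\det G$ having a density, one has $\mathbb{E}[|X_t||Y_t|]=-\mathbb{E}[X_tY_t]+2\mathbb{E}[(X_tY_t)^+]$, the first term is analytic in $t$, and $(X_tY_t)^+\le t|W_t|$ for a uniformly $L^1$ family $W_t$, which gives the $O(t)$ correction you need.
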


\begin{thm} \label{CORR_ISOM_INV}
For any $\bm{x} \neq \bm{y} \in \mathbb{R}^n$, let $t = \dist_{\mathbb{R}^n}(\bm{x},\bm{y})$.  The correlation function between zeros of the $\isoR$-invariant ensemble satisfies 
\begin{eqnarray}\label{SHORT_RANGE}
K_n(\bm{x},\bm{y}) \equiv K_n(t) = A_n \ t^{2-n} + O\left(t^{3-n}\right) \quad \mbox{as} \quad  t \rightarrow 0, \quad  \mbox{where} \quad
A_n = \frac{\sqrt{\pi} \Gamma\left(\frac{n+2}{2}\right)}{2\Gamma\left( \frac{n+1}{2}
\right)},
\end{eqnarray}
and
\begin{eqnarray} \label{LONG_RANGE}
K_n(t) = 1 + O\left(t \mathrm{e}^{-\frac{t^2}{2}}\right) \quad \mbox{as $t \rightarrow \infty$.}
\end{eqnarray}
\end{thm}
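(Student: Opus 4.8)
The plan is to compute $K_n(t)$ via the Kac--Rice formula, which expresses the two-point correlation function in terms of the joint distribution of the values and gradients of $\bm{f}$ at the two points $\bm{x}$ and $\bm{y}$. Since $\bm{f}$ is a centered Gaussian field, everything is governed by its covariance kernel. A direct computation from \eqref{GAF} gives that each component $f_i$ has covariance $\mathbb{E}[f_i(\bm{x}) f_i(\bm{y})] = \sum_{\bm{\alpha}} \bm{x}^{\bm{\alpha}}\bm{y}^{\bm{\alpha}}/\bm{\alpha}! = e^{\langle \bm{x},\bm{y}\rangle}$, so after the usual normalization (dividing $f_i(\bm{x})$ by $e^{|\bm{x}|^2/2}$ to make it unit-variance) the relevant kernel is $K(\bm{x},\bm{y}) = e^{-\frac12|\bm{x}-\bm{y}|^2}$, which depends only on $t = |\bm{x}-\bm{y}|$, as anticipated by the stated isometry invariance. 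First I would set up the Kac--Rice integrand: condition on $\bm{f}(\bm{x}) = \bm{f}(\bm{y}) = 0$, and integrate the expected product of Jacobian determinants $|\det D\bm{f}(\bm{x})|\,|\det D\bm{f}(\bm{y})|$ against the Gaussian density of $(\bm{f}(\bm{x}),\bm{f}(\bm{y}))$ evaluated at the origin; dividing by $\rho(\bm{x})\rho(\bm{y}) = \rho^2$ from \eqref{Density} yields $K_n(t)$.

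Next I would reduce the Gaussian algebra to a manageable form. Because the components $f_1,\dots,f_n$ are iid, the full $2n(n+1)$-dimensional Gaussian vector $(\bm{f}(\bm{x}),\bm{f}(\bm{y}),D\bm{f}(\bm{x}),D\bm{f}(\bm{y}))$ decomposes as $n$ independent copies of the $(2n+2)$-dimensional vector built from a single scalar field $f$ and its gradient at the two points. The conditional covariance of the gradients, given $f(\bm{x}) = f(\bm{y}) = 0$, is a $(2n)\times(2n)$ matrix whose entries are explicit functions of $t$ obtained by differentiating $K(\bm{x},\bm{y}) = e^{-t^2/2}$; by rotational symmetry this matrix has a block structure that is diagonal in a basis adapted to the direction $\bm{x}-\bm{y}$, with only a few distinct eigenvalue expressions. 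The quantity $\mathbb{E}\big[\,|\det D\bm{f}(\bm{x})|\,|\det D\bm{f}(\bm{y})|\ \big|\ \bm{f}(\bm{x})=\bm{f}(\bm{y})=0\big]$ is then an integral of $|\det|\cdot|\det|$ over this structured Gaussian, which I would organize so as to isolate its $t$-dependence cleanly. Carrying out the small-$t$ and large-$t$ expansions of these entries is the bulk of the work.

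For the short-distance asymptotics \eqref{SHORT_RANGE}, the mechanism is the familiar one: as $t\to 0$ the conditional covariance degenerates in the $\bm{x}-\bm{y}$ direction at rate $\sim t^2$, so the density of $(\bm{f}(\bm{x}),\bm{f}(\bm{y}))$ at the origin blows up while the Jacobian expectation stays bounded, and the leading behavior is extracted by rescaling. The power $t^{2-n}$ and the constant $A_n$ should drop out of a careful Taylor expansion to second order; I would also remark that $A_n$ matches $A_{n,d}$ of Theorem \ref{CORR_FINITE_D} in the scaling limit $d\to\infty$ (indeed $A_{n,d} = ((d-1)/d^{n/2})\cdot (A_n / \,?\,)$, and after rescaling lengths by $\sqrt d$ the two agree), which is consistent with the claimed universality. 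For the long-distance asymptotics \eqref{LONG_RANGE}, the kernel $K = e^{-t^2/2}$ and all its derivatives decay like $\mathrm{poly}(t)e^{-t^2/2}$, so the two points become asymptotically independent; expanding the Kac--Rice integrand to first order in these exponentially small quantities gives $K_n(t) = 1 + O(t e^{-t^2/2})$, where the single power of $t$ comes from the leading correction term $\partial_t K \sim -t e^{-t^2/2}$.

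I expect the main obstacle to be the bookkeeping in the $|\det|\cdot|\det|$ Gaussian integral: unlike the one-point case, the two Jacobians are correlated through the conditioning, so one cannot simply factor the expectation, and one must either diagonalize the joint covariance and integrate, or use an integral representation for $|\det|$ (e.g. via the Gaussian expectation $\mathbb{E}|\det A| = c_n \int \cdots$) to linearize the absolute values. Keeping the rotational block structure explicit throughout — so that the answer is visibly a function of $t$ alone and the limits $t\to 0$, $t\to\infty$ can be read off — is where the care is needed; once the integrand is in that form, both asymptotic expansions are routine.
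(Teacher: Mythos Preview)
Your proposal is correct and follows essentially the same route as the paper: reduce by invariance to two symmetric points, apply the two-point Kac--Rice formula, exploit the iid-component block structure and rotational symmetry to diagonalize $\bold{\Omega}$ explicitly, then carry out separate asymptotic analyses for $t\to 0$ and $t\to\infty$. The one place where you understate the work is the short-range constant $A_n$: it does not simply ``drop out of a careful Taylor expansion to second order'' --- after diagonalizing and rescaling, the paper passes to spherical coordinates on $(\mathbb{S}^{2n-1})^n$, separates the radial and angular integrals, and evaluates the surviving angular integral via a formula for the second moment of the volume of a random unit parallelotope (their Proposition~\ref{kthMoment}); that computation, not the Taylor expansion of the covariance entries, is what produces the specific value $A_n = \sqrt{\pi}\,\Gamma\!\left(\tfrac{n+2}{2}\right)\big/\bigl(2\,\Gamma\!\left(\tfrac{n+1}{2}\right)\bigr)$.
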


Given a $C^2$ submanifold $M \subset \mathbb{RP}^k$ having dimension $n$, the
restrictions of $n$ of the polynomials chosen iid from the $SO(k+1)$-invariant ensemble (\ref{GRP}) has a
well-defined zero set which again consists a.s. of finitely many points. 
We give $M \subset \mathbb{RP}^k$ the metric induced by the double
cover of $\mathbb{RP}^k$ by the unit sphere~$\mathbb{S}^k$.   More specifically,
we obtain a Riemannian metric on $M$ using the inclusion of tangent spaces $T_p M \subset T_p \mathbb{RP}^k$.
When restricted to a sufficiently small neighborhood of the origin, the
orthogonal projection $\proj_p : T_p M \rightarrow M$ provides a system of
local coordinates on $M$.  We will use these systems of local coordinates to
study the correlation between zeros of the restriction of the $SO(k+1)$
invariant ensemble to $M$.
The next theorem expresses $K_n(\bm{x},\bm{y})$ as the universal correlation function in the scaling limit $d\rightarrow \infty$ for the restriction of the $SO(k+1)$-invariant ensemble to $M \subset \mathbb{RP}^k$:

\begin{thm}\label{THM:UNIVERSALITY}
Let $M \subset \mathbb{RP}^k$ be a $C^2$ submanifold of dimension $n$ and
$K_{n,d,M}(\bm x, \bm y)$ denote the correlation function between zeros of $n$ polynomials chosen iid from the degree~$d$
$SO(k+1)$ invariant ensemble restricted to $M$.  Then, for any $p \in M$ and any $\bm{x}, \bm{y} \in T_p M$ we have
\begin{eqnarray*}
K_{n,d,M}\left(\proj_p\left(\frac{\bm x}{\sqrt{d}}\right),\proj_p \left(\frac{\bm y}{\sqrt{d}}
\right)\right) = K_n\big(\bm x, \bm y \big) + O\left(\frac{1}{\sqrt{d}}\right)  \quad \mbox{as $d \rightarrow \infty$.}
\end{eqnarray*}
The constant in the estimate is uniform on compact subsets of $T_p M \times T_p M \setminus \Diag$, where
$\Diag = \{(\bm{x},\bm{y}) \in T_p M \times T_p M\ \, : \, \bm{x} = \bm{y} \}$.
\end{thm}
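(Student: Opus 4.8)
\medskip
\noindent\textbf{Proof strategy for Theorem~\ref{THM:UNIVERSALITY}.}
The plan is to route the statement through the Kac--Rice formula and then reduce to a convergence statement for covariance kernels, in the spirit of the complex universality theorems of Bleher--Shiffman--Zelditch. Because the zero set of a Gaussian field is unchanged when the field is multiplied by a fixed nonzero scalar, both $K_{n,d,M}$ and $K_n$ are determined by the \emph{normalized} covariance kernels of the two ensembles together with their first and (mixed) second partial derivatives. By the multinomial theorem the $SO(k+1)$-invariant ensemble has covariance $\mathbb{E}[F(\bm X)F(\bm Y)]=(\bm X\cdot\bm Y)^d$, so its restriction to $M$, written in the local coordinates $\phi_p\colon T_pM\to M$ obtained by inverting orthogonal projection and lifted to $\mathbb{S}^k$ by a map $\tilde\phi_p$ (taking the lift with positive inner product near $p$), has already-normalized kernel $P_{d,M,p}(\bm u,\bm v)=\big(\tilde\phi_p(\bm u)\cdot\tilde\phi_p(\bm v)\big)^d$; the $\isoR$-invariant ensemble has $\mathbb{E}[f(\bm x)f(\bm y)]=e^{\bm x\cdot\bm y}$, whose normalization is $P(\bm x,\bm y)=e^{-\frac12|\bm x-\bm y|^2}$. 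In each case the two-point correlation function is one and the same continuous function of the covariance matrix $\Sigma$ of the Gaussian vector $(\bm F,D\bm F)$ evaluated at the two points --- up to the (constant) one-point densities, it is the density of the pair of values at the origin times the conditional expectation of $|\det D\bm F(\bm x)|\,|\det D\bm F(\bm y)|$ given those values vanish --- and, since the $n$ component functions are i.i.d., $\Sigma$ is built blockwise out of $P$ and its partials $\partial_{x_i}$, $\partial_{y_j}$, $\partial_{x_i}\partial_{y_j}$ at the two points. Thus it suffices to show that, as $d\to\infty$, the rescaled kernel $P_{d,M,p}(\bm x/\sqrt d,\bm y/\sqrt d)$ and all of these partials converge to $e^{-\frac12|\bm x-\bm y|^2}$ and its corresponding partials, uniformly on compact subsets of $T_pM\times T_pM$.

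For the kernel convergence I would Taylor-expand $g(\bm u,\bm v):=\tilde\phi_p(\bm u)\cdot\tilde\phi_p(\bm v)$, a $C^2$ function with $g(0,0)=1$. Since $\phi_p$ inverts orthogonal projection, $\tilde\phi_p(0)=\tilde p$ and $D\tilde\phi_p(0)$ is the isometric inclusion $T_pM\hookrightarrow T_{\tilde p}\mathbb{S}^k\subset\mathbb{R}^{k+1}$; in particular $\partial_{u_i}g(0,0)=0$, and differentiating the identity $\tilde\phi_p\cdot\tilde\phi_p\equiv1$ twice gives $\partial_{u_i}\partial_{u_j}g(0,0)=-\delta_{ij}$, while $\partial_{u_i}\partial_{v_j}g(0,0)=\delta_{ij}$. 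Hence the second-order Taylor expansion of $g$ at the origin is $1-\tfrac12|\bm u-\bm v|^2$ plus a remainder that, by continuity of the second derivatives of $\tilde\phi_p$, is $o(|\bm u|^2+|\bm v|^2)$ uniformly on compacts; at $\bm u=\bm x/\sqrt d$, $\bm v=\bm y/\sqrt d$ this reads
\[
g\bigl(\tfrac{\bm x}{\sqrt d},\tfrac{\bm y}{\sqrt d}\bigr)=1-\frac{|\bm x-\bm y|^2}{2d}+\frac{\varepsilon_d(\bm x,\bm y)}{d},
\]
with $\varepsilon_d\to0$ uniformly on compacts, so that $g(\bm x/\sqrt d,\bm y/\sqrt d)^d\to e^{-\frac12|\bm x-\bm y|^2}$. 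Differentiating the same expansion gives $\partial_{u_i}g=\tfrac{1}{\sqrt d}(y_i-x_i)+o(\tfrac1{\sqrt d})$, $\partial_{v_j}g=\tfrac{1}{\sqrt d}(x_j-y_j)+o(\tfrac1{\sqrt d})$, and $\partial_{u_i}\partial_{v_j}g=\delta_{ij}+o(1)$ at the rescaled points, uniformly on compacts. Applying the chain rule to $P_{d,M,p}(\bm x/\sqrt d,\bm y/\sqrt d)=g(\bm x/\sqrt d,\bm y/\sqrt d)^d$ --- the crucial bookkeeping being that each coordinate derivative $\partial_{x_i}$ carries a factor $\tfrac1{\sqrt d}$ that exactly cancels the powers of $d$ produced by differentiating a $d$-th power --- yields $\partial_{x_i}P_{d,M,p}\to(y_i-x_i)e^{-\frac12|\bm x-\bm y|^2}$ and $\partial_{x_i}\partial_{y_j}P_{d,M,p}\to\bigl(\delta_{ij}-(x_i-y_i)(x_j-y_j)\bigr)e^{-\frac12|\bm x-\bm y|^2}$, which are exactly $\partial_{x_i}$ and $\partial_{x_i}\partial_{y_j}$ of $e^{-\frac12|\bm x-\bm y|^2}$; the rate $O(d^{-1/2})$ comes from tracking the remainders in these expansions quantitatively.

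To finish, observe that on $T_pM\times T_pM\setminus\Diag$ the quantity $|\bm x-\bm y|$ is bounded away from $0$ on compacts, and there the limiting Gaussian $2$-jet $(\bm f(\bm x),D\bm f(\bm x),\bm f(\bm y),D\bm f(\bm y))$ is nondegenerate --- which is precisely what makes $K_n(t)$ a finite, continuous function for $t>0$ in Theorem~\ref{CORR_ISOM_INV} --- so for $d$ large the rescaled covariance matrix $\Sigma_{d,M,p}$ lies in the open set on which the Kac--Rice integrand (the Gaussian density at the origin, and the conditional expectation of the product of Jacobian determinants) depends continuously on the covariance data; the uniform integrability of the determinant factors needed to pass to the limit is automatic from Gaussianity together with local-uniform convergence of covariances. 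Combining this with the kernel convergence of the previous paragraph gives
\[
K_{n,d,M}\Bigl(\proj_p\bigl(\tfrac{\bm x}{\sqrt d}\bigr),\proj_p\bigl(\tfrac{\bm y}{\sqrt d}\bigr)\Bigr)=K_n(\bm x,\bm y)+O\bigl(\tfrac1{\sqrt d}\bigr)
\]
uniformly on compact subsets of $T_pM\times T_pM\setminus\Diag$. I expect the main obstacle to be the second step: with only $C^2$ regularity there is no room for a higher-order Taylor expansion, so one must extract the $\tfrac1{\sqrt d}$-scaling behaviour of the kernel \emph{and} of all of its first and mixed second derivatives while keeping every error term $o(1)$ --- and, for the stated rate, $O(d^{-1/2})$ --- uniformly on compacts; a secondary technical point is verifying the continuous and uniform dependence of the Kac--Rice integrand on the covariance data over compact sets that stay away from the diagonal.
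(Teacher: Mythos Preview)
Your approach is essentially the same as the paper's: both reduce Theorem~\ref{THM:UNIVERSALITY} to the Kac--Rice formula together with convergence of the covariance data of the rescaled restricted ensemble to that of the $\isoR$-invariant ensemble. The paper organizes the argument slightly differently. It first uses the $SO(k+1)$-invariance to move $p$ to a standard position, align $T_pM$ with a coordinate $n$-plane, and rotate so that $\bm x=(0,\ldots,0,s,t)$, $\bm y=(0,\ldots,0,u)$; it then works with the \emph{unnormalized} affine kernel $(1+\bm z\cdot\bm w)^d$ and splits the covariance comparison into two steps: (i) ``curved to flat'' ($\bold C_{d,\bm\psi}=\bold C_{d,\bm 0}+O(1/d)$), and (ii) ``flat to GAF'' ($\bold C_{d,\bm 0}=\bold C+O(1/d)$). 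The continuity of the Kac--Rice output in the covariance data is handled by an explicit estimate (Lemma~\ref{COCV3}) of the form $|\!\int\!|\det\bm\xi||\det\bm\eta|(e^{-\frac12(\bold B\bm u,\bm u)}-e^{-\frac12(\bold A\bm u,\bm u)})\,d\bm u|=O(\|\bold A-\bold B\|_\infty^{1/2})$, which is exactly where the $O(1/\sqrt d)$ rate comes from.

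The obstacle you flag is the right one, and your sketch does not quite resolve it. A generic $C^2$ map $\tilde\phi_p$ would give only an $o(|\bm u|^2+|\bm v|^2)$ Taylor remainder in $g$, hence only $o(1)$ convergence of the rescaled covariance matrix, and then mere continuity of the Kac--Rice integrand yields no rate. What saves the situation---and what the paper makes explicit in its step (i)---is the special structure of $\tilde\phi_p$: in graph coordinates $\tilde\phi_p(\bm u)=(\bm u,\bm\psi(\bm u),1)/\sqrt{1+|\bm u|^2+|\bm\psi(\bm u)|^2}$ with $\bm\psi(0)=0$, $D\bm\psi(0)=0$. Hence $\bm\psi(\bm u/\sqrt d)=O(1/d)$ and $\partial_{x_i}\bm\psi(\bm u/\sqrt d)=O(1/d)$ with constants depending only on $\|D^2\bm\psi\|$, so every $\bm\psi$-dependent contribution to the covariance (values or derivatives) is $O(1/d)$ \emph{quantitatively}; the $\bm\psi$-independent part is smooth and gives $O(1/d)$ by the standard $(1+a/d)^d=e^a+O(1/d)$. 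If you rewrite your Taylor argument keeping these two contributions separate, you recover exactly the paper's $\bold C_{d,\bm\psi}=\bold C+O(1/d)$. Finally, ``continuity'' of the Kac--Rice output must be upgraded to the quantitative H\"older-type bound above; combined with $O(1/d)$ in the covariance, this gives $O(1/\sqrt d)$ in $K_{n,d,M}$.
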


Our techniques are largely based on those of Bleher and Di
\cite{BleherDi1}, who use the Kac-Rice formula (see Section \ref{Kac-Rice} below) to study the $n$-point correlation functions for
the $SO(1,1)$ and $SO(2)$-invariant polynomials in one variable. Moreover,
our results in the higher dimensional real case yield the exact same
short-distance asymptotic behavior (with a different constant) as those of
Bleher, Shiffman, and Zelditch \cite{BSZUniversality2,BSZUniversality,BSZ3} in the
complex case. 
These asymptotic behaviors can be interpreted as ``repelling" for
$n=1$, ``neutral" for $n=2$, and ``attracting" for $n\geq3$. See Figure
\ref{Graphs} for numerical plots of $K_n(t)$ for $n=1$,
$n=2$, and $n=3$.  

We remark that calculation of the leading order asymptotics is more delicate in
the real case than in the complex case because one cannot apply Wick's Theorem
to the real Kac-Rice formula.  Similar types of analysis have been done in the
real setting by Nicolaescu, who studied the critical points for random Fourier
series.  It is interesting that he found the same exponent of $2-n$ arising in his
work \cite[Eqn. 1.15]{Nicolaescu}.

Theorem \ref{THM:UNIVERSALITY} above provides a real analog of the
universality results that were obtained in the complex setting by Bleher, Shiffman, and Zelditch
\cite{BSZUniversality2,BSZUniversality}.  Thus, the plots shown in Figure
\ref{Graphs} depict the universal scaling limits of the correlation functions
for any submanifold $M \subset \mathbb{RP}^k$ of dimension $1, 2,$ or $3$.

The scaling limit used in Theorem \ref{THM:UNIVERSALITY} is needed to get a
universal correlation function.  This is illustrated in
Section~\ref{SEC:GEOMETRY_FINITE_D} where we show that when restricted to a
parabola $y = b x^2$ the leading term from the correlation between zeros for
the $SO(3)$-invariant polynomials of degree $3$ near $x=0$ depends
non-trivially on $b$.  More generally, it can be interesting to ask how the
geometry of $M$ affects the correlation function $K$ for finite degree $d$.

\vspace{0.1in}

The proof of Theorem \ref{THM:UNIVERSALITY} easily adapts to to
complex setting: The $SU(k+1)$-invariant ensemble of polynomials are obtained by
interpreting the variables in (\ref{GRP}) as complex and replacing the real
Gaussians $a_{\bm \alpha}$ with complex Gaussians.  The $\isoC$-invariant ensemble of Gaussian analytic functions on $\mathbb{C}^m$ is obtained by making the same adaptations to (\ref{GAF}).  We obtain:

\begin{thm}\label{THM:UNIVERSALITYC}
Let $M \subset \mathbb{CP}^k$ be an complex analytic submanifold of dimension $n$ and
$K_{n,d,M}(\bm x, \bm y)$ denote the correlation function between zeros of $n$ polynomials chosen iid from the degree~$d$
$SU(k+1)$ invariant ensemble restricted to $M$.  Then, for any $p \in M$ and any $\bm{x}, \bm{y} \in T_p M$ we have
\begin{eqnarray*}
K_{n,d,M}\left(\proj_p\left(\frac{\bm x}{\sqrt{d}}\right),\proj_p \left(\frac{\bm y}{\sqrt{d}}
\right)\right) = K_n\big(\bm x, \bm y \big) + O\left(\frac{1}{\sqrt{d}}\right) \quad \mbox{as $d \rightarrow \infty$.}
\end{eqnarray*}
The constant in the estimate is uniform on compact subsets of $T_p M \times T_p M \setminus \Diag$, where
$\Diag = \{(\bm{x},\bm{y}) \in T_p M \times T_p M\ \, : \, \bm{x} = \bm{y} \}$.
\end{thm}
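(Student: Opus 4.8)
The plan is to run the proof of Theorem~\ref{THM:UNIVERSALITY} essentially word for word, with the real Kac--Rice formula replaced by its complex analogue. The complex case is in fact \emph{softer}: complex Gaussian vectors obey Wick's theorem, so the Kac--Rice pair intensity can be written out in closed form, and a complex analytic submanifold $M\subset\mathbb{CP}^k$ is automatically real-analytic, so no regularity is lost in the projection coordinates. Concretely, transporting the restricted $SU(k+1)$-invariant ensemble to a neighbourhood of $0\in T_p M$ via $\proj_p$ produces a centred complex Gaussian map $\bm g_d$ whose law is encoded by the pulled-back normalized Fubini--Study kernel
\[
P^M_d(\bm u,\bm v)=\big\langle\,\widehat{\proj_p(\bm u)}\,,\,\widehat{\proj_p(\bm v)}\,\big\rangle^{d},
\]
where $\widehat{\,\cdot\,}$ denotes a unit lift to $\mathbb{S}^{2k+1}\subset\mathbb{C}^{k+1}$ and $\langle\cdot,\cdot\rangle$ the Hermitian inner product. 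Writing out the complex Kac--Rice formula and expanding the determinants by Wick yields a \emph{single} real-analytic function $\Phi$, independent of $d$, with $K_{n,d,M}(\bm u,\bm v)=\Phi$ evaluated at the finite list of numbers formed by $P^M_d$ together with its derivatives of order $\leq 1$ in each variable, at the three pairs $(\bm u,\bm u)$, $(\bm u,\bm v)$, $(\bm v,\bm v)$; moreover $\Phi$ is defined and smooth exactly where the joint covariance matrix of $\big(\bm g_d(\bm u),D\bm g_d(\bm u),\bm g_d(\bm v),D\bm g_d(\bm v)\big)$ is nondegenerate.

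Next one computes the scaling limit of $P^M_d$. Since $M$ is real-analytic, in the ambient affine chart centred at $p$ one has $\proj_p(\bm w)=p+\bm w+Q(\bm w)+\cdots$, where the quadratic term $Q$ (valued in the normal space) carries the leading extrinsic geometry of $M\subset\mathbb{CP}^k$. Taylor-expanding $\big\langle\,\widehat{\proj_p(\bm x/\sqrt d)}\,,\,\widehat{\proj_p(\bm y/\sqrt d)}\,\big\rangle$ to order $1/d$ --- the contributions of $Q$ and of all higher-order terms appear only at order $d^{-3/2}$ --- and raising to the $d$-th power gives
\[
P^M_d\!\left(\tfrac{\bm x}{\sqrt d},\tfrac{\bm y}{\sqrt d}\right)=\exp\!\Big(\langle\bm x,\bm y\rangle-\tfrac12\|\bm x\|^{2}-\tfrac12\|\bm y\|^{2}\Big)\Big(1+O\big(d^{-1/2}\big)\Big),
\]
and the same estimate, after the analogous (more tedious) expansions, for each of the finitely many derivatives of $P^M_d$ of order $\leq 1$ in each variable, with constants uniform on compact subsets of $T_p M\times T_p M$. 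The leading factor above is precisely the normalized covariance kernel of the $\isoC$-invariant Gaussian analytic function ensemble on $\mathbb{C}^n$ (the Bargmann--Fock kernel), so the rescaled covariance data of $\bm g_d$ converge, at rate $d^{-1/2}$, to the covariance data of that ensemble.

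It remains to transfer these kernel asymptotics to the correlation function. Fix a compact $\mathcal{C}\subset T_p M\times T_p M\setminus\Diag$. On $\mathcal{C}$ the Bargmann--Fock joint covariance matrix is nondegenerate, so for all large $d$ the rescaled data of $\bm g_d$ stay inside the open domain of smoothness of $\Phi$, uniformly over $\mathcal{C}$; there $\Phi$ is Lipschitz in its arguments. Substituting the Bargmann--Fock data into $\Phi$ reproduces, by the identical Wick computation, the pair correlation $K_n(\bm x,\bm y)$ of the $\isoC$-invariant ensemble. Combining this with the $O(d^{-1/2})$ bounds on the arguments of $\Phi$ yields
\[
K_{n,d,M}\!\left(\proj_p\!\big(\tfrac{\bm x}{\sqrt d}\big),\proj_p\!\big(\tfrac{\bm y}{\sqrt d}\big)\right)=K_n(\bm x,\bm y)+O\big(d^{-1/2}\big)
\]
uniformly on $\mathcal{C}$, which is the assertion.

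The one place demanding a real computation --- and hence the main obstacle --- is the uniform scaling expansion of the pulled-back kernel and of all its order-$\leq 1$ derivatives: one must verify that every contribution of the embedding $M\hookrightarrow\mathbb{CP}^k$ beyond first order is absorbed into the $O(d^{-1/2})$ remainder, and obtain the remainder estimates uniformly over $\mathcal{C}$ and simultaneously for the whole finite family of derivatives that feed $\Phi$. Everything after that is the same soft argument --- smooth, nondegenerate dependence of the Kac--Rice functional $\Phi$ on its covariance data --- used to finish the proof of Theorem~\ref{THM:UNIVERSALITY}.
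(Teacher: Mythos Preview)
Your proposal is correct and follows the same three-stage architecture as the paper's Section~\ref{SEC:COMPLEX_CASE}: pass to local graph coordinates for $M$, show that the covariance data of the restricted, rescaled $SU(k+1)$ ensemble converge to those of the $\isoC$-invariant ensemble, and then transfer this to the correlation function via the Kac--Rice formula.  The one genuine methodological difference is in the last step.  The paper does not invoke Wick; instead it adapts Lemma~\ref{COCV3} to the complex integral~(\ref{K1COMPLEX}) (item~(1) of Section~\ref{SEC:COMPLEX_CASE}), obtaining a $O(\|\bold A-\bold B\|_\infty^{1/2})$ modulus of continuity for the Kac--Rice integral.  You instead exploit the fact---which the paper itself flags in the introduction---that in the complex case Wick's theorem evaluates the integral $\int \det(\bm\xi^*\bm\xi)\det(\bm\eta^*\bm\eta)\,e^{-(\bold\Omega\bm u,\bm u)}\,d\bm u\,d\overline{\bm u}$ in closed form as a polynomial in the entries of $\bold\Omega^{-1}$, so your $\Phi$ is a rational function of the covariance data, automatically smooth (hence Lipschitz) on the nondegeneracy locus.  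This buys you a cleaner transfer step and avoids the tail-splitting argument of Appendix~\ref{SEC:Lemma8}; the paper's route has the virtue of being uniform with the real case.

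One small inaccuracy that only works in your favor: since in the graph parametrization $\proj_p(\bm w)=(\bm w,\bm\psi(\bm w))$ the quadratic term $Q$ lies in the \emph{normal} space, it enters the Hermitian inner product only through $\langle\bm\psi(\bm x/\sqrt d),\bm\psi(\bm y/\sqrt d)\rangle=O(1/d^2)$ and the normalizing factors, not through a cross term with $\bm x/\sqrt d$.  So $Q$ contributes at order $d^{-2}$ (not $d^{-3/2}$) in the base, and the kernel and its first derivatives actually converge at rate $O(1/d)$, matching the paper's Lemma~\ref{LEM_CONV_COV_MATRICES}.  Combined with your Lipschitz $\Phi$ this would in fact yield $O(1/d)$ for the correlation function; the paper gets $O(1/\sqrt d)$ only because Lemma~\ref{COCV3} carries a square root.
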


\noindent
This serves as a weaker version of the results from
\cite{BSZUniversality2,BSZUniversality} in that $M$ is required to be
embedded in projective space (instead of being an arbitrary K\"ahler manifold),
the line bundle is the hyperplane bundle (corresponding to the
\hbox{$SU(n+1)$}-invariant ensemble), and only two-point correlation functions are
considered.  
On the other hand, in the work of \cite{BSZUniversality2,BSZUniversality} the
manifold $M$ is assumed to be compact.  No such assumption is made in Theorems
\ref{THM:UNIVERSALITY} and \ref{THM:UNIVERSALITYC}.  For example, they can be
applied at any smooth point $p$ of a singular projective variety.

\begin{figure}
\includegraphics[scale=.363]{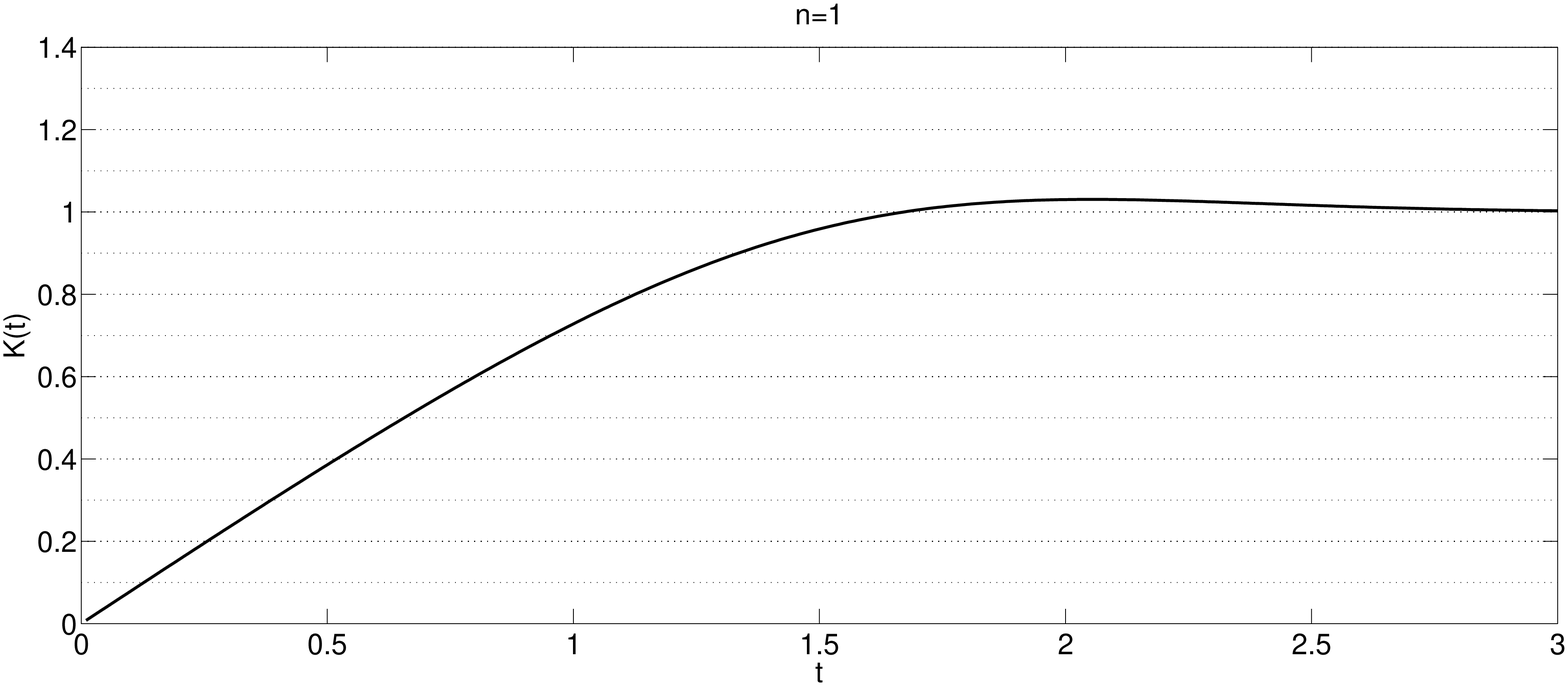}
\includegraphics[scale=.34]{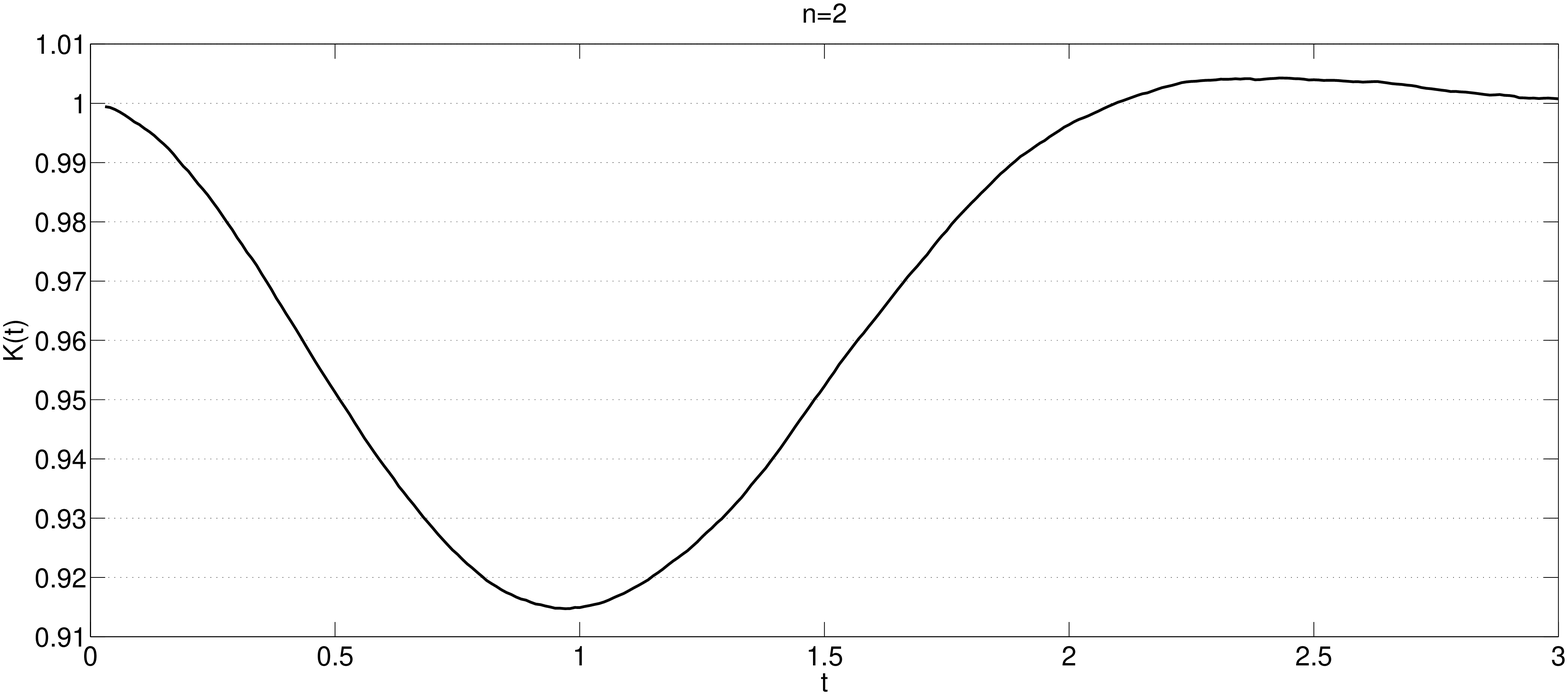}
\includegraphics[scale=.34]{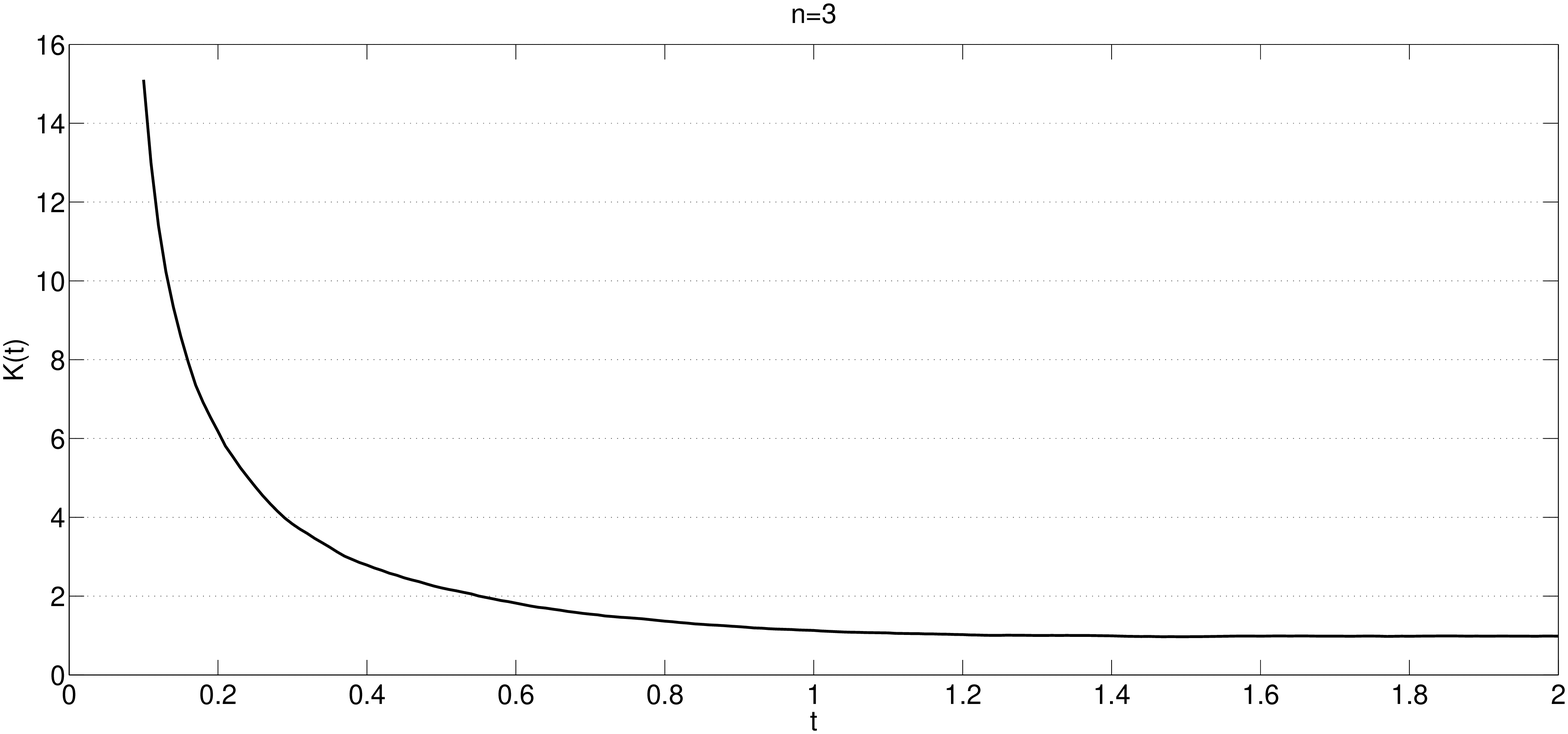}
\caption{Universal two-point limiting correlation functions $K_n\left(t\right)$ for $n = 1, 2, $ and 3, demonstrating the repelling, neutral and attracting behaviors. For $n=1$, the graph is obtained from Formula (5.35) in \cite{BleherDi1}. For $n=2$ and $n=3$, the graphs were computed using Monte Carlo integration applied to formula (\ref{Spherical}) with $10^7$ and $10^6$ points, respectively, for each $t$. The data was smoothed out by replacing each value with the average of it and the 14 nearest neighboring points. \label{Graphs}}
\end{figure}

\vspace{0.1in}

For general background on Gaussian random analytic functions and polynomials, we refer the reader to \cite{AT,AW,Hida,Peres,ShubSmale} and their references therein. Specifically to correlation functions, we refer the reader to the three papers listed above in the previous paragraph, as well as the works of Bogomolny, Bohigas, and Leboeuf \cite{Bogomolny1}, Tao and Vu \cite{TaoVu}, Bleher and Ridzal \cite{BleherRidzal}, and Bleher and Di \cite{BleherDi2}.

Our work fits in within the context of the emerging field ``random real
algebraic geometry.''  For example, Theorem~\ref{THM:UNIVERSALITY} applies to
the restriction of the $SO(k+1)$ ensemble to the smooth locus of a
real-algebraic subset of~$\mathbb{RP}^k$.  We refer the reader to the works of
Kostlan \cite{KOSTLAN1}, Shub-Smale \cite{ShubSmale}, Ibragimov-Podkorytov \cite{IP}, Burgisser
\cite{Burgisser}, Gayet-Welschinger \cite{GW1}, Ibragimov-Zaporozhets \cite{Zaporozhets}, Nastasescu \cite{Nastasescu},
Gayet-Welschinger \cite{GW2,GW3}, Lerario-Lundberg \cite{LerarioLundberg},
Gayet-Welschinger \cite{GW4,GW5}, and Fyodorov-Lerario-Lundberg \cite{FLL}.
\vspace{.1in}


The remainder of the paper will be organized as follows: In the following
Section \ref{Invariance}, we study the invariance properties of the ensembles
from (\ref{System}) and (\ref{GAF}).  We then use the invariance to reduce
Theorems \ref{CORR_FINITE_D} and \ref{CORR_ISOM_INV} to suitable versions in
affine coordinates (Theorem \ref{THM:LOCAL}). In Section \ref{Kac-Rice}, we
recall the Kac-Rice Formulae for the density and for the correlation
functions, the main tools used in our proof.  
In Section \ref{SEC:COVARIANCE} we compute the covariance matrices needed to prove Theorem \ref{THM:LOCAL}, as
well as their determinants, inverses, etc.
Theorem \ref{THM:LOCAL} consists of two statements (short-distance asymptotics and long-distance asymptotics),
which are proved in Sections \ref{SEC:SHORT_RANGE} and \ref{SEC:LONG_RANGE} respectively.
Section \ref{SEC:UNIVERSALITY} is dedicated to proving Theorem \ref{THM:UNIVERSALITY} about universality of the scaling limit.
Section~\ref{SEC:GEOMETRY_FINITE_D} provides an example showing that for finite degree the leading asymptotics depends on the
geometry of the submanifold $M \subset \mathbb{RP}^k$.
In Section~\ref{SEC:COMPLEX_CASE} we explain the changes that need to be made to the proof of Theorem \ref{THM:UNIVERSALITY} in order to prove the complex version, Theorem \ref{THM:UNIVERSALITYC}.

Appendix \ref{SEC:Lemma8} contains the proof of a general estimate which is used in Sections \ref{SEC:LONG_RANGE} and \ref{SEC:UNIVERSALITY}.
In Appendix \ref{kthMomentProof}, we prove a result
regarding the volume of random parallelotopes which is needed in Section \ref{SEC:SHORT_RANGE}.

\vspace{0.1in}
\noindent
{\bf Notation:}
Let $\diag_k\left(\bold{A}\right)$ denote the block-diagonal matrix with $k$ copies of the square matrix $\bold{A}$ along the diagonal.

\section{Invariance Properties and Reduction of Theorems  \ref{CORR_FINITE_D} and \ref{CORR_ISOM_INV} to local coordinates} \label{Invariance}

The $SO(n+1)$-invariant ensemble and the $\isoR$-invariant ensemble are instances of the following definition:

\begin{definition}
A {\em Gaussian analytic function} ${\bm h}: \mathbb{R}^n \rightarrow \mathbb{R}^m$ is an $m$-tuple 
$\left(h_1({\bm x}), h_2(\bm{x}), \dots,  h_m(\bm{x})\right)$ of functions $h_i: \mathbb{R}^n \rightarrow 
\mathbb{R}$ chosen iid of the form
\begin{equation}\label{General_GAF}
h_i(\bm{x}):=\sum\limits_{\bm{\alpha}} c_{\bm{\alpha}} a_{\bm{\alpha}}
\bm{x}^{\bm{\alpha}},
\end{equation}
where the $a_{\bm{\alpha}}$ are chosen iid on the standard normal
distribution $\mathcal{N}\left(0, 1\right)$ and the coefficients $c_{\bm{\alpha}}$  are chosen so that
$\sum\limits_{\bm{\alpha}} c_{\bm{\alpha}} \bm{x}^{\bm{\alpha}}$
converges for all $ x \in \mathbb{R}^n$.
\end{definition}

\begin{lemma}\label{LEM_AS_CONV}
A Gaussian analytic function $\bm{h}: \mathbb{R}^n \rightarrow \mathbb{R}^m$ almost
surely converges uniformly on compact subsets of 
$\mathbb{R}^n$ and moreover is real analytic on $\mathbb{R}^n$.
\end{lemma}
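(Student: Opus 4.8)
\emph{Proof sketch.} The plan is to show that, on a single event of probability $1$, every component series $h_i(\bm x)=\sum_{\bm\alpha}c_{\bm\alpha}a^{(i)}_{\bm\alpha}\bm x^{\bm\alpha}$ (here $\{a^{(i)}_{\bm\alpha}\}$ is the family of Gaussians attached to $h_i$) is \emph{absolutely} convergent on every ball of $\mathbb{R}^n$, with a majorant that, on each compact set, is a convergent series of constants independent of $\bm x$. This simultaneously gives uniform convergence on compacta via the Weierstrass $M$-test and real analyticity, since a power series converging absolutely at every point of $\mathbb{R}^n$ is the restriction of an entire function of the complexified variables. The only randomness to be controlled is the growth of the Gaussian coefficients, and the sole combinatorial input is that the number of multi-indices $\bm\alpha\in(\mathbb{Z}_{\ge0})^n$ with $|\bm\alpha|=d$ is at most $(d+1)^{n-1}$ (forgetting $\alpha_n$ injects this set into $\{0,\dots,d\}^{n-1}$), i.e.\ polynomial in $d$.

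First I would extract the deterministic ingredient: the hypothesis that $\sum_{\bm\alpha}c_{\bm\alpha}\bm x^{\bm\alpha}$ converges for every $\bm x\in\mathbb{R}^n$ --- which, for both ensembles in this paper, holds with absolute convergence (the $SO(n+1)$ ensemble is a finite sum; for the $\Isom(\mathbb{R}^n)$ ensemble $\sum_{\bm\alpha}R^{|\bm\alpha|}/\sqrt{\bm\alpha!}=\prod_i\sum_k R^k/\sqrt{k!}<\infty$) --- gives, on evaluating at $\bm x=(R,\dots,R)$, that $\sum_{\bm\alpha}|c_{\bm\alpha}|R^{|\bm\alpha|}<\infty$ for every $R>0$. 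Then for fixed $R\in\mathbb{N}$ and fixed $i\in\{1,\dots,m\}$, Tonelli's theorem yields
\[
\mathbb{E}\Big[\sum_{\bm\alpha}|c_{\bm\alpha}|\,|a^{(i)}_{\bm\alpha}|\,R^{|\bm\alpha|}\Big]=\sqrt{\tfrac{2}{\pi}}\sum_{\bm\alpha}|c_{\bm\alpha}|R^{|\bm\alpha|}<\infty,
\]
so this sum is a.s.\ finite; intersecting over the countably many pairs $(R,i)\in\mathbb{N}\times\{1,\dots,m\}$ produces a single event $\Omega_0$ of probability $1$ on which $\sum_{\bm\alpha}|c_{\bm\alpha}|\,|a^{(i)}_{\bm\alpha}|\,R^{|\bm\alpha|}<\infty$ for all $R\in\mathbb{N}$ and all $i$. (Alternatively, one can bound the $a^{(i)}_{\bm\alpha}$ directly: from $\Pr(|a^{(i)}_{\bm\alpha}|>t)\le \mathrm{e}^{-t^2/2}$ and the polynomial count of multi-indices, $\sum_{\bm\alpha}\Pr\big(|a^{(i)}_{\bm\alpha}|>\sqrt{2(n+1)\log(|\bm\alpha|+2)}\big)<\infty$, so Borel--Cantelli makes $|a^{(i)}_{\bm\alpha}|$ eventually $\le\sqrt{2(n+1)\log(|\bm\alpha|+2)}$, a subpolynomial factor that does not change the radius of convergence.)

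Finally I would assemble the conclusion on $\Omega_0$. Given a compact $K\subset\mathbb{R}^n$, pick $R\in\mathbb{N}$ with $K\subseteq[-R,R]^n$; then $|c_{\bm\alpha}a^{(i)}_{\bm\alpha}\bm x^{\bm\alpha}|\le|c_{\bm\alpha}|\,|a^{(i)}_{\bm\alpha}|\,R^{|\bm\alpha|}$ for every $\bm x\in K$, a summable bound independent of $\bm x$, so the $M$-test shows each $h_i$ converges absolutely and uniformly on $K$ (in particular the sum is independent of the ordering of the $\bm\alpha$ and the limit is continuous). Taking $K$ to be an arbitrary closed ball shows each $h_i$ is represented by an absolutely convergent power series on all of $\mathbb{R}^n$; such a series converges absolutely and uniformly on every polydisc in $\mathbb{C}^n$, hence defines an entire function there, so $h_i$, and therefore $\bm h$, is real analytic on $\mathbb{R}^n$. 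I do not anticipate any serious obstacle: the statement is the standard fact that a Gaussian power series whose deterministic coefficient moduli give an entire function is itself a.s.\ entire. The only points requiring care are recognizing that absolute convergence is the form of the hypothesis actually needed (and available here), and the countable-intersection bookkeeping that upgrades ``for each fixed $K$'' to ``on one full-measure event, for all $K$ simultaneously.''
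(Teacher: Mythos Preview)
Your argument is correct. The paper's own proof is a one-line citation: it invokes Lemma~2.2.3 of \cite{Peres} to get almost-sure uniform convergence on compacta in $\mathbb{C}^n$, and then restricts to $\mathbb{R}^n$. Your proof follows the same underlying route---pass to the complexified variables and show the series is almost surely entire---but you supply the details directly rather than citing them. The Tonelli step (or the Borel--Cantelli alternative) is exactly the mechanism behind the cited lemma, so the two proofs are essentially the same; yours is simply self-contained.

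One small point of exposition: the passage from ``$\sum_{\bm\alpha}c_{\bm\alpha}\bm x^{\bm\alpha}$ converges for all $\bm x$'' to ``$\sum_{\bm\alpha}|c_{\bm\alpha}|R^{|\bm\alpha|}<\infty$ for all $R$'' deserves one line of justification in the general case (boundedness of terms at $(2R,\dots,2R)$ plus the polynomial multi-index count gives absolute convergence at radius $R$). You have hedged by verifying absolute convergence directly for the two ensembles actually used, which is enough for the paper's purposes, but if you want the lemma in the generality stated you should make that Abel-type step explicit.
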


\begin{proof}
The proof of Lemma 2.2.3 from \cite{Peres} applies to show (\ref{General_GAF}) almost surely converges uniformly on compact subsets of $\mathbb{C}^n$ and hence defines a random complex analytic function on $\mathbb{C}^n$. By restricting the resulting functions to $\mathbb{R}^n$, we obtain the desired result.
\end{proof}

In particular, Lemma \ref{LEM_AS_CONV} justifies our consideration the $\isoR$-invariant ensemble  (\ref{GAF}) as actually defining a random function.  The following two lemmas justify our terminology ``$SO(n+1)$ invariant ensemble'' and ``$\isoR$-invariant ensemble'':

\begin{lemma} \label{InvarianceSOn}
The zeroes of the system $\bm{F}$ given in (\ref{System}) are invariant under
the action of $SO(n+1)$.  That is, for any open set $U\subset \mathbb{RP}^n$
and any  $A \in SO(n+1)$, we have $\mbox{Pr}\left(\bm{F} \mbox{ has a zero in }
U \right) = \mbox{Pr}\left(\bm{F} \mbox{ has a zero in } A\left(U\right)
\right).$
\end{lemma}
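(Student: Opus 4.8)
The plan is to reduce the statement to the rotation-invariance of the standard Gaussian measure on a finite-dimensional Euclidean space. Let $\mathcal{P}_d$ denote the space of real homogeneous polynomials of degree $d$ in $n+1$ variables, equipped with the \emph{Bombieri inner product} determined by $\langle \bm{X}^{\bm\alpha}, \bm{X}^{\bm\beta}\rangle = \delta_{\bm\alpha\bm\beta}\,\bm\alpha!/d!$ for $|\bm\alpha| = |\bm\beta| = d$; equivalently $\langle p, q\rangle = \tfrac{1}{d!}\,p(\partial)\,q$, where $p(\partial)$ is the constant-coefficient differential operator obtained by substituting $\partial/\partial X_i$ for $X_i$ in $p$. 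A one-line computation shows that the rescaled monomials $e_{\bm\alpha}(\bm X):=\sqrt{\binom{d}{\bm\alpha}}\,\bm X^{\bm\alpha}$ appearing in \eqref{GRP} form an \emph{orthonormal} basis of $\mathcal{P}_d$ for this inner product, since $\langle e_{\bm\alpha}, e_{\bm\beta}\rangle = \binom{d}{\bm\alpha}\,\bm\alpha!/d!\,\delta_{\bm\alpha\bm\beta} = \delta_{\bm\alpha\bm\beta}$. Consequently, a polynomial $F$ drawn from the ensemble \eqref{GRP} is precisely the standard Gaussian vector on the Euclidean space $(\mathcal{P}_d, \langle\cdot,\cdot\rangle)$.

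The structural fact I would use next is that the Bombieri inner product is invariant under the action of $O(n+1)$ on $\mathcal{P}_d$ given by $(A\cdot p)(\bm X) = p(A^{-1}\bm X)$; this is classical (see, e.g., \cite{Kostlan,ShubSmale}) and also follows directly from $\langle p,q\rangle = \tfrac{1}{d!}p(\partial)q$ together with the chain rule, since for $A \in O(n+1)$ the gradient transforms by $A^{T} = A^{-1}$ and the resulting contraction is unchanged. Granting this, the $SO(n+1)$-action on $\mathcal{P}_d$ is by orthogonal transformations, and since the standard Gaussian on a Euclidean space is invariant under orthogonal maps, the law of $F$ is invariant under $F \mapsto A\cdot F$ for every $A \in SO(n+1)$. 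As the components $F_1,\dots,F_n$ of the system \eqref{System} are chosen iid, the law of the whole system $\bm F$ is likewise invariant under $\bm F \mapsto \bm F(A^{-1}\,\cdot)$.

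Finally I would pass from distributional invariance of $\bm F$ to invariance of its zero set. Since each $F_i$ is homogeneous, the common zero locus $Z(\bm F)$ is a well-defined subset of $\mathbb{RP}^n$, and $Z\bigl(\bm F(A^{-1}\,\cdot)\bigr) = A\bigl(Z(\bm F)\bigr)$, where $A$ acts on $\mathbb{RP}^n$ through its (isometric) action on $\mathbb{S}^n$. Hence, for any open $U \subset \mathbb{RP}^n$,
\[
\mathrm{Pr}\bigl(Z(\bm F)\cap U \neq \emptyset\bigr) = \mathrm{Pr}\bigl(Z(\bm F(A^{-1}\cdot))\cap U \neq \emptyset\bigr) = \mathrm{Pr}\bigl(A(Z(\bm F))\cap U \neq \emptyset\bigr) = \mathrm{Pr}\bigl(Z(\bm F)\cap A^{-1}U \neq \emptyset\bigr),
\]
and replacing $A$ by $A^{-1} \in SO(n+1)$ yields exactly $\mathrm{Pr}(\bm F \text{ has a zero in } U) = \mathrm{Pr}(\bm F \text{ has a zero in } A(U))$. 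I expect the only genuine point, as opposed to bookkeeping, to be the $O(n+1)$-invariance of the Bombieri inner product — that is, checking that the weights $\sqrt{\binom{d}{\bm\alpha}}$ are precisely the ones making the monomial basis rotation-adapted — which is why I would either cite it or include the short computation indicated above.
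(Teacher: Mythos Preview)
Your argument is correct. Both your proof and the paper's establish that the law of each $F_i$ is $SO(n+1)$-invariant and then deduce invariance of the zero set, but the executions differ. The paper takes the shortest route: it simply computes the covariance kernel
\[
E\bigl(F_i(\bm X)F_i(\bm Y)\bigr)=\sum_{|\bm\alpha|=d}\binom{d}{\bm\alpha}\bm X^{\bm\alpha}\bm Y^{\bm\alpha}=(\bm X\cdot\bm Y)^d,
\]
observes that this (together with mean zero) determines the Gaussian process, and notes that $(\bm X\cdot\bm Y)^d$ is manifestly $SO(n+1)$-invariant. You instead identify the ensemble as the standard Gaussian on $(\mathcal P_d,\langle\cdot,\cdot\rangle_{\text{Bombieri}})$ and invoke the $O(n+1)$-invariance of that inner product. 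These are two sides of the same coin---the covariance kernel $(\bm X\cdot\bm Y)^d$ is precisely the reproducing kernel for the Bombieri inner product---so neither argument is deeper than the other. The paper's version is more economical; yours is more structural and makes explicit \emph{why} the weights $\sqrt{\binom{d}{\bm\alpha}}$ are the right ones (they orthonormalize the monomials for the unique $O(n+1)$-invariant inner product on $\mathcal P_d$), at the cost of citing or redoing the invariance of $\langle\cdot,\cdot\rangle_{\text{Bombieri}}$.
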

\begin{proof}
Each $F_i(\bm{X})$ defines a Gaussian process on $\mathbb{R}^{n+1}$, with mean 0 and covariance function
\begin{equation}
E(F_i(\bm{X})F_i(\bm{Y})) = \sum\limits_{\bm{\alpha} = d} \binom{d}{\bm{\alpha}} \bm{X}^{\bm{\alpha}}\bm{Y}^{\bm{\alpha}} = (\bm{X} \cdot \bm{Y})^d.
\end{equation}
Since any Gaussian process is uniquely determined by its first and second moments \cite[Theorem~2.1]{Hida}, this process is invariant under $SO(n+1)$. Therefore, the zeros within $\mathbb{RP}^n$ are also invariant under the action of $SO(n+1)$.
\end{proof}

\begin{proposition} \label{InvarianceGAF}
The zeroes of the system $\bm{f}=\left(f_1, \dots, f_n\right)$ from (\ref{GAF}) are invariant under any isometry of $\mathbb{R}^n$.
That is, for any open set $U\subset \mathbb{R}^n$ and any isometry $I:
\mathbb{R}^n\rightarrow\mathbb{R}^n$, we have 
$\mbox{Pr}\left(\bm{f} \mbox{ has a zero in } U \right) = \mbox{Pr}\left(\bm{f}
\mbox{ has a zero in } I\left(U\right) \right)$.
\end{proposition}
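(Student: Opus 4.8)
The plan is to reduce to the two kinds of isometry that generate $\isoR$, namely the linear orthogonal maps and the translations, and to handle each by identifying the covariance structure of the transformed ensemble. First I would record the covariance function of a single component: summing the multi-index series gives
\[
E\big(f_i(\bm{x})\,f_i(\bm{y})\big) = \sum_{\bm{\alpha}} \frac{\bm{x}^{\bm{\alpha}}\bm{y}^{\bm{\alpha}}}{\bm{\alpha}!} = \prod_{j=1}^{n} e^{x_j y_j} = e^{\bm{x}\cdot\bm{y}},
\]
so each $f_i$ is the mean-zero Gaussian process with kernel $e^{\bm{x}\cdot\bm{y}}$. Exactly as in the proof of Lemma~\ref{InvarianceSOn}, its law is determined by this kernel, and combined with the a.s.\ continuity from Lemma~\ref{LEM_AS_CONV} this pins down the law of $\bm{f}$ on the space of continuous maps $\mathbb{R}^n \to \mathbb{R}^n$, hence the distribution of its zero set.

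For $A \in O(n)$ the kernel is unchanged, $e^{A\bm{x}\cdot A\bm{y}} = e^{\bm{x}\cdot\bm{y}}$, so $\bm{f}\circ A$ and $\bm{f}$ have the same law and the zero-set statement is immediate. The case I expect to require real care is the translation $\bm{x}\mapsto\bm{x}+\bm{v}$, because the kernel is \emph{not} preserved:
\[
E\big(f_i(\bm{x}+\bm{v})\,f_i(\bm{y}+\bm{v})\big) = e^{(\bm{x}+\bm{v})\cdot(\bm{y}+\bm{v})} = g(\bm{x})\,g(\bm{y})\,e^{\bm{x}\cdot\bm{y}}, \qquad g(\bm{x}) := e^{\bm{v}\cdot\bm{x}+|\bm{v}|^2/2}.
\]
The key observation is that this is exactly the covariance of the process $\bm{x}\mapsto g(\bm{x})\,\bm{f}(\bm{x})$, so $\bm{f}(\cdot+\bm{v})$ and $g\cdot\bm{f}$ have the same law; since $g$ is everywhere positive, $g\cdot\bm{f}$ has the same zero set as $\bm{f}$. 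Therefore, for any open $U$, $\text{Pr}\big(\bm{f}(\cdot+\bm{v})\text{ has a zero in }U\big) = \text{Pr}\big(\bm{f}\text{ has a zero in }U\big)$, while the left-hand side equals $\text{Pr}\big(\bm{f}\text{ has a zero in }U+\bm{v}\big)$ since $\bm{f}(\cdot+\bm{v})$ has a zero in $U$ precisely when $\bm{f}$ has a zero in $U+\bm{v}$.

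Finally I would assemble the two cases: writing a general isometry as $I(\bm{x}) = A\bm{x}+\bm{v}$ with $A\in O(n)$ and $\bm{v}\in\mathbb{R}^n$, so that $I(U) = A(U)+\bm{v}$, the translation case applied to the open set $A(U)$ followed by the orthogonal case gives
\[
\text{Pr}\big(\bm{f}\text{ has a zero in }I(U)\big) = \text{Pr}\big(\bm{f}\text{ has a zero in }A(U)\big) = \text{Pr}\big(\bm{f}\text{ has a zero in }U\big),
\]
which is the claim. Besides the multiplicative-factor trick for translations, the only point that deserves a word is the passage from equality of finite-dimensional Gaussian distributions to equality of the law on $C(\mathbb{R}^n,\mathbb{R}^n)$ --- and thus of the distribution of zero sets --- which is precisely what the a.s.\ continuity (indeed real analyticity) from Lemma~\ref{LEM_AS_CONV} furnishes.
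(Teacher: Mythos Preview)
Your proof is correct but takes a somewhat different route from the paper's. The paper normalizes once and for all by the fixed nowhere-vanishing factor $e^{-\frac{1}{2}\|\bm{x}\|^2}$, setting $g_i(\bm{x}) = e^{-\frac{1}{2}\|\bm{x}\|^2} f_i(\bm{x})$; a one-line computation then shows
\[
E\big(g_i(\bm{x})\,g_i(\bm{y})\big) = e^{-\frac{1}{2}\|\bm{x}-\bm{y}\|^2},
\]
which is manifestly invariant under \emph{all} isometries simultaneously, so no case split is needed. Your argument instead works directly with the non-stationary kernel $e^{\bm{x}\cdot\bm{y}}$, treats the orthogonal and translation generators separately, and for each translation $\bm{v}$ introduces a $\bm{v}$-dependent multiplicative factor $g(\bm{x}) = e^{\bm{v}\cdot\bm{x}+|\bm{v}|^2/2}$ to match covariances. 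Both arguments rest on the same idea---multiplying by a positive deterministic function does not change the zero set---but the paper's choice of a single universal normalizer yields a stationary Gaussian field and handles every isometry in one stroke, whereas your decomposition trades that uniformity for a slightly more hands-on verification. Either way the appeal to Lemma~\ref{LEM_AS_CONV} to pass from equality of finite-dimensional distributions to equality in law on path space is the same.
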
 

\begin{proof}
The zeroes of $\bm{f}$ are the same as those of 
\begin{equation} \label{G} \bm{g}:=\left(g_1,g_2,\dots,g_n\right) \quad \mbox{where} \quad g_i(\bm{x}):=\mathrm{e}^{-\frac{1}{2} ||\bm{x}||^2} f_i(\bm{x}).\end{equation} Each $g_i(\bm{x})$ defines a Gaussian process on $\mathbb{R}^n$, with mean 0 and covariance function 
\begin{equation} \label{Covariance} E\left(g_i(\bm{x})g_i(\bm{y})\right)={\rm e}^{-\frac{1}{2} \left( ||\bm{x}||^2 + ||\bm{y}||^2\right)} \sum\limits_{\bm{\alpha}} \frac{\bm{x}^{\bm{\alpha}} \bm{y}^{\bm{\alpha}}}{\bm{\alpha}!} 
={\rm e}^{-\frac{1}{2} \left( ||\bm{x}||^2 + ||\bm{y}||^2\right)}\prod\limits_{i=1}^n \left(\sum\limits_{\alpha_i = 0}^{\infty} \frac{\left(x_i y_i\right)^{\alpha_i}}{\alpha_i!} \right)
={\rm e}^{-\frac{1}{2}  ||\bm{x}-\bm{y}||^2}. \end{equation}
The result follows because (\ref{Covariance}) is clearly invariant under isometries of $\mathbb{R}^n$.
\end{proof}

We will now use these invariance  properties to reduce the proofs of Theorems \ref{CORR_FINITE_D} and \ref{CORR_ISOM_INV} to a particularly simple pairs of points 
and to local coordinates.
The two points 
\begin{eqnarray}\label{DEF_pq}
\bm{x} = \left[1:0:\cdots:0:-\frac{t}{2}\right] \quad \mbox{and} \quad \bm{y} = \left[1:0:\cdots:0:\frac{t}{2}\right]
\end{eqnarray}
(given here in homogeneous coordinates)
have distance 
\begin{eqnarray*}
\dist_{\mathbb{RP}^n}(\bm{x},\bm{y}) = 2 \, {\rm arctan}\left(\frac{t}{2}\right) = t + O(t^3) \quad \mbox{as $t \rightarrow 0$.}
\end{eqnarray*}
Thus, in order to prove Theorem 1, it suffices to verify (\ref{SHORT_RANGE_D}) for this pair of points.

Note that $(x_1,\ldots,x_n) \mapsto \left[1:x_1:\ldots:x_n\right]$ provides a
system of local coordinates in a neighborhood of $\bm{x}$ and~$\bm{y}$.  In
these coordinates, the $SO(n+1)$-invariant ensemble becomes
$$\bm{f}_d=\left(f_{d,1}(\bm{x}), f_{d,2}(\bm{x}), \dots,
f_{d,n}(\bm{x})\right),$$ where each $f_{d,i}$ is chosen independently of the
form
\begin{equation} \label{AffineEnsemble}
f_d(\bm{x}) = \sum\limits_{|\bm{\alpha}| \leq d} \sqrt{\binom{d}{\bm{\alpha}}} a_{\bm{\alpha}} \bm{x}^{\bm{\alpha}} \quad \mbox{where} \quad \binom{d}{\bm{\alpha}} = \frac{d!}{(d-|\bm{\alpha}|)!\prod\limits_{i=1}^{n} \alpha_{i}!}.
\end{equation}
and the $a_{\bm{\alpha}}$ are iid on the standard normal distribution $\mathcal{N}\left(0, 1\right)$. 

In summary:
Let $\K_{n,d}(\bm{x},\bm{y})$ and $\K_n(\bm{x},\bm{y})$ denote the correlation functions between zeros
of the $SO(n+1)$-invariant ensemble, expressed in affine coordinates (\ref{AffineEnsemble}), and between zeros of the $\isoR$-invariant ensemble (\ref{GAF}), respectively,
and let
\begin{align}
\K_{n,d}(t) := \K_{n,d}((0,\ldots,0,-t/2),(0,\ldots,0,t/2))\quad  \mbox{and} \quad \K_{n}(t) := \K_{n}((0,\ldots,0,-t/2),(0,\ldots,0,t/2))
\end{align}
 In order to prove
Theorems \ref{CORR_FINITE_D} and \ref{CORR_ISOM_INV}, it suffices to prove:
\begin{thm}\label{THM:LOCAL} 
{\color{white} jj}
\begin{enumerate}
\item 
As $t \rightarrow 0$ we have
\begin{eqnarray*}
\K_{n,d}(t) = A_{n,d} \ t^{2-n} + O(t^{3-n}), \quad \mbox{and} \quad \K_{n}(t) = A_{n} \ t^{2-n} + O(t^{3-n}), 
\end{eqnarray*}
where $A_{n,d}$ and $A_n$ are given in (\ref{SHORT_RANGE_D}) and (\ref{SHORT_RANGE}), respectively.
\item 
As $t \rightarrow \infty$ we have
\begin{eqnarray*}
\K_{n}(t) &=& 1 + O\left(t \mathrm{e}^{-\frac{t^2}{2}}\right).
\end{eqnarray*}
\end{enumerate}
\end{thm}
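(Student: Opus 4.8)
The plan is to derive everything from the Kac--Rice formula for the two-point correlation function applied to the systems in affine coordinates. The main input will be the covariance structure: for the $\isoR$-invariant ensemble, by Proposition~\ref{InvarianceGAF} the Gaussian field $\bm g$ has covariance $E(g_i(\bm x)g_i(\bm y)) = \mathrm{e}^{-\frac12\|\bm x - \bm y\|^2}$, and for the $SO(n+1)$-invariant ensemble in coordinates $(\ref{AffineEnsemble})$ one has $E(f_{d,i}(\bm x)f_{d,i}(\bm y)) = (1 + \bm x \cdot \bm y)^d$. Fix the two points $\bm x = (0,\ldots,0,-t/2)$, $\bm y = (0,\ldots,0,t/2)$. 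The Kac--Rice formula expresses $\K(t)$ (up to the normalization by $\rho(\bm x)\rho(\bm y)$, which is the constant $(\ref{Density})^2$ for the GAF case and $(\ref{Density_FINITE_D})^2$ for the polynomial case) as an integral over $(\bm u, \bm v) \in \mathbb{R}^n\times\mathbb{R}^n$ of the conditional expectation $E\big(|\det D\bm f(\bm x)|\,|\det D\bm f(\bm y)| \,\big|\, \bm f(\bm x) = \bm f(\bm y) = 0\big)$ times the joint Gaussian density of $(\bm f(\bm x),\bm f(\bm y))$ evaluated at $(\bm 0,\bm 0)$. Since the components are iid, the whole thing factors into an $n$-th power of the one-component analysis times a more delicate mixed term coming from the determinants. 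All of this is carried out in Sections \ref{SEC:COVARIANCE}--\ref{SEC:LONG_RANGE} per the outline; my job here is to organize the asymptotics.

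For the short-distance asymptotics (part 1), the key point is that as $t \to 0$ the values $\bm f(\bm x)$ and $\bm f(\bm y)$ become highly correlated, so the conditioning becomes nearly degenerate and the Gaussian prefactor blows up; this is exactly what produces the $t^{2-n}$ singularity. First I would compute the $2\times 2$ covariance matrix of $(f_i(\bm x), f_i(\bm y))$ for a single component and find that its determinant vanishes like a constant times $t^2$ as $t\to 0$ (for the GAF case $\det = 1 - \mathrm{e}^{-t^2} \sim t^2$; for the degree-$d$ case one gets a factor $(d-1)$ reflecting the derivative of $(1+s)^d$, which is the source of the $\tfrac{d-1}{d^{n/2}}$ in $A_{n,d}$). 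Raising to the $n$-th power gives the $t^{-n}$ from the joint density at the origin; the remaining $t^2$ comes from the expectation of the product of Jacobian determinants under the (nearly singular) conditioning — one expands the conditional covariance of $(D\bm f(\bm x), D\bm f(\bm y))$ to leading order in $t$, observes that conditioning on $\bm f(\bm x)=\bm f(\bm y)=0$ forces a directional derivative to be small along the $\bm x$--$\bm y$ direction, and uses a formula for the expected volume of a random parallelotope (Appendix \ref{kthMomentProof}). Matching powers and collecting constants yields $A_{n,d}$ and $A_n$; the Gamma-function constant comes from the moments of the relevant chi-type random variables. The universality statement $A_{n,d} \to A_n$ as $d\to\infty$ (up to the $d^{-n/2}$ scaling) is then immediate.

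For the long-distance asymptotics (part 2), as $t \to \infty$ the covariance $\mathrm{e}^{-\frac12 t^2}$ between $\bm g(\bm x)$ and $\bm g(\bm y)$ (and likewise the covariances involving derivatives) decays superexponentially, so the two points become asymptotically independent and $\K_n(t) \to 1$. Concretely I would write the joint covariance matrix of $(\bm g(\bm x), D\bm g(\bm x), \bm g(\bm y), D\bm g(\bm y))$ as a block matrix whose diagonal blocks are the (constant, by invariance) single-point covariances and whose off-diagonal blocks have entries bounded by $C t\,\mathrm{e}^{-t^2/2}$; then expand the Kac--Rice integrand to first order in this small off-diagonal perturbation, using the general estimate proved in Appendix \ref{SEC:Lemma8} to control the conditional expectation of the product of determinants and its derivative in the perturbation parameter. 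The zeroth-order term is exactly $\rho(\bm x)\rho(\bm y)$, which cancels the normalization to give $1$, and the first-order correction is $O(t\,\mathrm{e}^{-t^2/2})$.

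I expect the main obstacle to be the short-distance analysis of the conditional expectation of $|\det D\bm f(\bm x)|\,|\det D\bm f(\bm y)|$. Unlike the complex case, one cannot invoke Wick's formula to evaluate the moment of a product of absolute values of determinants, so one must instead identify the leading-order behavior of the conditional law of the pair of Jacobians as $t\to 0$ — in particular pinning down how the near-collinearity of $\bm f(\bm x)$ and $\bm f(\bm y)$ propagates through the conditioning to make one combination of the Jacobians small of order $t$ — and then compute the resulting expected product of parallelotope volumes explicitly. This is precisely the step that requires the auxiliary volume-of-random-parallelotope result of Appendix \ref{kthMomentProof} and produces the exponent $2-n$ together with the exact constant.
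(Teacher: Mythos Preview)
Your proposal is correct and follows essentially the same approach as the paper: Kac--Rice, the $t^{-n}$ from the value-density prefactor $(\alpha^2-\mu^2)^{-n/2}$, the $t^2$ from the Jacobian analysis via the parallelotope moment of Appendix~\ref{kthMomentProof}, and Lemma~\ref{COCV3} applied to the perturbation $\bm\Omega_{n,\bm g}\approx\bold I$ for the long-range estimate. The paper implements your conditional-covariance step concretely by explicitly diagonalizing $\bm\Omega$ and passing to spherical coordinates, which is what isolates the single column of each Jacobian that vanishes to order $t$; your conditional-expectation framing is equivalent via the Schur complement.
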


\section{Kac-Rice Formula} \label{Kac-Rice}

The main technique used in this paper is a variant of the classical
Kac-Rice Formula \cite{Kac,Rice1,Rice2} that  was developed for correlations
between zeros of multivariable Gaussian analytic functions by Bleher, Shiffman, and Zelditch in 
\cite[Section 2]{BSZUniversality}.    

We will begin this section with a statement and proof of the Kac-Rice formula
for the $m$ point correlation function with $m$ arbitrary.  At the end of the
section we will rephrase the results as needed in this paper.
Let $\bm{h}: \mathbb{R}^n \rightarrow \mathbb{R}^n$ be a Gaussian analytic
function and let $\bm{x}^1,\ldots,\bm{x}^m$ be $m$ distinct points in
$\mathbb{R}^n$.  
The {\em $m$-point correlation function} for the zeros of $\bm{h}$ is
\begin{eqnarray}\label{DEF_MULTI_CORR_FUNCTION}
K_n(\bm{x}^1,\ldots,\bm{x}^m) := \lim\limits_{\delta \rightarrow 0}  \frac{
{\rm Pr}\left(\exists\text{ a zero of } \bm{h} \text{ in } N_\delta({\bm{x}^i}) \mbox{ for each $i=1,\ldots,m$}\right)}
{\prod_{i=1}^m {\rm Pr}\left(\exists\text{ a zero of } \bm{h} \text{ in } N_\delta({\bm{x}^i})\right)},
\end{eqnarray}
where $N_\delta(\bm{x}^i)$ is the ball of radius $\delta > 0$ centered at
$\bm{x}^i \in \mathbb{R}^n$.

The  {\em $m$-point density} for the zeros of $\bm{h}$ is
\begin{eqnarray}\label{DEF_MULTI_DENSITY}
\KK_n(\bm{x}^1,\ldots,\bm{x}^m) := \lim\limits_{\delta \rightarrow 0} \frac{1}{(V_\delta)^m} 
{\rm Pr}\left(\exists\text{ a zero of } \bm{h} \text{ in } N_\delta({\bm{x}^i}) \mbox{ for each $i=1,\ldots,m$}\right),
\end{eqnarray}
where $V_\delta = \frac{\pi^{n/2}
\delta^n}{\Gamma\left(\frac{n}{2}+1\right)}$ is the volume of each ball.    When $m = 1$, 
$\KK_n(\bm{x}) = \rho_n(\bm{x})$, the probability density (\ref{DEF_DENSITY}).
For $m > 1$ we have
\begin{eqnarray}\label{EQN_MPOINT_DENSITY_VS_CORR}
K_n(\bm{x}^1,\ldots,\bm{x}^m) = \frac{1}{\prod_{i=1}^m \rho_n(\bm{x}^i)} \KK_n(\bm{x}^1,\ldots,\bm{x}^m).
\end{eqnarray}

Consider the Gaussian random $(m n^2 + mn)$-dimensional column vector
\begin{align}  \label{Vector_General}
\bm{v}:=  \begin {array}{ccccccccccc} \big[\nabla h_{1}(\bm{x}^1) & \nabla h_{2}(\bm{x}^1) & \ldots & \nabla h_{n}(\bm{x}^1) & \hdots & \hdots & \nabla h_{1}(\bm{x}^m) & \nabla h_{2}(\bm{x}^m) & \ldots & \nabla h_{n}(\bm{x}^m) \\
 & & & & & &  {\bm h}(\bm{x}^1) & {\bm h}(\bm{x}^2) & \cdots & \qquad \,\, {\bm h}(\bm{x}^m) \big]^\intercal,
\end{array}
\end{align}
where each gradient vector $\nabla h_{i}(\bm{x}^j)$ and each vector ${\bm h}(\bm{x}^k)$ is concatenated
into $\bm{v}$ in the indicated location.

Let $\bm \xi^1, \ldots, \bm \xi^m$ be $n\times n$ matrices whose rows are $\bm{\xi}^i_{1}, \ldots \bm{\xi}^i_{n}$
for each $i=1,\ldots,m$.  Let
\begin{align}\label{EQN_DEF_U}
\bm{u} =\left[ \begin {array}{cccccccccc} \bm \xi^1_{1} &  \ldots & \bm \xi^m_1 & \bm \xi^1_2 & \ldots & \bm \xi^m_2 & \ldots  & \ldots & \bm \xi^1_n & \ldots \bm \xi^m_n \end {array} \right]^{\intercal}
\end{align}
be the $m  n^2$ dimensional column vector formed by concatenating the first rows of
each of the matrices $\bm \xi^1, \ldots, \bm \xi^m$ followed by their second
rows, etc.

\begin{proposition}[\bf General Kac-Rice Formula]\label{KAC_RICE_MULTI_DENSITY}
Suppose the covariance matrix $\bold{C}=(Ev_i v_j)_{i,j=1}^{mn^2+mn}$ of the vector~(\ref{Vector_General}) is positive definite.
Then, the $m$-point density $\KK_n(\bm{x}^1,\ldots,\bm{x}^m)$ for the zeroes of the system $\bm{h}$ is given by
\begin{equation} \label{K1_general} 
\KK(\bm{x}^1,\ldots,\bm{x}^m)=\frac{1}{\left(2\pi\right)^{m n\left(n+1\right)/2} \sqrt{\det \bold{C}}} \int\limits_{\mathbb{R}^{m n^2}}  \prod_{i=1}^m  | \det \bm{\xi}^i| \mathrm{e}^{-\frac{1}{2} \left(\bold{\Omega}\bm{u},\bm{u}\right)} d\bm{u},
\end{equation}
where $\bold{\Omega}$ is the $mn^2 \times mn^2$ principal minor of $\bold{C}^{-1}$ and $\bm{u}$ is as in (\ref{EQN_DEF_U}).
\end{proposition}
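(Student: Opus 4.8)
The plan is to derive the Kac--Rice formula as a conditional-expectation computation, following the standard argument (as in \cite[Section 2]{BSZUniversality}) but keeping careful track of the $m$ separate points and the concatenation conventions in \eqref{Vector_General} and \eqref{EQN_DEF_U}. First I would translate the event ``$\bm h$ has a zero in $N_\delta(\bm x^i)$ for each $i$'' into a statement about the random vector $\bm v$. By Lemma \ref{LEM_AS_CONV} each $h_j$ is a.s.\ real analytic, so a.s.\ the zero set of $\bm h$ is a discrete set of nondegenerate points (degeneracy is excluded because $\bm C$ is positive definite, hence the Jacobian $\bm\xi^i := (\nabla h_1(\bm x^i)^\intercal,\ldots,\nabla h_n(\bm x^i)^\intercal)$ — the matrix whose $k$th row is $\nabla h_k(\bm x^i)$, viewed at an actual zero near $\bm x^i$ — is a.s.\ invertible). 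For $\delta$ small, the implicit function theorem shows that ``a zero in $N_\delta(\bm x^i)$'' happens, up to an event of probability $o(\delta^n)$, exactly when $\bm h(\bm x^i)$ lies in a neighborhood of $\mathbf 0$ of diameter $O(\delta)$ whose volume is $|\det \bm\xi^i|\cdot V_\delta + o(\delta^n)$, with $\bm\xi^i$ evaluated at $\bm x^i$ itself. Hence
\begin{equation*}
{\rm Pr}\big(\exists\text{ a zero in }N_\delta(\bm x^i)\ \forall i\big) = E\Big[\prod_{i=1}^m |\det\bm\xi^i|\Big]\,(V_\delta)^m \cdot p_{\bm h(\bm x^1),\ldots,\bm h(\bm x^m)}(\mathbf 0,\ldots,\mathbf 0) + o(\delta^{mn}),
\end{equation*}
where the expectation is conditioned on $\bm h(\bm x^1)=\cdots=\bm h(\bm x^m)=\mathbf 0$ and $p_{\cdot}$ is the joint density of $(\bm h(\bm x^1),\ldots,\bm h(\bm x^m))$ evaluated at the origin. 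Dividing by $(V_\delta)^m$ and letting $\delta\to 0$ gives $\KK_n(\bm x^1,\ldots,\bm x^m)$ as that conditional expectation times $p(\mathbf 0)$.

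Next I would make both factors explicit using Gaussianity. Decompose $\bm v = (\bm u_0, \bm w)$ where $\bm w = (\bm h(\bm x^1),\ldots,\bm h(\bm x^m))$ is the vector of function values (the last $mn$ coordinates of \eqref{Vector_General}) and $\bm u_0$ collects the $mn^2$ gradient coordinates; note $\bm u$ in \eqref{EQN_DEF_U} is a reordering of $\bm u_0$, so its Gaussian density has the same quadratic form up to the permutation. Write $\bm C$ in block form with respect to $(\bm u_0,\bm w)$; since $\bm w$ has an $mn\times mn$ covariance block $\bm D$, the joint density of $\bm w$ at $\mathbf 0$ is $(2\pi)^{-mn/2}(\det\bm D)^{-1/2}$. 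Conditioned on $\bm w = \mathbf 0$, the vector $\bm u_0$ is Gaussian with mean $\mathbf 0$ (linear regression on $\mathbf 0$) and conditional covariance $\bm\Sigma$ equal to the Schur complement of $\bm D$ in $\bm C$; by the standard Schur-complement identity, $\bm\Sigma^{-1}$ is exactly the $mn^2\times mn^2$ principal minor of $\bm C^{-1}$, i.e.\ $\bm\Omega$ (after the harmless reordering identifying $\bm u_0$ with $\bm u$). Therefore the conditional density of $\bm u$ is $(2\pi)^{-mn^2/2}(\det\bm\Sigma)^{-1/2}\exp(-\tfrac12(\bm\Omega\bm u,\bm u))$, and the conditional expectation of $\prod_i|\det\bm\xi^i|$ is the integral of $\prod_i|\det\bm\xi^i|$ against this density over $\mathbb R^{mn^2}$.

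Finally I would assemble the constant: multiplying the conditional density of $\bm u$ by $p_{\bm w}(\mathbf 0)$ produces the prefactor $(2\pi)^{-mn^2/2-mn/2}(\det\bm\Sigma\cdot\det\bm D)^{-1/2} = (2\pi)^{-mn(n+1)/2}(\det\bm C)^{-1/2}$, using $\det\bm C = \det\bm D\cdot\det\bm\Sigma$ from the block factorization. This is precisely the prefactor in \eqref{K1_general}, and the remaining integral is $\int_{\mathbb R^{mn^2}}\prod_{i=1}^m|\det\bm\xi^i|\,\mathrm e^{-\frac12(\bm\Omega\bm u,\bm u)}\,d\bm u$, completing the proof. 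The main obstacle is the first step — rigorously justifying that the event of a zero in $N_\delta(\bm x^i)$ contributes, to leading order in $\delta$, exactly the factor $|\det\bm\xi^i|\,V_\delta$ with $\bm\xi^i$ frozen at $\bm x^i$, and that the error terms (from second-order Taylor behavior of $\bm h$, and from the rare events where there are multiple or degenerate zeros near a point) are genuinely $o(\delta^{mn})$ so that they vanish in the limit; this is the content of the Kac--Rice machinery and I would cite \cite[Section 2]{BSZUniversality} and \cite{AT} for the technical details rather than reproduce them.
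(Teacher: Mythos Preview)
Your proposal is correct and follows essentially the same Kac--Rice strategy as the paper, but the two organize the Gaussian computation differently. The paper first proves (as its Lemma~\ref{LEM_EQUIV_VERSION_MULTI_DENSITY}) the same reduction you sketch in your first step---replacing ``zero in $N_\delta(\bm x^i)$'' by ``$\bm h(\bm x^i)\in D\bm h(\bm x^i)(N_\delta(\bm 0))$''---via a careful sandwiching argument with Taylor's theorem and derivative bounds. From there, however, the paper does \emph{not} pass through conditional expectations or Schur complements: it simply writes the full joint Gaussian density of $(\bm u,\bm s)$ with normalization $(2\pi)^{-mn(n+1)/2}(\det\bold C)^{-1/2}$, integrates $\bm s$ over the small ellipsoids $\bm\xi^i(N_\delta(\bm 0))$, applies dominated convergence to pull the limit inside, and uses the integral mean value theorem to collapse the inner integral to its value at $\bm s=\bm 0$. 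Since setting $\bm s=\bm 0$ in $\exp\bigl(-\tfrac12(\bold C^{-1}[\bm u;\bm s],[\bm u;\bm s])\bigr)$ yields $\exp\bigl(-\tfrac12(\bold\Omega\bm u,\bm u)\bigr)$ by definition of the principal minor, the formula drops out immediately with no block factorization of $\det\bold C$ needed.

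Your conditional-expectation route is equally valid and is the more common packaging in the probability literature; it makes the appearance of $\bold\Omega$ conceptually transparent (as the inverse conditional covariance) at the cost of invoking the Schur-complement identity and the determinant factorization $\det\bold C=\det\bold D\cdot\det\bold\Sigma$. The paper's route trades those algebraic facts for a dominated-convergence step, and gets the prefactor for free from the joint density. Either way the only genuinely delicate point is your first step, which the paper handles in detail rather than citing out.
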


Proposition \ref{KAC_RICE_MULTI_DENSITY} is easily obtained from \cite[Theorem
2.2]{BSZUniversality} by using the suitable Gaussian density $D_k(0,\xi,z)$ in
their formula (38).    In order for this paper to be relatively self contained, we present a  proof of Proposition \ref{KAC_RICE_MULTI_DENSITY} below.

We start with the following lemma. Define the derivative $D{\bm h}(\bm{x})$ as a linear map $\bm{w} \mapsto  D\bm{w}$, where $D$ is the matrix
\begin{align*}
D = \left(\frac{\partial h_i}{\partial x_j}({\bm x})\right)_{i,j=1}^n,
\end{align*}
and $\bm{w}=[w_1,\ldots,w_n]^\intercal$.

\begin{lemma}\label{LEM_EQUIV_VERSION_MULTI_DENSITY}  We have
\begin{eqnarray}\label{MULTI_DENSITY2}
\KK_n(\bm{x}^1,\ldots,\bm{x}^m) = \lim\limits_{\delta \rightarrow 0} \frac{1}{(V_\delta)^m}
{\rm Pr}\Big(\bm{h}(\bm{x}^i) \in {D{\bm h}(\bm{x}^i)}\big(N_\delta(\bm{0})\big) \mbox{ for each $i=1,\ldots,m$}\Big).
\end{eqnarray}
\end{lemma}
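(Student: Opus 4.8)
The idea is that near each point $\bm{x}^i$, the Gaussian analytic function $\bm h$ is, to leading order in $\delta$, well approximated by its linearization $\bm{y} \mapsto \bm{h}(\bm{x}^i) + D\bm{h}(\bm{x}^i)(\bm{y}-\bm{x}^i)$, and that this linearization has a zero in $N_\delta(\bm{x}^i)$ precisely when $-\bm{h}(\bm{x}^i)$ lies in the image $D\bm{h}(\bm{x}^i)(N_\delta(\bm{0}))$ (equivalently, by symmetry of the ball under $\bm{w}\mapsto-\bm{w}$, when $\bm{h}(\bm{x}^i) \in D\bm{h}(\bm{x}^i)(N_\delta(\bm{0}))$). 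So the plan is to show that replacing the event ``$\exists$ a zero of $\bm h$ in $N_\delta(\bm{x}^i)$'' by the event ``$\bm{h}(\bm{x}^i) \in D\bm{h}(\bm{x}^i)(N_\delta(\bm{0}))$'', simultaneously over $i=1,\dots,m$, changes the probability by $o((V_\delta)^m)$, hence leaves the limit in \eqref{DEF_MULTI_DENSITY} unchanged.

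First I would fix attention on a single point $\bm{x}^i$ and work on the event (of probability $1$, by Lemma \ref{LEM_AS_CONV}) that $\bm h$ is real analytic, together with the event that $D\bm{h}(\bm{x}^i)$ is invertible (this holds a.s.\ because the covariance matrix $\bold C$ is assumed positive definite, so the entries of $D\bm{h}(\bm{x}^i)$ are nondegenerate Gaussians and $\det D\bm{h}(\bm{x}^i)\neq 0$ a.s.). Taylor expansion gives $\bm h(\bm{x}^i+\bm{w}) = \bm h(\bm{x}^i) + D\bm{h}(\bm{x}^i)\bm{w} + \bm{R}(\bm{w})$ with $\|\bm{R}(\bm{w})\| \le C_i\|\bm{w}\|^2$ for $\|\bm{w}\|\le\delta$, where $C_i$ is a random but a.s.\ finite constant (a bound on the second derivatives of $\bm h$ on a fixed compact neighborhood of $\bm{x}^i$). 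A standard quantitative inverse/implicit function argument then shows: if $\|\bm h(\bm{x}^i)\| \le \sigma_i\delta - C_i\delta^2$ (where $\sigma_i>0$ is the smallest singular value of $D\bm{h}(\bm{x}^i)$) then $\bm h$ has a zero in $N_\delta(\bm{x}^i)$; and if $\bm h$ has a zero in $N_\delta(\bm{x}^i)$ then $\|\bm h(\bm{x}^i)\| \le \|D\bm{h}(\bm{x}^i)\|\delta + C_i\delta^2$. Comparing with the linearized event, which is exactly $\{\bm{h}(\bm{x}^i)\in D\bm{h}(\bm{x}^i)(N_\delta(\bm 0))\}$, one sees that the symmetric difference of the two events is contained in an event of the form $\{\,\mathrm{dist}(\bm h(\bm{x}^i),\partial(D\bm{h}(\bm{x}^i)(N_\delta(\bm 0)))) \le C_i\delta^2\,\}$, i.e.\ $\bm h(\bm{x}^i)$ lands in a shell of thickness $O(\delta^2)$ around a region of diameter $O(\delta)$. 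Since the conditional density of $\bm h(\bm{x}^i)$ given $D\bm{h}(\bm{x}^i)$ is bounded, this shell has probability $O(\delta^{n-1}\cdot\delta^2)=O(\delta^{n+1})=o(V_\delta)$, uniformly enough once one truncates to $C_i \le R$ and lets $R\to\infty$ afterward.

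To finish, I would combine the $m$ points: the event that ``$\exists$ a zero of $\bm h$ in $N_\delta(\bm{x}^i)$ for all $i$'' and the event ``$\bm h(\bm{x}^i)\in D\bm{h}(\bm{x}^i)(N_\delta(\bm 0))$ for all $i$'' differ by a set contained in $\bigcup_{i=1}^m E_i^\delta$, where $E_i^\delta$ is the single-point symmetric-difference event above; each $\Pr(E_i^\delta) = o(\delta^{n+1}) = o(V_\delta)$, hence $o((V_\delta)^m)$ after dividing by $(V_\delta)^{m-1}\to\infty$ — more carefully, using the trivial bound $\Pr(A\cap(\text{all others})) \le \Pr(A)$ together with the fact that the probability of a zero near a single point is $O(V_\delta)$, one gets that the difference of the two $m$-point probabilities is $o((V_\delta)^m)$. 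Dividing by $(V_\delta)^m$ and passing to the limit yields \eqref{MULTI_DENSITY2}. The main obstacle is the quantitative inverse-function estimate controlling the symmetric difference uniformly in $\delta$: one must be careful that the random constants $C_i$, $\sigma_i$, $\|D\bm{h}(\bm{x}^i)\|$ are handled by a truncation argument so that the $o(\cdot)$ bounds hold after integrating over the Gaussian randomness, rather than merely almost surely for each fixed outcome.
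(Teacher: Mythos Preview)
Your overall strategy---Taylor expansion, sandwiching the true event and the linearized event between inner and outer $O(\delta^2)$-thickenings, then using that the Gaussian vector $(\bm{h}(\bm{x}^i),D\bm{h}(\bm{x}^i))_i$ has a bounded density---is exactly the paper's strategy. The difference is organizational, and your organization creates a real gap in the combining step.

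You treat each point $\bm{x}^i$ separately and then try to assemble the $m$-point estimate by a union bound. That last step does not go through as written. The remark that ``$(V_\delta)^{m-1}\to\infty$'' is false ($(V_\delta)^{m-1}\to 0$), and your ``more carefully'' fix invokes only $\Pr(A\cap B)\le\Pr(A)$, which from $\Pr(E_i^\delta)=o(V_\delta)$ yields only $o(V_\delta)$, not $o((V_\delta)^m)$. To get the extra factor $(V_\delta)^{m-1}$ you must simultaneously constrain $\bm{h}(\bm{x}^j)$ for the other indices $j\ne i$ to regions of volume $O(\delta^n)$ and then use the \emph{joint} bounded Gaussian density of $(\bm{h}(\bm{x}^1),\ldots,\bm{h}(\bm{x}^m),D\bm{h}(\bm{x}^1),\ldots,D\bm{h}(\bm{x}^m))$. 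Once you do that, you are no longer arguing point-by-point at all.

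The paper avoids this by working jointly from the outset: after truncating to $\|D\bm{h}(\bm{x}^i)\|<A$ and second derivatives bounded by $A$ (probability $\to 1$ as $A\to\infty$), it defines product sets
\[
E^\pm_{\delta,B}=\Big\{(\bm{u},\bm{s}): \|\bm{\xi}^i\|<A,\ \bm{s}^i\in\big(\bm{\xi}^i(N_\delta(\bm{0}))\big)^{\pm}_{B\delta^2}\ \text{for all }i\Big\},\qquad
E^0_\delta=\Big\{(\bm{u},\bm{s}): \bm{s}^i\in\bm{\xi}^i(N_\delta(\bm{0}))\ \text{for all }i\Big\},
\]
so that both the true $m$-point event $H_\delta$ and the linearized $m$-point event $H^0_\delta$ are sandwiched between the pullbacks $H^-_{\delta,B}\subset H^+_{\delta,B}$. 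A single volume computation gives $\mathrm{Vol}(E^+_{\delta,B}\setminus E^-_{\delta,B})=O(\delta^{mn+m})$, and absolute continuity of the joint Gaussian law converts this directly into $\Pr(H^+_{\delta,B}\setminus H^-_{\delta,B})=O(\delta^{mn+m})=o((V_\delta)^m)$. This is precisely the missing joint estimate your argument needs; once you put it in, your proof collapses into the paper's.
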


\begin{proof}
We begin by cutting off the tails of $\bm{h}$.
Let $R > 0$ be chosen sufficiently large so that for all $\delta > 0$
sufficiently small and all $i=1,\ldots,m$ we have $N_\delta(\bm{x}^i) \subset
\{\|\bm x\| \leq R\}$.  
Consider the following bounds on the derivatives of ${\bm h}$:
\begin{align}\label{EQN_DERIVATIVE_BOUNDS}
\|D{\bm h}({\bm x}^i)\| < A \mbox{ for all $1\leq i \leq m$ and } \left|\frac{\partial^2 h_j(\bm{x})}{\partial x_k \partial x_l}\right| < A \mbox{ for all $1 \leq j,k,l \leq n$ and all $\|\bm x\| \leq R$}.
\end{align}
For any Gaussian analytic function ${\bm h}$
\begin{align*}
\Pr\Big({\bm h} \mbox{ satisfies  condition (\ref{EQN_DERIVATIVE_BOUNDS}) } \Big)  \rightarrow 1 \mbox{ as } A \rightarrow \infty.
\end{align*}
Therefore, it will be sufficient to prove (\ref{MULTI_DENSITY2}) under the hypotheses 
(\ref{EQN_DERIVATIVE_BOUNDS}).  (The constant $A > 0$ will be fixed for the remainder of the proof.)

As in the statement of Proposition \ref{KAC_RICE_MULTI_DENSITY}, 
let $\bm{\xi}^i$ be the $n\times n$ matrices, for $i=1,\ldots,m$, and let $\bm{u} \in \mathbb{R}^{mn^2}$ be given as in (\ref{EQN_DEF_U}).
Let $\bm{s}~:=~\left[ \begin {array}{cccccccccc} \bm{s}^1 & \bm{s}^2 & \cdots &
\bm{s}^m \end{array} \right]^\intercal$ be the $mn$-dimensional column vector, where each $\bm{s}^i \in
\mathbb{R}^n$.
For any open subset $U \subset \mathbb{R}^n$ having compact closure and any $\epsilon > 0$ let
\begin{align*}
U_\epsilon^- := \{\bm{x} \in U \,: \, \dist(\bm{x},\partial U) \geq \epsilon\} \qquad \mbox{and} \qquad 
U_\epsilon^+ := \{\bm{x} \in \mathbb{R}^n \,: \, \dist(\bm{x},\overline U) \leq \epsilon\}.
\end{align*}
For any $\delta > 0$ and $B > 0$ consider 
\begin{align}
E^\pm_{\delta,B} &:= \left\{(\bm{u},\bm{s}) \in \mathbb{R}^{mn^2+mn} \,\, : \,\, \|\bm{\xi}^i\| < A \mbox{ and } \bm{s}^i \in (\bm{\xi}^i(N_\delta(\bm{0})))_{B \delta^2}^\pm  \, \mbox{for each $i=1,\ldots, m$}\right\} \qquad \mbox{and} \\
E^0_\delta &:= \left\{(\bm{u},\bm{s}) \in \mathbb{R}^{mn^2+mn} \,\, : \,\,  \|\bm{\xi}^i\| < A \mbox{ and } \bm{s}^i \in \bm{\xi}^i(N_\delta(\bm{0}))  \, \mbox{for each $i=1,\ldots, m$}\right\}.
\end{align} 
{
Here, 
$\bm{\xi}^i(N_\delta(\bm{0}))$ denotes the image of the ball $N_\delta(\bm{0})$ under the the linear map from $\mathbb{R}^n$ to $\mathbb{R}^n$ expressed in terms of the standard basis on $\mathbb{R}^n$ by the $n \times n$ matrix $\bm{\xi}^i$.
}

The bounds $\|\bm{\xi}^i\| < A$, for $i = 1,\ldots,m$, imply that as $\delta \rightarrow 0$ for fixed $B > 0$ we have
\begin{align}\label{EQN_VOLUMES}
{\rm Vol} (E^+_{\delta,B} \setminus E^-_{\delta,B}) = O(\delta^{mn+m}).
\end{align}

Let 
\begin{align*}
H_\delta &:= \left\{\bm{h} \,\,: \,\,  \bm{h} \text{ has a zero in } N_\delta({\bm{x}^i}) \mbox{ for each $i=1,\ldots,m$}\right\}, \\
H_{\delta,B}^{\pm} &:= \left\{\bm{h} \,\, : \,\, \left(D\bm{h}(\bm{x}^1),\ldots,D\bm{h}(\bm{x}^m),\bm{h}(\bm{x}^1),\ldots,\bm{h}(\bm{x}^m)\right) \in E^{\pm}_{\delta,B}\right\}, \quad \mbox{and} \\
H_\delta^{0} &:= \left\{\bm{h} \,\, : \,\, \left(D\bm{h}(\bm{x}^1),\ldots,D\bm{h}(\bm{x}^m),\bm{h}(\bm{x}^1),\ldots,\bm{h}(\bm{x}^m)\right) \in E^{0}_\delta \right\}.
\end{align*}
It is immediate from the definition of the sets $E^\pm_{\delta,B}$ and $E^0_\delta$ that for any $\delta > 0$
and any $B > 0$ that
\begin{align}\label{EQN_SUBSETS1}
H_{\delta,B}^- \subset H^0_\delta \subset H_{\delta,B}^+.
\end{align}
Meanwhile, by assumption (\ref{EQN_DERIVATIVE_BOUNDS}),
Taylor's Theorem gives that there exists $B > 0$ (depending on the bound $A$) so that for all sufficiently small $\delta > 0$ we have
\begin{align}\label{EQN_SUBSETS2}
H_{\delta,B}^- \subset H_\delta \subset H_{\delta,B}^+.
\end{align}

Since $\left(D\bm{h}(\bm{x}^1),\ldots,D\bm{h}(\bm{x}^m),\bm{h}(\bm{x}^1),\ldots,\bm{h}(\bm{x}^m)\right)$ is a Gaussian random vector, its probability distribution is absolutely continuous.  Therefore, (\ref{EQN_VOLUMES}) implies that
\begin{align}
\Pr\Big(\bm{h} \in H_{\delta,B}^+ \setminus H_{\delta,B}^-\Big) = O(\delta^{mn+m}).
\end{align}
Combined with (\ref{EQN_SUBSETS1}) and (\ref{EQN_SUBSETS2}) this implies that under the assumption (\ref{EQN_DERIVATIVE_BOUNDS}) we have
\begin{align} 
\lim\limits_{\delta \rightarrow 0} \frac{1}{(V_\delta)^m} \Pr\Big(\bm h \in H_\delta\Big) = \lim\limits_{\delta \rightarrow 0} \frac{1}{(V_\delta)^m} \Pr\Big(\bm h \in H^0_\delta\Big),
\end{align}
since $(V_\delta)^m$ is bounded below by a constant times  $\delta^{mn}$.

\end{proof}

\noindent
{\bf Proof of Proposition \ref{KAC_RICE_MULTI_DENSITY}.}
We use Lemma \ref{LEM_EQUIV_VERSION_MULTI_DENSITY} to replace the definition of $\KK_n$ with (\ref{MULTI_DENSITY2}).
Using the formula for a Gaussian density, we have
\begin{align}
\KK_n(\bm{x}^1,\ldots,\bm{x}^m) =  \frac{1}{\left(2\pi\right)^{m
n\left(n+1\right)/2} \sqrt{\det \bold{C}}} \lim\limits_{\delta \rightarrow 0}
\frac{1}{(V_\delta)^m} \int\limits_{\mathbb{R}^{m n^2}} \,\,
\int\limits_{\bm{\xi}^1(N_\delta(\bm{0})) \times \cdots \times \bm{\xi}^m(N_\delta(\bm{0})) }
 \mathrm{e}^{-\frac{1}2
\left(\bold{C^{-1}}\left[\begin{array}{c} \bm{u} \\ \bm{s}\end{array}\right],\left[\begin{array}{c} \bm{u} \\ \bm{s}\end{array}\right] \right)} d\bm{s} d\bm{u},
\end{align}
where $\left[\begin{array}{c} \bm{u} \\ \bm{s}\end{array}\right]$ denotes
the column vector obtained by stacking the two column vectors $\bm{u}$ and
$\bm{s}$.

Because $\bold{C}$ is positive definite, the integrand decays rapidly at infinity.  Thus, 
the Dominated Convergence Theorem allows us to interchange the limit with the first integral.  We also multiply
and divide by $\prod_{i=1}^m  | \det \bm{\xi}^i|$, obtaining:
\begin{align}
& \KK_n(\bm{x}^1,\ldots,\bm{x}^m) =  \\ 
 & \frac{1}{\left(2\pi\right)^{m n\left(n+1\right)/2} \sqrt{\det \bold{C}}} 
\int\limits_{\mathbb{R}^{m n^2}}  \lim\limits_{\delta \rightarrow 0}
\frac{\prod_{i=1}^m  | \det \bm{\xi}^i|}{\prod_{i=1}^m  {\rm Vol} \, \bm{\xi}^i(N_\delta(\bm{0}))} 
\int\limits_{\bm{\xi}^1(N_\delta(\bm{0})) \times \cdots \times
\bm{\xi}^m(N_\delta(\bm{0})) } \mathrm{e}^{-\frac{1}2\left(\bold{C^{-1}}\left[\begin{array}{c} \bm{u} \\ \bm{s}\end{array}\right],\left[\begin{array}{c} \bm{u} \\ \bm{s}\end{array}\right] \right)} d\bm{s} d\bm{u}. \nonumber
\end{align}
The Integral Mean Value Theorem implies that
\begin{align}
 \lim\limits_{\delta \rightarrow 0} \frac{1}{\prod_{i=1}^m  {\rm
Vol} \, \bm{\xi}^i(N_\delta(\bm{0}))}  \int\limits_{\bm{\xi}^1(N_\delta(\bm{0}))
\times \cdots \times \bm{\xi}^m(N_\delta(\bm{0})) }
\mathrm{e}^{-\frac{1}2\left(\bold{C^{-1}}\left[\begin{array}{c} \bm{u} \\
\bm{s}\end{array}\right],\left[\begin{array}{c} \bm{u} \\
\bm{s}\end{array}\right] \right)} d{\bm s} 
= \mathrm{e}^{-\frac{1}2\left(\bold{C^{-1}}\left[\begin{array}{c} \bm{u} \\
\bm{0}\end{array}\right],\left[\begin{array}{c} \bm{u} \\
\bm{0}\end{array}\right] \right)} =  \mathrm{e}^{-\frac{1}{2} \left(\bold{\Omega}\bm{u},\bm{u}\right)},
\end{align}
which completes the proof.
\qed

\begin{remark}\label{RMK_NBHDS}
Proposition \ref{KAC_RICE_MULTI_DENSITY} shows that the correlation measure is
absolutely continuous off of the ``diagonal'' where $\bm{x}^i = \bm{x}^j$ for some $i
\neq j$ (hence the name ``correlation function'').  Thus, in the definition
(\ref{DEF_MULTI_CORR_FUNCTION}) of $K_n(\bm{x}^1,\ldots,\bm{x}^m)$ one need not
use round balls $N_\delta(\bm{x}^i)$.  Rather, any sequence of shrinking neighborhoods
of each $\bm{x}^i$ suitable for computing a Radon-Nikodym derivative will
suffice.
\end{remark}

On certain occasions we will need the following lemma, which is proved in Appendix \ref{SEC:Lemma8}, to make estimates involving the Kac-Rice formula (\ref{K1_general}).
\begin{lemma} \label{COCV3}
For any positive definite $mn^2 \times mn^2$ matrix $\bold{A}$
\begin{equation}
\left| \, \int\limits_{\mathbb{R}^{mn^2}}  \prod_{i=1}^m  | \det \bm{\xi}^i| \mathrm{e}^{-\frac{1}{2} \left(\bold{B}\bm{u},\bm{u}\right)}  -  \int\limits_{\mathbb{R}^{mn^2}}  \prod_{i=1}^m  | \det \bm{\xi}^i|  \mathrm{e}^{-\frac{1}{2} \left(\bold{A}\bm{u},\bm{u}\right)} d\bm{u} \right| =  O\left(||\bold{A}-\bold{B}||_{\infty}^{1/2}\right)
\end{equation}
for any $mn^2\times mn^2$ matrix $\bold{B}$ sufficiently close to $\bold{A}$. (Here $\bm{u}$ is as in (\ref{EQN_DEF_U}) and $||\ ||_{\infty}$ denotes the maximum entry of the matrix.)
\end{lemma}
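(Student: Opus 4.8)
The plan is to reduce everything to a one-dimensional Gaussian comparison by exploiting the polynomial structure of the integrand. First I would observe that $\prod_{i=1}^m |\det \bm{\xi}^i|$ is a polynomial in the entries of $\bm u$, and that for a positive definite quadratic form the integral $\int_{\mathbb{R}^N} P(\bm u)\, \mathrm{e}^{-\frac12(\bold A \bm u,\bm u)}\, d\bm u$ depends smoothly on $\bold A$ as long as $\bold A$ stays in the open cone of positive definite matrices. So a naive guess would give an $O(\|\bold A - \bold B\|_\infty)$ bound, which is even stronger than what is claimed; the weaker exponent $1/2$ in the statement suggests the intended argument is meant to be robust and elementary, not optimal. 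I would therefore aim for the following clean route: write the difference of the two integrands as
\begin{align*}
\mathrm{e}^{-\frac12(\bold B \bm u,\bm u)} - \mathrm{e}^{-\frac12(\bold A \bm u,\bm u)} = \mathrm{e}^{-\frac12(\bold A \bm u,\bm u)}\left(\mathrm{e}^{-\frac12((\bold B - \bold A)\bm u,\bm u)} - 1\right),
\end{align*}
and estimate the parenthesized factor. Since $\bold B$ is close to $\bold A$, for $\bm u$ in a region where $|((\bold B-\bold A)\bm u,\bm u)|$ is small we can use $|\mathrm{e}^x - 1| \le |x|\,\mathrm{e}^{|x|}$, and for $\bm u$ large we use the Gaussian decay coming from $\mathrm{e}^{-\frac12(\bold A\bm u,\bm u)}$ together with the fact that $(\bold B - \bold A)$ is a small perturbation of the zero matrix, so $\bold A + \tfrac12(\bold B - \bold A)$ (say) is still bounded below by $\tfrac12 \lambda_{\min}(\bold A)\,I$ once $\|\bold A - \bold B\|_\infty$ is small enough relative to $\lambda_{\min}(\bold A)$. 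This yields the crude but sufficient bound $|((\bold B-\bold A)\bm u,\bm u)| \le C\|\bold A - \bold B\|_\infty \|\bm u\|^2$, and hence
\begin{align*}
\left|\int_{\mathbb{R}^{mn^2}} \prod_{i=1}^m |\det \bm\xi^i|\left(\mathrm{e}^{-\frac12(\bold B\bm u,\bm u)} - \mathrm{e}^{-\frac12(\bold A\bm u,\bm u)}\right) d\bm u\right| \le C\|\bold A - \bold B\|_\infty \int_{\mathbb{R}^{mn^2}} \prod_{i=1}^m |\det\bm\xi^i|\,\|\bm u\|^2\, \mathrm{e}^{-\frac14 \lambda_{\min}(\bold A)\|\bm u\|^2} d\bm u,
\end{align*}
and the last integral is a finite constant depending only on $\bold A$ and $m,n$ because the integrand is a polynomial times a fixed Gaussian. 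This in fact gives the exponent $1$; to match the statement exactly (exponent $1/2$) one can simply note $\|\bold A - \bold B\|_\infty \le \|\bold A - \bold B\|_\infty^{1/2} \cdot \|\bold A - \bold B\|_\infty^{1/2} = O(\|\bold A - \bold B\|_\infty^{1/2})$ once $\|\bold A - \bold B\|_\infty \le 1$, so the weaker bound follows a fortiori.

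The one genuine point requiring care — and the place I expect most of the work to go — is making the splitting into the ``$\bm u$ small'' and ``$\bm u$ large'' regions uniform, i.e. ensuring the constant $C$ and the finiteness of the reference integral depend only on $\bold A$ (through $\lambda_{\min}(\bold A)$ and $\|\bold A\|$) and on $m,n$, not on $\bold B$, for all $\bold B$ in a fixed neighborhood of $\bold A$. Concretely I would fix $r>0$ so small that $\|\bold B - \bold A\|_\infty < r$ forces $\lambda_{\min}(\bold B) \ge \tfrac12\lambda_{\min}(\bold A)$ and $\lambda_{\min}\!\big(\tfrac12(\bold A+\bold B)\big)\ge\tfrac12\lambda_{\min}(\bold A)$ (possible by continuity of eigenvalues, and by the crude bound $|((\bold B - \bold A)\bm u,\bm u)| \le mn^2 r \|\bm u\|^2$). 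Then on $\{\|\bm u\|^2 \le 1/(mn^2 r)^{1/2}\}$ — or any region where $|((\bold B - \bold A)\bm u,\bm u)| \le 1$ — the inequality $|\mathrm{e}^{-x} - 1| \le e|x|$ applies; outside that region one bounds each of $\mathrm{e}^{-\frac12(\bold A\bm u,\bm u)}$ and $\mathrm{e}^{-\frac12(\bold B\bm u,\bm u)}$ separately by $\mathrm{e}^{-\frac14\lambda_{\min}(\bold A)\|\bm u\|^2}$ and absorbs the factor $\|\bold A - \bold B\|_\infty^{1/2}$ using that on this far region $\|\bm u\|^2 \ge (mn^2)^{-1/2}\|\bold A - \bold B\|_\infty^{-1/2}$, which produces an extra smallness. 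A small amount of bookkeeping combines the two regions. No deeper input than the dominated convergence theorem, the elementary inequality for $\mathrm{e}^x - 1$, and continuity of eigenvalues is needed, so the lemma is really a soft statement whose only subtlety is the uniformity of constants; that is why I would be content to prove the suboptimal exponent $1/2$ exactly as stated rather than chase the sharp rate.
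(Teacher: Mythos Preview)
Your proposal is correct, and in fact your first argument already yields the sharper bound $O(\|\bold A-\bold B\|_\infty)$, from which the stated $O(\|\bold A-\bold B\|_\infty^{1/2})$ follows trivially. The key step---using $|e^{x}-1|\le |x|\,e^{|x|}$ together with $|((\bold B-\bold A)\bm u,\bm u)|\le mn^2\|\bold A-\bold B\|_\infty\|\bm u\|^2$, and then absorbing the factor $e^{\frac12 mn^2\|\bold A-\bold B\|_\infty\|\bm u\|^2}$ into $e^{-\frac12\lambda_{\min}(\bold A)\|\bm u\|^2}$ once $\|\bold A-\bold B\|_\infty$ is small enough---is clean and complete. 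Your second paragraph (the explicit small/large $\|\bm u\|$ splitting) is therefore unnecessary; your first argument already handles all of $\mathbb{R}^{mn^2}$ uniformly.

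The paper's proof takes a genuinely different route: it introduces a cutoff radius $R$, bounds the tails $\{\|\bm u\|>R\}$ by a separate lemma giving $e^{-DR}$, bounds the inner region $\{\|\bm u\|<R\}$ via H\"older's inequality by a constant times $\sup_{\|\bm u\|<R}|e^{-\frac12((\bold B-\bold A)\bm u,\bm u)}-1|\le mn^2 R^2\|\bold A-\bold B\|_\infty$, and then optimizes by choosing $R=\|\bold A-\bold B\|_\infty^{-1/4}$, which balances the two pieces and produces exactly the exponent $1/2$. Your approach avoids the cutoff altogether by noticing that the perturbation in the exponent can be swallowed by the fixed Gaussian decay from $\bold A$; this is shorter and gives the better exponent. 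The paper's approach has the minor advantage of making the tail estimate explicit and reusable, but for the purposes of this lemma your argument is both simpler and stronger.
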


{
\begin{remark}
After reading a preprint of this paper, L. Nicolaescu informed us that a very similar estimate appears in
his paper [Prop. A.1]\cite{Nicolaescu}, whose proof was provided by G. Lowther in a discussion on Math Overflow \cite{MO}.
\end{remark}
}

We close the section with simplified rephrasings of Proposition
\ref{KAC_RICE_MULTI_DENSITY} in the cases $m=1$ and $m=2$ that will be used
throughout this paper.
In both cases it will be better to order the random vector $\bm{v}$ from (\ref{Vector_General}) in a way that will make the covariance matrix $\bold{C}$ block diagonal.
(It will come at the cost of the definition for $\bold{\Omega}$ being slightly more complicated.)

If $m=1$ we reorder the random vector $\bm{v}$ as
\begin{equation}  \label{Vector_density}
\bm{v}:= \left[ \begin {array}{ccccccccc}{ h_1(\bm{x})}&{ \nabla h_{1}(\bm{x})}&\dots&h_n(\bm{x})&\nabla h_{n}(\bm{x})\end {array} \right]^\intercal.
\end{equation}
Because the components of $\bm{h}$ are chosen iid, the resulting covariance matrix
$\bold{C}=(Ev_i v_j)_{i,j=1}^{n^2+n}$ will be of the form $\bold{C} = \diag_n(\tilde{\bold{C}})$, where
$\tilde{\bold{C}} = (Ev_i v_j)_{i,j=1}^{n+1}$ and $\diag_n\left(\tilde{\bold{C}}\right)$ denotes the block diagonal matrix with $n$ copies of $\tilde{\bold{C}}$ on the diagonal.  
The vector $\bm{u}$ becomes
$\begin{displaystyle} \bm{u} = \left[ \begin {array}{ccc} { {\bm \xi}_{1}}&\dots&{
{\bm \xi}_{n}}\end {array} \right]^{\intercal}
\end{displaystyle}$ where $\bm{\xi}$ is a $n\times n$ matrix.
\begin{proposition}{\bf (Kac-Rice for Density)} \label{KacRiceFormula_density} Suppose the
covariance matrix $\bold{C}=(Ev_i v_j)_{i,j=1}^{n(n+1)}$ of the vector (\ref{Vector_density}) is positive definite.
Then, the density of zeroes of the system $\bm{h}$ is:
\begin{equation}\label{K_density} \rho_n(\bm{x})=\frac{1}{\left(2\pi\right)^{n\left(n+1\right)/2}\sqrt{\det \bold{C}}} \int\limits_{\mathbb{R}^{n^2}} | \det {\bm \xi}| \mathrm{e}^{-\frac{1}{2} \left(\bold{\Omega}{\bm u},{\bm u}\right)} d{\bm u},\end{equation}
where $\bold{\Omega}$ is the matrix of the elements of $\bold{C}^{-1}$ left after removing the rows and columns with indices congruent to 1 modulo $n+1$.
\end{proposition}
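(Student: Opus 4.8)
The plan is to deduce Proposition~\ref{KacRiceFormula_density} directly from the General Kac--Rice Formula (Proposition~\ref{KAC_RICE_MULTI_DENSITY}) specialized to $m=1$, keeping careful track of the effect of re-ordering the entries of the Gaussian vector. For $m=1$, the vector in (\ref{Vector_General}) is $\bm v_{\mathrm{gen}} = [\,\nabla h_1(\bm x)\ \cdots\ \nabla h_n(\bm x)\ \ h_1(\bm x)\ \cdots\ h_n(\bm x)\,]^\intercal$, whose first $n^2$ coordinates are exactly the components of $\bm u$ in the order prescribed by (\ref{EQN_DEF_U}) (under the identification $\bm\xi_{ij} \leftrightarrow \partial h_i/\partial x_j$, i.e. the $i$-th row $\bm\xi_i$ of $\bm\xi$ corresponds to $\nabla h_i$). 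The vector $\bm v$ of (\ref{Vector_density}) is obtained from $\bm v_{\mathrm{gen}}$ by a permutation $P$ of coordinates which interleaves the scalars $h_i(\bm x)$ among the gradient blocks but preserves the relative order of the $n^2$ gradient components.

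First I would record the effect of $P$ on the covariance matrix: $\bold C = P\,\bold C_{\mathrm{gen}}\,P^\intercal$, so $\det\bold C = \det\bold C_{\mathrm{gen}}$, positive definiteness of $\bold C$ (the hypothesis of Proposition~\ref{KacRiceFormula_density}) is equivalent to that of $\bold C_{\mathrm{gen}}$ (the hypothesis of Proposition~\ref{KAC_RICE_MULTI_DENSITY}), and $\bold C^{-1} = P\,\bold C_{\mathrm{gen}}^{-1}\,P^\intercal$. In the ordering (\ref{Vector_density}) the scalar $h_i(\bm x)$ occupies position $(i-1)(n+1)+1$, i.e. precisely the indices congruent to $1$ modulo $n+1$; hence deleting those rows and columns of $\bold C^{-1}$ leaves exactly the $n^2\times n^2$ principal submatrix of $\bold C_{\mathrm{gen}}^{-1}$ indexed by the gradient components, which is the matrix $\bold\Omega$ appearing in Proposition~\ref{KAC_RICE_MULTI_DENSITY} (because $P$ preserves the relative order of the gradient coordinates, this principal submatrix is literally the same matrix in either ordering).

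It then remains only to specialize (\ref{K1_general}): with $m=1$ the prefactor $(2\pi)^{mn(n+1)/2}\sqrt{\det\bold C}$ becomes $(2\pi)^{n(n+1)/2}\sqrt{\det\bold C}$, the product $\prod_{i=1}^m|\det\bm\xi^i|$ becomes $|\det\bm\xi|$, the integral is over $\mathbb R^{n^2}$, and the integrand $|\det\bm\xi|\,\mathrm e^{-\frac12(\bold\Omega\bm u,\bm u)}\,d\bm u$ is unaffected by the relabeling since the value of a quadratic form and Lebesgue measure on $\mathbb R^{n^2}$ do not depend on the order of coordinates. Using $\KK_n(\bm x) = \rho_n(\bm x)$ from the $m=1$ case then yields exactly (\ref{K_density}).

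The only point requiring genuine care is the index bookkeeping: verifying that the positions of the $h_i(\bm x)$ in the ordering (\ref{Vector_density}) are exactly the residues $1 \bmod (n+1)$, and that $P$ does not scramble the gradient block relative to (\ref{EQN_DEF_U}). Everything else is a routine specialization. As an alternative that avoids the permutation entirely, one could simply re-run the proof of Proposition~\ref{KAC_RICE_MULTI_DENSITY} (via Lemma~\ref{LEM_EQUIV_VERSION_MULTI_DENSITY}) verbatim with the Gaussian vector ordered as in (\ref{Vector_density}) from the outset, which produces the statement with $\bold\Omega$ described as in the proposition with no relabeling step at all.
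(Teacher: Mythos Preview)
Your proposal is correct and is exactly the approach the paper takes: the paper presents Proposition~\ref{KacRiceFormula_density} explicitly as a ``simplified rephrasing'' of Proposition~\ref{KAC_RICE_MULTI_DENSITY} in the case $m=1$, with the random vector reordered so that $\bold C$ becomes block diagonal, and does not spell out a separate proof. Your careful bookkeeping of the permutation $P$ and the identification of the indices congruent to $1 \bmod (n+1)$ is precisely the verification the paper leaves implicit.
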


If $m=2$ we reorder $\bm{v}$ as
\begin{align}  \label{Vector}
\bm{v}:= \left[ \begin {array}{ccccccccccccc}{ h_1(\bm{x})}&{ \nabla h_{1}(\bm{x})}&{ h_1(\bm{y})}&{ \nabla h_{1}(\bm{y})}&\dots&h_n(\bm{x})&\nabla h_{n}(\bm{x})&h_n(\bm{y})&\nabla h_{n}(\bm{y})\end {array} \right]^\intercal,
\end{align}
which will again make its covariance matrix block diagonal $\bold{C} = \diag_n(\tilde{\bold{C}})$.
Let $\begin{displaystyle}{\bm \xi}\end{displaystyle}$ and $\bm{\eta}$ be the
$n\times n$ matrices whose rows are ${\bm \xi}_{1}, \dots {\bm \xi}_{n}$ and
${\bm \eta}_{1}, \dots {\bm \eta}_{n}$, respectively. Let
$\begin{displaystyle}\bm{u} =\left[ \begin {array}{ccccccc} { \bm{\xi}_{1}}&{
\bm{\eta}_{1}}&{ \bm{\xi}_{2}}&{ \bm{\eta}_{2}}&\dots&{ \bm{\xi}_{n}}&{
\bm{\eta}_{n}}\end {array} \right]^{\intercal} \end{displaystyle}$ be the vector
formed by alternating the vectors $\bm{\xi}_{i}$ and~$\bm{\eta}_{i}$.

\begin{proposition}[\bf Two Point Kac-Rice Formula] \label{KacRiceFormula}
Suppose the covariance matrix $\bold{C}=(Ev_i v_j)_{i,j=1}^{2n(n+1)}$ of the vector~(\ref{Vector}) is positive definite.
Then, the two-point correlation function for the zeroes of the system $\bm{h}$ is:
\begin{equation} \label{K1} K_n\left(\bm{x},\bm{y}\right)=\frac{1}{\left(2\pi\right)^{n\left(n+1\right)} \rho(\bm{x})\rho(\bm{y})\sqrt{\det \bold{C}}} \int\limits_{\mathbb{R}^{2n^2}} | \det \bm{\xi}| |\det \bm{\eta}| \mathrm{e}^{-\frac{1}{2} \left(\bm{\Omega}\bm{u},\bm{u}\right)} d\bm{u},\end{equation}
where $\bold{\Omega}$ is the matrix of the elements of $\bold{C}^{-1}$ left after removing the rows and columns with indices congruent to 1 modulo $n+1$.
\end{proposition}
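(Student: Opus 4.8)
The plan is to obtain Proposition~\ref{KacRiceFormula} as the special case $m=2$ of the General Kac-Rice Formula (Proposition~\ref{KAC_RICE_MULTI_DENSITY}), combined with the identity (\ref{EQN_MPOINT_DENSITY_VS_CORR}) relating the two-point density $\KK_n$ to the two-point correlation function $K_n$. Concretely, I would first apply Proposition~\ref{KAC_RICE_MULTI_DENSITY} with $\bm{x}^1=\bm{x}$, $\bm{x}^2=\bm{y}$, $\bm{\xi}^1=\bm{\xi}$, $\bm{\xi}^2=\bm{\eta}$, and $(2\pi)^{mn(n+1)/2}=(2\pi)^{n(n+1)}$, to write $\KK_n(\bm{x},\bm{y})$ as $(2\pi)^{-n(n+1)}(\det\bold{C})^{-1/2}$ times $\int_{\mathbb{R}^{2n^2}} |\det\bm{\xi}|\,|\det\bm{\eta}|\,\mathrm{e}^{-\frac12(\bold{\Omega}\bm{u},\bm{u})}\,d\bm{u}$, where $\bold{C}$ is the covariance matrix of the vector (\ref{Vector_General}) and $\bold{\Omega}$ is its $2n^2\times 2n^2$ principal (gradient) minor as produced by Proposition~\ref{KAC_RICE_MULTI_DENSITY}. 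Dividing by $\rho(\bm{x})\rho(\bm{y})$ via (\ref{EQN_MPOINT_DENSITY_VS_CORR}) then produces a formula of exactly the shape (\ref{K1}); the remaining work is purely a matter of rewriting it with respect to the reordered random vector (\ref{Vector}) used in the statement.

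The second step is to track that reordering. The ordering of $\bm{v}$ in (\ref{Vector}) differs from the ordering in (\ref{Vector_General}) only by a permutation of coordinates, i.e.\ $\bold{C}$ is replaced by $P\bold{C}P^\intercal$ for a permutation matrix $P$, and the associated $\bm{u}$ is permuted accordingly into the alternating form displayed in the statement. Such a conjugation leaves $\det\bold{C}$ unchanged, replaces $\bold{C}^{-1}$ by $P\bold{C}^{-1}P^\intercal$, and hence carries the principal (gradient) $2n^2\times 2n^2$ submatrix of $\bold{C}^{-1}$ to the submatrix of $P\bold{C}^{-1}P^\intercal$ obtained by deleting the rows and columns occupying the positions of the value entries $h_i(\bm{x})$, $h_i(\bm{y})$. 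In the interleaved ordering (\ref{Vector}) each of the $n$ length-$2(n+1)$ blocks has its two value entries $h_i(\bm{x})$ and $h_i(\bm{y})$ in local positions $1$ and $n+2$, so the deleted indices are precisely those $\equiv 1\pmod{n+1}$; this is exactly the description of $\bold{\Omega}$ in the statement. Since a permutation of integration variables has Jacobian $\pm1$ the value of the integral is unaffected, $\prod_{i=1}^{2}|\det\bm{\xi}^i|$ is literally $|\det\bm{\xi}|\,|\det\bm{\eta}|$, and positive-definiteness of the covariance matrix is preserved by conjugation with $P$, so the hypothesis of Proposition~\ref{KAC_RICE_MULTI_DENSITY} holds in either ordering. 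Assembling the two steps yields (\ref{K1}).

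There is no deep obstacle here: the proof is essentially the observation that every ingredient of the Kac-Rice integral ($\det\bold{C}$, the Gaussian weight, and the factors $|\det\bm{\xi}^i|$) is invariant under permuting the coordinates of the underlying Gaussian vector. The one point that requires genuine care is the bookkeeping in the previous paragraph: verifying that, under the specific interleaving (\ref{Vector}), the rows and columns of $\bold{C}^{-1}$ that must be removed to form $\bold{\Omega}$ — namely those corresponding to the scalar values $h_i(\bm{x})$ and $h_i(\bm{y})$ rather than the gradients — are exactly the indices congruent to $1$ modulo $n+1$. The analogous $m=1$ statement, Proposition~\ref{KacRiceFormula_density}, follows in the same way.
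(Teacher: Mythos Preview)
Your proposal is correct and follows essentially the same approach as the paper: the paper presents Proposition~\ref{KacRiceFormula} explicitly as a ``simplified rephrasing'' of the general Kac-Rice formula (Proposition~\ref{KAC_RICE_MULTI_DENSITY}) in the case $m=2$, obtained by reordering the random vector~(\ref{Vector_General}) into the interleaved form~(\ref{Vector}) and then dividing by $\rho(\bm{x})\rho(\bm{y})$ via~(\ref{EQN_MPOINT_DENSITY_VS_CORR}). Your bookkeeping verifying that the deleted indices are exactly those congruent to $1\pmod{n+1}$ is more explicit than what the paper writes, but the argument is the same.
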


\section{Calculation of the covariance matrices, their inverses, and $\bm{\Omega}$} \label{SEC:COVARIANCE}

Let $\bold{C}_{n,d} \equiv \bold{C}_{n,d}(t)$ and $\bold{C}_n \equiv
\bold{C}_n(t)$ be the covariance matrix for vector (\ref{Vector}) applied to
$\bm{f}_d$ (Equation \ref{AffineEnsemble}) and $\bm{f}$ (Equation \ref{GAF}), respectively, at the points 
\begin{align}\label{EQN:DEF_XY}
\bm{x} = \left(0,\ldots,0,-\frac{t}{2}\right) \qquad \mbox{and} \qquad  \bm{y} = \left(0,\ldots,0,\frac{t}{2}\right).
\end{align}

\begin{lemma} \label{COV_MATRIX_CALC}
Both $\bold{C}_{n,d}$ and $\bold{C}_n$ are of the form 
\begin{align}\label{EQN_GENERIC_C}
\bold{C} := \diag_n(\tilde{\bold{C}}), \qquad  \mbox{with} \qquad 
\bold{\tilde{C}}= \left[ \begin{matrix} \bold{A}_+ & \bold{B}^{\intercal} \\
\bold{B}  &  \bold{A}_- \\ \end{matrix} \right],
\end{align}
 where $\bold{A}_\pm$ and $\bold{B}$ are the following $(n+1) \times (n+1)$ matrices:
 \begin{align}\label{EQN_GENERIC_C_DETAILS}  \bold{A}_\pm = \left[ \begin{matrix}
\alpha &  0  & \dots & 0 & \pm \delta \\
0  &  \beta & \ddots &  & 0 \\
\vdots & \ddots & \ddots & \ddots & \vdots \\
0 &   & \ddots & \beta & 0\\
\pm \delta & 0 & \dots & 0 & \gamma \end{matrix}\right],  \quad
 \bold{B} = \left[ \begin{matrix}
\mu &  0  & \dots & 0 & \nu \\
0  &  \eta & \ddots &  & 0 \\
\vdots & \ddots & \ddots & \ddots & \vdots \\
0 &   & \ddots & \eta & 0\\
-\nu & 0 & \dots & 0 & \tau \end{matrix}\right]
\end{align}
and where $\alpha, \beta, \delta, \gamma, \mu, \eta, \nu,$ and  $\tau$ are functions of $d$ and $t$ expressed in (\ref{ALPHA_SO}-\ref{TAU_SO}) for $\bold{C}_{n,d}$ and the functions of $t$ expressed in  (\ref{ALPHA_ISO}-\ref{TAU_ISO}) for $\bold{C}_n$.  
\end{lemma}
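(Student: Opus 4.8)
The plan is to reduce both $\bold{C}_{n,d}$ and $\bold{C}_n$ to the single $(2n+2)\times(2n+2)$ block $\tilde{\bold{C}}$, to identify the entries of $\tilde{\bold{C}}$ with the underlying covariance kernel and its first and second partial derivatives evaluated at the two points $\bm{x}$ and $\bm{y}$, and then to extract the sparsity pattern $(\ref{EQN_GENERIC_C_DETAILS})$ from the rotational symmetry of the kernel about the line through $\bm{x}$ and $\bm{y}$.

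First, because the $n$ components of each ensemble are chosen iid, the covariance matrix of the vector $(\ref{Vector})$ breaks into $n$ equal diagonal blocks --- this is exactly why the components are interleaved the way they are in $(\ref{Vector})$ --- so $\bold{C}=\diag_n(\tilde{\bold{C}})$, where $\tilde{\bold{C}}$ is the covariance matrix of $(h(\bm{x}),\nabla h(\bm{x}),h(\bm{y}),\nabla h(\bm{y}))$ for a single component $h$. For the $\isoR$-invariant ensemble I would, as in Proposition $\ref{InvarianceGAF}$, work with $g_i(\bm{x})=\mathrm{e}^{-\frac12\|\bm{x}\|^2}f_i(\bm{x})$, which has the same zeros as $f_i$ and, by $(\ref{Covariance})$, covariance kernel $\Pi(\bm{x},\bm{y})=\mathrm{e}^{-\frac12\|\bm{x}-\bm{y}\|^2}$. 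For the affine form $(\ref{AffineEnsemble})$ of the $SO(n+1)$-invariant ensemble, the same multinomial computation as in the proof of Lemma $\ref{InvarianceSOn}$ gives covariance kernel $\Pi_d(\bm{x},\bm{y})=(1+\bm{x}\cdot\bm{y})^d$ (one may instead use the normalization $(1+\|\bm{x}\|^2)^{-d/2}(1+\|\bm{y}\|^2)^{-d/2}(1+\bm{x}\cdot\bm{y})^d$; since this only rescales $h$ it changes the eight scalars but not the correlation function, so I would keep the unnormalized kernel).

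Next, differentiating under the expectation identifies the entries of $\tilde{\bold{C}}$ with the kernel and its first and second partial derivatives evaluated at the pairs $(\bm{x},\bm{x})$, $(\bm{y},\bm{y})$ and $(\bm{y},\bm{x})$: in the block form $(\ref{EQN_GENERIC_C})$, the block $\bold{A}_+$ is the covariance of $(h(\bm{x}),\nabla h(\bm{x}))$, the block $\bold{A}_-$ that of $(h(\bm{y}),\nabla h(\bm{y}))$, and $\bold{B}$ the cross-covariance of the latter with the former. The shape $(\ref{EQN_GENERIC_C_DETAILS})$ then follows from symmetry. Both kernels are functions of $\|\bm{x}\|^2$, $\|\bm{y}\|^2$ and $\bm{x}\cdot\bm{y}$ alone, so $\Pi(O\bm{x},O\bm{y})=\Pi(\bm{x},\bm{y})$ for every $O\in O(n)$; I would apply this with $O$ ranging over the stabilizer of the unordered pair $\{\bm{x},\bm{y}\}$ of $(\ref{EQN:DEF_XY})$, which is generated by the sign changes and permutations of the first $n-1$ coordinates --- each of which fixes both $\bm{x}$ and $\bm{y}$ --- together with the reflection $\sigma\colon x_n\mapsto -x_n$, which interchanges $\bm{x}$ and $\bm{y}$. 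Invariance of $\tilde{\bold{C}}$ under these maps forces: every entry involving exactly one first-order derivative in a transverse direction $i<n$ vanishes (via the sign change in $x_i$); the $n-1$ transverse diagonal entries coincide within $\bold{A}_\pm$ (giving $\beta$) and within $\bold{B}$ (giving $\eta$); the entries tied to the $n$-th direction survive, giving $\delta$ and $\gamma$ from $(\bm{x},\bm{x})$ and $\mu,\nu,\tau$ from $(\bm{y},\bm{x})$; and the reflection $\sigma$, which swaps the two points and negates $\partial_{x_n}$ and $\partial_{y_n}$, shows that $\bold{A}_-$ is $\bold{A}_+$ with $\delta$ replaced by $-\delta$, that $\bold{B}$ carries $+\nu$ and $-\nu$ in its two corners, and that the diagonal data $\alpha=\Pi(\bm{x},\bm{x})$, $\beta$, $\gamma$ are unchanged when computed at $\bm{y}$ instead of $\bm{x}$. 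This is precisely $(\ref{EQN_GENERIC_C})$--$(\ref{EQN_GENERIC_C_DETAILS})$. It then remains only to substitute $\bm{x}=(0,\dots,0,-t/2)$, $\bm{y}=(0,\dots,0,t/2)$ and compute the eight scalars from $\Pi_d$ and from $\Pi$, yielding the formulas $(\ref{ALPHA_SO})$--$(\ref{TAU_SO})$ and $(\ref{ALPHA_ISO})$--$(\ref{TAU_ISO})$.

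The one step that requires care is this last evaluation for the degree-$d$ kernel. Differentiating $(1+\bm{x}\cdot\bm{y})^d$ once and twice introduces factors $d$ and $d(d-1)$ together with powers $(1+\bm{x}\cdot\bm{y})^{d-1}$ and $(1+\bm{x}\cdot\bm{y})^{d-2}$, which must then be evaluated at $\bm{u}\cdot\bm{w}=t^2/4$ for the diagonal blocks $\bold{A}_\pm$ (since $\bm{x}\cdot\bm{x}=\bm{y}\cdot\bm{y}=t^2/4$) and at $\bm{u}\cdot\bm{w}=-t^2/4$ for the cross block $\bold{B}$ (since $\bm{x}\cdot\bm{y}=-t^2/4$); one also has to keep straight which argument of the kernel carries which coordinate so that the signs of the corner entries in $\bold{B}$ and in $\bold{A}_-$ come out as in $(\ref{EQN_GENERIC_C_DETAILS})$. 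By contrast the $\isoR$ computation, where $\Pi(\bm{x},\bm{y})=\mathrm{e}^{-\frac12\|\bm{x}-\bm{y}\|^2}$ depends only on $\|\bm{x}-\bm{y}\|$, is immediate: one gets $\delta=0$ and $\alpha=\beta=\gamma=1$.
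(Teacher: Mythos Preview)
Your overall strategy is sound and in fact a bit more conceptual than the paper's. The paper simply writes down the four general covariance identities for $(1+\bm{z}\cdot\bm{w})^d$ and for $e^{\bm{z}\cdot\bm{w}}$, substitutes $\bm{x}=(0,\dots,0,-t/2)$ and $\bm{y}=(0,\dots,0,t/2)$, and reads off which entries vanish and which coincide. Your route --- deducing the sparsity pattern $(\ref{EQN_GENERIC_C_DETAILS})$ from the $O(n)$-stabilizer of the unordered pair $\{\bm{x},\bm{y}\}$ (sign changes and permutations in the first $n-1$ coordinates, plus the reflection $x_n\mapsto -x_n$) --- is correct and explains \emph{why} the pattern looks the way it does, rather than merely verifying it. The block-diagonal reduction $\bold{C}=\diag_n(\tilde{\bold{C}})$ from independence of the $n$ components is the same as in the paper.

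There is, however, one concrete error in your final step. The matrix $\bold{C}_n$ is defined in the paper as the covariance matrix for the vector $(\ref{Vector})$ applied to $\bm{f}$ itself, not to the rescaling $\bm{g}$. The lemma asserts that the eight scalars for $\bold{C}_n$ are exactly those in $(\ref{ALPHA_ISO})$--$(\ref{TAU_ISO})$, which come from the kernel $E(f(\bm{z})f(\bm{w}))=e^{\bm{z}\cdot\bm{w}}$ (so $\alpha=e^{t^2/4}$, $\delta=-\tfrac{t}{2}e^{t^2/4}$, $\gamma=(1+t^2/4)e^{t^2/4}$, etc.). By switching to $g$ with kernel $e^{-\frac12\|\bm{x}-\bm{y}\|^2}$ you obtain instead $\alpha=\beta=\gamma=1$ and $\delta=0$ --- these are the values the paper computes later in $(\ref{GCOV_FIRST})$--$(\ref{GCOV_MIDDLE})$ for the \emph{different} matrix $\bold{C}_{n,\bm{g}}$, not the values the lemma claims for $\bold{C}_n$. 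The fix is immediate: keep $f$ and the kernel $e^{\bm{z}\cdot\bm{w}}$ for the $\isoR$ case, just as you kept the unnormalized $(1+\bm{z}\cdot\bm{w})^d$ for the $SO(n+1)$ case; then the evaluation at $\bm{x}\cdot\bm{x}=\bm{y}\cdot\bm{y}=t^2/4$ and $\bm{x}\cdot\bm{y}=-t^2/4$ yields $(\ref{ALPHA_ISO})$--$(\ref{TAU_ISO})$ directly.
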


\begin{proof}

Since the coefficients of $f_{d,i}$ and $f_{d,j}$ (respectively $f_i$ and
$f_j$) are independent when $i\ne j$, only the entries of $\bold{C}_{n,d}$
(respectively $\bold{C}_n$) with $i = j$ will have nonzero values.
Thus, the covariance matrices will have the following
block-diagonal structure:
 \begin{equation}
 \bold{C}_{n,d} = {\rm diag}_n (\bold{\tilde{C}_{n,d}}) \quad \mbox{and} \quad \bold{C}_{n} = {\rm diag}_n (\bold{\tilde{C}_{n}}),
 \end{equation}
 where $\bold{\tilde{C}_{n,d}}$ corresponds to the first $2n+2$ entries of $\bm{v}$ (and similarly for $\bold{\tilde{C}_{n}}$).
These entries correspond to $f_{d,1}$ and $f_1$, respectively. For ease of notation, we'll drop the subscript $1$: $f_d \equiv f_{d,1}$, $f \equiv f_1$.

For any $\bm{z},\bm{w} \in \mathbb{R}^n$  we have:
\begin{align}
E(f_d(\bm{z}) f_d(\bm{w})) &= \left(1+\bm{z} \cdot \bm{w} \right)^d, \label{SO_FIRST}\\ 
E\left(f_d(\bm{z}) \frac{\partial f_d(\bm{w})}{\partial w_i} \right) &=  \frac{\partial E(f_d(\bm{z}) f_d(\bm{w}))}{\partial w_i} = d \ z_i  \left(1+\bm{z} \cdot \bm{w} \right)^{d-1},  \\
E\left(\frac{\partial f_d(\bm{z})}{\partial z_i} \frac{\partial
f_d(\bm{w})}{\partial w_i} \right) &= \frac{\partial^2 E(f_d(\bm{z})
f_d(\bm{w}))}{\partial z_i \partial w_i}  = d \left(1+\bm{z} \cdot \bm{w}
\right)^{d-1} +  d(d-1) z_i w_i \left(1+\bm{z} \cdot \bm{w} \right)^{d-2}, \quad \mbox{and} \\
E\left(\frac{\partial f_d(\bm{z})}{\partial z_i} \frac{\partial f_d(\bm{w})}{\partial w_j} \right) &= \frac{\partial^2 E(f_d(\bm{z}) f_d(\bm{w}))}{\partial z_i \partial w_j} = d(d-1) z_j w_i  \left(1+\bm{z} \cdot \bm{w} \right)^{d-2} \quad \mbox{for} \quad i \neq j. \label{SO_LAST}
\end{align}
Recalling that $\bm{x} = \left(0,\ldots,0,-\frac{t}{2}\right)$ and $\bm{y} = \left(0,\ldots,0,\frac{t}{2}\right)$ we now use expressions (\ref{SO_FIRST}-\ref{SO_LAST}) to compute that 
the only non-zero covariances in $\bold{\tilde{C}_{n,d}}$ are
\begin{align}
\alpha &:= E(f_d(\bm{x}) f_d(\bm{x})) = E(f_d(\bm{y}) f_d(\bm{y})) = \left( 1+\frac{{t}^{2}}{4} \right) ^{d}, \label{ALPHA_SO} \\
\delta &:= E\left(f_d(\bm{x}) \frac{\partial f_d(\bm{x})}{\partial x_n} \right) -E\left(f_d(\bm{y})  \frac{\partial f_d(\bm{y})}{\partial y_n} \right) = -d \frac{t}{2}  \left(1+\frac{{t}^{2}}{4}\right)^{d-1},\\ 
\beta &:= E\left(\frac{\partial f_d(\bm{y})}{\partial y_i} \frac{\partial
f_d(\bm{y})}{\partial y_i} \right)  = E\left(\frac{\partial f_d(\bm{y})}{\partial y_i} \frac{\partial
f_d(\bm{y})}{\partial y_i} \right) = d \left( 1+\frac{{t}^{2}}{4} \right)
^{d-1} \quad \mbox{for $i \neq n$}, \\
\gamma &:= E\left(\frac{\partial f_d(\bm{x})}{\partial x_n} \frac{\partial f_d(\bm{x})}{\partial x_n} \right)  = E\left(\frac{\partial f_d(\bm{y})}{\partial y_n} \frac{\partial f_d(\bm{y})}{\partial y_n} \right)   = d \left( 1+\frac{{t}^{2}}{4} \right) ^{d-1}+\frac{1}{4}d \left( d-1 \right) {t}^{
2} \left( 1+\frac{{t}^{2}}{4} \right) ^{d-2}, \\
\mu &:= E(f_d(\bm{x}) f_d(\bm{y})) = \left(1+\bm{x} \cdot \bm{y} \right)^d = \left( 1-\frac{{t}^{2}}{4} \right) ^{d}, \\
\nu &:= E\left(f_d(\bm{x}) \frac{\partial f_d(\bm{y})}{\partial y_n} \right) =- E\left(f_d(\bm{y}) \frac{\partial f_d(\bm{x})}{\partial x_n} \right)  = - d \frac{t}{2}  \left(1 -\frac{{t}^{2}}{4}  \right)^{d-1},\\
\eta &:= E\left(\frac{\partial f_d(\bm{x})}{\partial x_i} \frac{\partial f_d(\bm{y})}{\partial y_i} \right) = d \left(1-\frac{t^2}{4} \right)^{d-1} \quad \mbox{for $i \neq n$}, \quad \mbox{and}\\
\tau &:=  E\left(\frac{\partial f_d(\bm{x})}{\partial x_n} \frac{\partial
f_d(\bm{y})}{\partial y_n} \right) = 
d \left(1-\frac{{t}^{2}}{4} \right) ^{d-1}-\frac{1}{4}d \left( d-1 \right) {t}^ {2}
\left(1-\frac{{t}^{2}}{4} \right) ^{d-2}. \label{TAU_SO} 
\end{align}
This proves that $\bold{C_{n,d}}$ has the structure stated in Lemma \ref{COV_MATRIX_CALC}.

For $\isoR$-invariant ensemble $\bm{f}$ and any $\bm{z}, \bm{w} \in \mathbb{R}^n$ we have:
\begin{eqnarray}
E(f(\bm{z}) f(\bm{w})) &=& e^{\bm{z} \cdot \bm{w}}, \label{ISO_FIRST} \\
E\left(f(\bm{z}) \frac{\partial f(\bm{w})}{\partial w_i} \right) &=& \frac{\partial E(f(\bm{z}) f(\bm{w}))}{\partial w_i} = z_i e^{\bm{z} \cdot \bm{w}}, \\
E\left(\frac{\partial f(\bm{z})}{\partial z_i} \frac{\partial f(\bm{w})}{\partial w_i} \right) &=& \frac{\partial^2 E(f(\bm{z}) f(\bm{w}))}{\partial z_i \partial w_i} = (1+z_i w_i) e^{\bm{z} \cdot \bm{w}}, \mbox{and} \\
E\left(\frac{\partial f(\bm{z})}{\partial z_i} \frac{\partial f(\bm{w})}{\partial w_j} \right) &=& \frac{\partial^2 E(f(\bm{z}) f(\bm{w}))}{\partial z_i \partial w_j} = z_j w_i e^{\bm{z} \cdot \bm{w}} \quad 
\mbox{for $i \neq j$ } \label{ISO_LAST}.
\end{eqnarray}
We now use (\ref{ISO_FIRST}-\ref{ISO_LAST}) to compute that
the only non-zero covariances in $\bold{\tilde{C}_{n}}$ are
\begin{eqnarray}
\alpha &:=&  E(f(\bm{x}) f(\bm{x})) = E(f(\bm{y}) f(\bm{y})) = e^{\frac{t^2}{4}}, \label{ALPHA_ISO} \\
\delta &:=& E\left(f(\bm{x}) \frac{\partial f(\bm{x})}{\partial x_n} \right) = -E\left(f(\bm{y}) \frac{\partial f(\bm{y})}{\partial y_n} \right)  = -\frac{t}{2}  e^{\frac{t^2}{4}}, \\
\beta &:=&  E\left(\frac{\partial f(\bm{y})}{\partial y_i} \frac{\partial f(\bm{y})}{\partial y_i} \right) = E\left(\frac{\partial f(\bm{y})}{\partial y_i} \frac{\partial f(\bm{y})}{\partial y_i} \right) =  e^{\frac{t^2}{4}} \quad \mbox{for $i \neq n$}, \\
\gamma &:=& E\left(\frac{\partial f(\bm{x})}{\partial x_n} \frac{\partial f(\bm{x})}{\partial x_n} \right) = E\left(\frac{\partial f(\bm{y})}{\partial y_n} \frac{\partial f(\bm{y})}{\partial y_n}\right) = \left(1+\frac{t^2}{4}\right)e^{\frac{t^2}{4}}, \\
\mu &:=& E(f(\bm{x}) f(\bm{y})) = e^{-\frac{t^2}{4}}, \\
\nu &:=& E\left(f(\bm{x}) \frac{\partial f(\bm{y})}{\partial y_n} \right) = - E\left(f(\bm{y}) \frac{\partial f(\bm{x})}{\partial x_n}\right) =  -\frac{t}{2}  e^{-\frac{t^2}{4}}, \\
\eta &:=& E\left(\frac{\partial f(\bm{x})}{\partial x_i} \frac{\partial f(\bm{y})}{\partial y_i} \right) = e^{-\frac{t^2}{4}} \quad \mbox{for $i \neq n$}, \quad  \mbox{and} \\
\tau &:=& E\left(\frac{\partial f(\bm{x})}{\partial x_n} \frac{\partial f(\bm{y})}{\partial y_n} \right) = \left(1-\frac{t^2}{4}\right) e^{-\frac{t^2}{4}}. \label{TAU_ISO}
\end{eqnarray}
This proves that $\bold{C}_{n}$ also has the structure stated in Lemma \ref{COV_MATRIX_CALC}.
\end{proof}

To apply the Kac-Rice formula to compute $K_{n,d}(\bm{x},\bm{y})$ and $K_n(\bm{x},\bm{y})$ for the values of
$\bm{x}$ and $\bm{y}$ given in (\ref{EQN:DEF_XY}) we will need to compute $\det
\bold{C_{n,d}}$, $\det\bold{C_n}$, $\bold{\Omega_{n,d}}$, $\bold{\Omega_{n}}$
and the diagonalizations of  $\bold{\Omega_{n,d}}$ and $\bold{\Omega_{n}}$.  
Here, $\bold{\Omega_{n,d}}$ and $\bold{\Omega_{n}}$ are the matrices obtained
from $\bold{C}_{n,d}^{-1}$ and $\bold{C}_n^{-1}$, respectively, by deleting all
of the rows and columns whose indices are congruent  1 modulo $n+1$, as in the
Kac-Rice formula.

By Lemma \ref{COV_MATRIX_CALC} we can do all of these calculations in terms of
the generic form $\bold{C}$ given in (\ref{EQN_GENERIC_C}) and then substitute
in the values of $\alpha$ through $\tau$ from (\ref{ALPHA_SO}-\ref{TAU_SO}) and
(\ref{ALPHA_ISO}-\ref{TAU_ISO}) accordingly.

\vspace{0.1in}

The determinant of $\bold{C}$ is
\begin{align}\label{DETC_GENERAL}
\det(\bold{C}) = \left( \beta^2-\eta^2 \right)^{n(n-1)} \left(
\alpha\,\gamma-\alpha\,\tau-{\delta}^{2}-2\,\delta\,\nu+\gamma\,\mu-
\mu\,\tau-{\nu}^{2} \right)^n  \left( \alpha\,\gamma+\alpha\,\tau-{
\delta}^{2}+2\,\delta\,\nu-\gamma\,\mu-\mu\,\tau-{\nu}^{2} \right)^n.
\end{align}
Recall that $\bold{C} = \diag_n(\bold{\tilde{C}})$ where $\bold{\tilde{C}}$ is described in (\ref{EQN_GENERIC_C}) and (\ref{EQN_GENERIC_C_DETAILS}).
Applying a suitable permutation to the rows and columns of $\tilde{\bold{C}}$, one obtains a block matrix 
with one $4 \times 4$ block and $n-1$ copies of the same $2\times 2$ block.  
Because of this, $\tilde{\bold{C}}^{-1}$ will have the same block structure and it can readily be computed
to be
$\begin{displaystyle}\bold{\tilde{C}}^{-1}= \left[ \begin{matrix}
\bold{D}_+ & \bold{E}_+ \\
\bold{E}_-  &  \bold{D}_- \\ \end{matrix} \right]
\end{displaystyle}$
 where $\bold{D}_\pm$ and $\bold{E}_\pm$ are the following $(n+1) \times (n+1)$ matrices:
 \begin{align*} \bold{D}_\pm &= 
\left[ \begin {array}{ccccc} {\frac {\alpha\,{\gamma}^{2}-\alpha\,{
\tau}^{2}-{\delta}^{2}\gamma-2\,\delta\,\nu\,\tau-\gamma\,{\nu}^{2}}{\Delta}}
&0&\dots&0& \mp {\frac {\alpha\,\delta\,\gamma+
\alpha\,\nu\,\tau-{\delta}^{3}-\delta\,\mu\,\tau+\delta\,{\nu}^{2}-
\gamma\,\mu\,\nu}{\Delta}} \\ \noalign{\medskip}0&{
\frac {\beta}{{\beta}^{2}-{\eta}^{2}}}&\ddots&0&0\\ \noalign{\medskip} \vdots & \ddots & \ddots
& \ddots & \vdots \\ \noalign{\medskip}0& &\ddots &{
\frac {\beta}{{\beta}^{2}-{\eta}^{2}}}&0\\  \noalign{\medskip} \mp {\frac {
\alpha\,\delta\,\gamma+\alpha\,\nu\,\tau-{\delta}^{3}-\delta\,\mu\,
\tau+\delta\,{\nu}^{2}-\gamma\,\mu\,\nu}{\Delta}} &0
& \dots &0&{\frac {{\alpha}^{2}\gamma-\alpha\,{\delta}^{2}-\alpha\,{\nu}^{2}
+2\,\delta\,\mu\,\nu-\gamma\,{\mu}^{2}}{\Delta}}
\end {array} \right] \\
 \bold{E}_\pm &=
\left[ \begin {array}{ccccc}  -{\frac {-{\delta}^{2}\tau-2\,\delta\,
\gamma\,\nu+{\gamma}^{2}\mu-\mu\,{\tau}^{2}-{\nu}^{2}\tau}{\Delta}} &0&\dots&0&  \mp{\frac {\alpha\,\delta\,\tau+\alpha\,\gamma\,
\nu+{\delta}^{2}\nu-\delta\,\gamma\,\mu-\mu\,\nu\,\tau-{\nu}^{3}}{\Delta}}
\\ \noalign{\medskip}0& -{\frac {\eta}{{
\beta}^{2}-{\eta}^{2}}} &\ddots&0&0\\ \noalign{\medskip} \vdots & \ddots & \ddots
& \ddots & \vdots \\ \noalign{\medskip}0& &\ddots &
-{\frac {\eta}{{\beta}^{2}-{\eta}^{2}}} &0\\  
\noalign{\medskip} \pm {\frac {\alpha\,\delta\,
\tau+\alpha\,\gamma\,\nu+{\delta}^{2}\nu-\delta\,\gamma\,\mu-\mu\,\nu
\,\tau-{\nu}^{3}}{\Delta}} &0
& \dots &0&
-{\frac {{\alpha}^{
2}\tau+2\,\alpha\,\delta\,\nu-{\delta}^{2}\mu-{\mu}^{2}\tau-\mu\,{\nu}
^{2}}{\Delta}}
\end {array} \right] 
\end{align*}

where
{\small
\begin{align}
\Delta = { \alpha}^{2}{\gamma}^{2}-{\alpha}^{2}{\tau}^{2}-2\,\alpha\,{\delta}^{2} \gamma-4\,\alpha\,\delta\,\nu\,\tau-2\,\alpha\,\gamma\,{\nu}^{2}+{ \delta}^{4}+2\,{\delta}^{2}\mu\,\tau-2\,{\delta}^{2}{\nu}^{2}+4\, \delta\,\gamma\,\mu\,\nu-{\gamma}^{2}{\mu}^{2}+{\mu}^{2}{\tau}^{2}+2\, \mu\,{\nu}^{2}\tau+{\nu}^{4}.
\end{align}
}

If $\bm{\Omega}$ is obtained from $\bold{C}^{-1}$
by deleting all of the rows and columns whose indices are congruent  1 modulo $n+1$, as in the Kac-Rice formula, we have

\begin{lemma} \label{OmegaLem}
$\begin{displaystyle}\bm{\Omega}=\diag_n\left(\bm{\tilde{\Omega}}\right)\end{displaystyle}$ with $\begin{displaystyle}\bm{\tilde{\Omega}}= \left[ \begin{matrix}
\bm{\tilde{\Omega}}_{\bold{1,1}} & \bm{\tilde{\Omega}_{1,2}} \\
\bm{\tilde{\Omega}_{2,1}}  & \bm{\tilde{\Omega}_{2,2}} \\ \end{matrix} \right]\end{displaystyle}$,
where:
\begin{align} 
\bm{\tilde{\Omega}_{1,1}}&=\bm{\tilde{\Omega}_{2,2}}=\diag \left(\underbrace{\frac {\beta}{{\beta}^{2}-{\eta}^{2}}, \dots,\frac {\beta}{{\beta}^{2}-{\eta}^{2}}}_{n-1 \ \rm{ times}},
{\frac {{\alpha}^{2}\gamma-\alpha\,{\delta}^{2}-\alpha\,{\nu}^{2}
+2\,\delta\,\mu\,\nu-\gamma\,{\mu}^{2}}{\Delta}}\right), \\ 
\bm{\tilde{\Omega}_{1,2}}&=\bm{\tilde{\Omega}_{2,1}}=\diag\left(\underbrace{-{\frac {\eta}{{\beta}^{2}-{\eta}^{2}}},\dots,-{\frac {\eta}{{\beta}^{2}-{\eta}^{2}}}}_{n-1 \ \rm{times}}, -{\frac {{\alpha}^{
2}\tau+2\,\alpha\,\delta\,\nu-{\delta}^{2}\mu-{\mu}^{2}\tau-\mu\,{\nu}
^{2}}{\Delta}} \right).
\end{align}
 \end{lemma}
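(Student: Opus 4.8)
The plan is to push the block-diagonal structure of $\bold{C}$ through both the inversion and the index-deletion that defines $\bm{\Omega}$, reducing the whole computation to a single $(2n+2)\times(2n+2)$ block. First I would note that, since $\bold{C}=\diag_n(\tilde{\bold{C}})$ by Lemma \ref{COV_MATRIX_CALC}, we have $\bold{C}^{-1}=\diag_n(\tilde{\bold{C}}^{-1})$, and that deleting from $\bold{C}^{-1}$ every row and column whose index is $\equiv 1\pmod{n+1}$ respects this block structure: inside the $i$-th copy of $\tilde{\bold{C}}^{-1}$, which occupies the coordinates $(i-1)(2n+2)+1,\dots,i(2n+2)$ of the vector $\bm{v}$ from (\ref{Vector}), the local positions that get struck are precisely $1$ and $n+2$, i.e.\ the two entries $h_i(\bm{x})$ and $h_i(\bm{y})$. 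Hence $\bm{\Omega}=\diag_n(\tilde{\bm{\Omega}})$, where $\tilde{\bm{\Omega}}$ is the $2n\times 2n$ matrix obtained from the explicit block matrix $\tilde{\bold{C}}^{-1}=\left[\begin{matrix}\bold{D}_+ & \bold{E}_+ \\ \bold{E}_- & \bold{D}_-\end{matrix}\right]$ displayed above by deleting its rows and columns numbered $1$ and $n+2$.

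I would then carry out this deletion directly on the displayed formulas. The index $1$ is the first row/column of $\bold{D}_+$ and of the off-diagonal blocks $\bold{E}_+,\bold{E}_-$, while the index $n+2$ is the first row/column of $\bold{D}_-$ and again of $\bold{E}_+,\bold{E}_-$. Striking the first row and column of $\bold{D}_\pm$ removes the only nonzero off-diagonal entries present, namely the two corner terms in positions $(1,n+1)$ and $(n+1,1)$, and leaves the $n\times n$ diagonal matrix with $n-1$ entries equal to $\tfrac{\beta}{\beta^2-\eta^2}$ followed by $\tfrac{\alpha^2\gamma-\alpha\delta^2-\alpha\nu^2+2\delta\mu\nu-\gamma\mu^2}{\Delta}$; since the $\mp$ corner entries are exactly the ones removed, $\bold{D}_+$ and $\bold{D}_-$ yield the identical matrix, so $\tilde{\bm{\Omega}}_{1,1}=\tilde{\bm{\Omega}}_{2,2}$. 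The same computation applied to $\bold{E}_\pm$ gives $\tilde{\bm{\Omega}}_{1,2}=\tilde{\bm{\Omega}}_{2,1}$ equal to the $n\times n$ diagonal matrix with $n-1$ entries equal to $-\tfrac{\eta}{\beta^2-\eta^2}$ followed by $-\tfrac{\alpha^2\tau+2\alpha\delta\nu-\delta^2\mu-\mu^2\tau-\mu\nu^2}{\Delta}$, again precisely because the $\pm$ corner entries are deleted. Assembling these four blocks yields the claimed form of $\tilde{\bm{\Omega}}$.

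To keep the argument self-contained I would also insert a short justification of the formula for $\tilde{\bold{C}}^{-1}$ used above: conjugating $\tilde{\bold{C}}$ by the permutation that collects the four coordinates $h_i(\bm{x}),\,\partial_n h_i(\bm{x}),\,h_i(\bm{y}),\,\partial_n h_i(\bm{y})$ into one $4\times4$ block and the remaining pairs $\partial_j h_i(\bm{x}),\,\partial_j h_i(\bm{y})$ with $j<n$ into $n-1$ identical $2\times2$ blocks having $\beta$ on the diagonal and $\eta$ off it, one inverts each block separately (the $2\times2$ blocks invert to $(\beta^2-\eta^2)^{-1}$ times the matrix with $\beta$ on the diagonal and $-\eta$ off it, and the $4\times4$ block is inverted by Cramer's rule, producing the common denominator $\Delta$), then conjugates back. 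I expect the only real obstacle to be clerical: matching each absolute index in the length-$2n(n+1)$ vector with its local position inside a $(2n+2)$-block, and checking that every entry responsible for a sign discrepancy between the $+$ and $-$ versions of $\tilde{\bold{C}}^{-1}$ lies among the deleted rows and columns. Once that is verified, the equalities $\tilde{\bm{\Omega}}_{1,1}=\tilde{\bm{\Omega}}_{2,2}$ and $\tilde{\bm{\Omega}}_{1,2}=\tilde{\bm{\Omega}}_{2,1}$ follow with no further work.
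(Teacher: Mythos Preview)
Your proposal is correct and follows exactly the route the paper takes: the paper displays $\tilde{\bold{C}}^{-1}=\left[\begin{smallmatrix}\bold{D}_+ & \bold{E}_+\\ \bold{E}_- & \bold{D}_-\end{smallmatrix}\right]$ explicitly (justified by the same permutation into one $4\times4$ block and $n-1$ identical $2\times2$ blocks that you describe) and then states Lemma~\ref{OmegaLem} without further proof, leaving the reader to strike rows and columns $1$ and $n+2$ from each displayed block. Your write-up simply supplies those details, including the key observation that the only entries distinguishing the $+$ and $-$ versions of $\bold{D}_\pm$ and $\bold{E}_\pm$ sit in the deleted corners, so nothing beyond the paper's own argument is needed.
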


We notice that there exists a permutation matrix $\bold{Q}$ such that {\begin{equation} \label{Omega} \bold{M}:= \bold{Q^\intercal }\bm{\Omega}\bold{Q}=\diag\left(\underbrace{\underbrace{\bold{M}_1,\dots,\bold{M}_1}_{n-1 \ {\rm times}},\bold{M}_2,\dots,\underbrace{\bold{M}_1,\dots,\bold{M}_1}_{n-1 \ {\rm times}},\bold{M}_2}\limits_{n \ {\rm times}} \right).\end{equation}}
\begin{displaymath} \mbox{where} \quad \bold{M}_1=\left[\begin{matrix}
\frac {\beta}{{\beta}^{2}-{\eta}^{2}} & -{\frac {\eta}{{\beta}^{2}-{\eta}^{2}}}  \\
-{\frac {\eta}{{\beta}^{2}-{\eta}^{2}}} & \frac {\beta}{{\beta}^{2}-{\eta}^{2}} \\ \end{matrix}  \right]  \quad \mbox{and} \quad \bold{M}_2=\left[\begin{matrix}
{\frac {{\alpha}^{2}\gamma-\alpha\,{\delta}^{2}-\alpha\,{\nu}^{2}
+2\,\delta\,\mu\,\nu-\gamma\,{\mu}^{2}}{\Delta}} &   -{\frac {{\alpha}^{
2}\tau+2\,\alpha\,\delta\,\nu-{\delta}^{2}\mu-{\mu}^{2}\tau-\mu\,{\nu}
^{2}}{\Delta}}  \\
 -{\frac {{\alpha}^{
2}\tau+2\,\alpha\,\delta\,\nu-{\delta}^{2}\mu-{\mu}^{2}\tau-\mu\,{\nu}
^{2}}{\Delta}} & 
{\frac {{\alpha}^{2}\gamma-\alpha\,{\delta}^{2}-\alpha\,{\nu}^{2}
+2\,\delta\,\mu\,\nu-\gamma\,{\mu}^{2}}{\Delta}} \\ \end{matrix}\right].
\end{displaymath}

\begin{lemma} \label{DiagonalizingMatrix}
We can orthogonally diagonalize $\bold{\Omega}$ with $\bold{Q} \bold{P}$, where $\bold{Q}$ is the permutation matrix described in the previous paragraph and 
$\begin{displaystyle}\label{DiagonalizingMatrixEQN} \bold{P} = \diag_{n^2}\left(\left[\begin{matrix} -\frac{\sqrt{2}}{2} & \frac{\sqrt{2}}{2} \\ \frac{\sqrt{2}}{2} & \frac{\sqrt{2}}{2} \end{matrix}\right] \right)\end{displaystyle}$, obtaining
\begin{equation}\label{EQN_LAMBDA} 
\bm{\Lambda}:=  \bold{QP}^{\intercal}  \bold{\Omega} \bold{Q} \bold{P}  =  \bold{P}^{\intercal} \bold{M} \bold{P}= \diag\left(\underbrace{\underbrace{\lambda_1, \lambda_2, \dots, \lambda_1, \lambda_2}_{n-1 {\ \rm times}}, \lambda_3, \lambda_4, \dots,\underbrace{\lambda_1, \lambda_2, \dots, \lambda_1, \lambda_2}_{n-1 {\ \rm times}}, \lambda_3, \lambda_4}_{n {\ \rm {times}}} \right), \end{equation}
where 
\begin{align}\label{EQN:GEN_EIG}
 \lambda_1 &= \frac{1}{\beta-\eta}, \qquad   \lambda_2= \frac{1}{\beta+\eta}, \qquad 
 \lambda_3 = {\frac {\alpha+\mu}{\alpha\,\gamma-\alpha\,\tau-{\delta}^{2}-2\,\nu\, \delta+\gamma\,\mu-\mu\,\tau-{\nu}^{2}}}, \\ \quad & \mbox{and} \quad  \lambda_4 = {\frac {\alpha-\mu}{\alpha\,\gamma+\alpha\,\tau-{\delta}^{2}+2\,\nu\,
\delta-\gamma\,\mu-\mu\,\tau-{\nu}^{2}}}.  \nonumber
\end{align}
\end{lemma}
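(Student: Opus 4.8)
The plan is to reduce the statement to an elementary computation with the $2\times 2$ blocks $\bold{M}_1$ and $\bold{M}_2$ defined just before the lemma. First I would observe that $\bold{Q}\bold{P}$ is orthogonal: $\bold{Q}$ is a permutation matrix, hence orthogonal, and $\bold{P} = \diag_{n^2}(\bold{R})$ where $\bold{R} = \left[\begin{matrix} -\frac{\sqrt{2}}{2} & \frac{\sqrt{2}}{2} \\ \frac{\sqrt{2}}{2} & \frac{\sqrt{2}}{2}\end{matrix}\right]$ has orthonormal columns, so $\bold{R}^{\intercal}\bold{R} = \bold{I}$, hence $\bold{P}^{\intercal}\bold{P} = \bold{I}$, and a product of orthogonal matrices is orthogonal. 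Since $\bold{Q}^{\intercal}\bm{\Omega}\bold{Q} = \bold{M}$ by~(\ref{Omega}), it follows that $(\bold{Q}\bold{P})^{\intercal}\bm{\Omega}(\bold{Q}\bold{P}) = \bold{P}^{\intercal}\bold{M}\bold{P}$, so the task reduces to showing that $\bold{P}^{\intercal}\bold{M}\bold{P}$ equals the diagonal matrix $\bm{\Lambda}$ of~(\ref{EQN_LAMBDA}).

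Both $\bold{M}$ and $\bold{P}$ are block diagonal with $n^2$ blocks of size $2\times 2$ in matching positions, so $\bold{P}^{\intercal}\bold{M}\bold{P}$ is block diagonal with blocks $\bold{R}^{\intercal}\bold{M}_1\bold{R}$ (in the $n(n-1)$ positions occupied by $\bold{M}_1$) and $\bold{R}^{\intercal}\bold{M}_2\bold{R}$ (in the $n$ positions occupied by $\bold{M}_2$), arranged exactly as the $\lambda_i$ are arranged in~(\ref{EQN_LAMBDA}). Each of $\bold{M}_1,\bold{M}_2$ has the form $\left[\begin{matrix} a & b \\ b & a\end{matrix}\right]$, for which $\tfrac{1}{\sqrt{2}}(-1,1)^{\intercal}$ and $\tfrac{1}{\sqrt{2}}(1,1)^{\intercal}$ — precisely the two columns of $\bold{R}$ — are orthonormal eigenvectors with eigenvalues $a-b$ and $a+b$; hence $\bold{R}^{\intercal}\left[\begin{matrix} a & b \\ b & a\end{matrix}\right]\bold{R} = \diag(a-b,\,a+b)$. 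For $\bold{M}_1$, with $a = \beta/(\beta^2-\eta^2)$ and $b = -\eta/(\beta^2-\eta^2)$, factoring $\beta^2-\eta^2 = (\beta-\eta)(\beta+\eta)$ gives $a-b = 1/(\beta-\eta) = \lambda_1$ and $a+b = 1/(\beta+\eta) = \lambda_2$, matching~(\ref{EQN:GEN_EIG}).

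For the remaining block $\bold{R}^{\intercal}\bold{M}_2\bold{R} = \diag(a-b,\,a+b)$, with $a,b$ the two equal-denominator entries of $\bold{M}_2$, the one genuine computation in the proof is the factorization
\begin{equation*}
\Delta = \bigl(\alpha\gamma - \alpha\tau - \delta^2 - 2\nu\delta + \gamma\mu - \mu\tau - \nu^2\bigr)\bigl(\alpha\gamma + \alpha\tau - \delta^2 + 2\nu\delta - \gamma\mu - \mu\tau - \nu^2\bigr),
\end{equation*}
which I would verify by writing $\Delta = S^2 - T^2$ with $S = \alpha\gamma - \delta^2 - \mu\tau - \nu^2$ and $T = \alpha\tau + 2\nu\delta - \gamma\mu$ and expanding both sides (this is the same factorization underlying~(\ref{DETC_GENERAL})); the two factors are precisely the denominators of $\lambda_3$ and $\lambda_4$. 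One then checks, again by direct expansion, that the numerator of $a-b$ equals $(\alpha+\mu)$ times the second factor and the numerator of $a+b$ equals $(\alpha-\mu)$ times the first factor; dividing by $\Delta$ and cancelling one factor each yields $a-b = \lambda_3$ and $a+b = \lambda_4$. (All denominators are nonzero, and in fact $\lambda_1,\dots,\lambda_4 > 0$, because $\bm{\Omega}$ is a principal submatrix of the positive definite matrix $\bold{C}^{-1}$ and is therefore positive definite.) Assembling the blocks gives $\bold{P}^{\intercal}\bold{M}\bold{P} = \bm{\Lambda}$, which is the assertion. The main obstacle is simply the bookkeeping in these polynomial identities; there is no conceptual difficulty beyond the block reduction.
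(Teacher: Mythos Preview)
Your proof is correct. The paper does not actually supply a proof of this lemma; it is stated as a direct computation and left to the reader, so there is nothing to compare against beyond noting that your argument---reducing to the $2\times 2$ blocks via the block structure of $\bold{M}$ and $\bold{P}$, diagonalizing a generic $\left[\begin{smallmatrix}a&b\\b&a\end{smallmatrix}\right]$ by $\bold{R}$, and then verifying the factorization $\Delta = S^2 - T^2$ with $S=\alpha\gamma-\delta^2-\mu\tau-\nu^2$, $T=\alpha\tau+2\nu\delta-\gamma\mu$---is exactly the routine verification the authors omit.
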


We now begin substituting in the values of $\alpha$ through $\tau$ given in
(\ref{ALPHA_SO}-\ref{TAU_SO}) and (\ref{ALPHA_ISO}-\ref{TAU_ISO}) into the
results we have obtained for the $\bold{C}$ in order to
derive the results we need for $\bold{C}_{n,d}$ and $\bold{C}_n$.

\begin{lemma} \label{LEM:POSITIVE_SPECIAL_POINTS}
For all $t > 0$, $\bold{C}_n$ is positive definite.  For $d \geq 3$ and sufficiently small $t > 0$, $\bold{C}_{n,d}$ is positive definite.
\end{lemma}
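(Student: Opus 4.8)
The plan is to use the block structure recorded in Lemma~\ref{COV_MATRIX_CALC} to reduce positive definiteness of the $2n(n+1)\times 2n(n+1)$ matrix $\bold{C}$ to the positivity of a short list of scalar expressions in $\alpha,\dots,\tau$, and then to verify that list by plugging in the explicit values (\ref{ALPHA_SO}--\ref{TAU_SO}) and (\ref{ALPHA_ISO}--\ref{TAU_ISO}). Since $\bold{C}=\diag_n(\tilde{\bold{C}})$ it suffices to show $\tilde{\bold{C}}$ is positive definite. Conjugating $\tilde{\bold{C}}$ by the permutation described just before Lemma~\ref{OmegaLem} — the one that separates the ``corner'' coordinates $1$ and $n+1$ of $\bold{A}_\pm$, $\bold{B}$ from the ``middle'' coordinates $2,\dots,n$ — displays $\tilde{\bold{C}}$ as an orthogonal direct sum of $n-1$ copies of the $2\times 2$ block $\left[\begin{smallmatrix}\beta & \eta\\ \eta & \beta\end{smallmatrix}\right]$ together with one $4\times 4$ block $\bold{G}$, the covariance matrix of the four random variables $f(\bm{x}),\partial_n f(\bm{x}),f(\bm{y}),\partial_n f(\bm{y})$ for a single component $f$ of the relevant ensemble. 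Hence $\tilde{\bold{C}}$ is positive definite iff $\beta>|\eta|$ and $\bold{G}$ is positive definite.

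For $\bold{G}$ I would use the orthogonal change of variables coming from the reflection $(x_1,\dots,x_n)\mapsto(x_1,\dots,x_{n-1},-x_n)$: this map is orthogonal, hence — by invariance of the kernels $(1+\bm{z}\cdot\bm{w})^d$ and $\mathrm{e}^{\bm{z}\cdot\bm{w}}$ — preserves the law of each ensemble, and it swaps $\bm{x}$ with $\bm{y}$, so the induced involution on $\big(f(\bm{x}),\partial_n f(\bm{x}),f(\bm{y}),\partial_n f(\bm{y})\big)$ conjugates $\bold{G}$ into the direct sum $\left[\begin{smallmatrix}\alpha+\mu & \delta+\nu\\ \delta+\nu & \gamma-\tau\end{smallmatrix}\right]\oplus\left[\begin{smallmatrix}\alpha-\mu & \delta-\nu\\ \delta-\nu & \gamma+\tau\end{smallmatrix}\right]$. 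The determinants of these two blocks are exactly the two factors $P_1:=\alpha\gamma-\alpha\tau-\delta^2-2\delta\nu+\gamma\mu-\mu\tau-\nu^2$ and $P_2:=\alpha\gamma+\alpha\tau-\delta^2+2\delta\nu-\gamma\mu-\mu\tau-\nu^2$ from the determinant formula (\ref{DETC_GENERAL}) (consistently, $\lambda_3=(\alpha+\mu)/P_1$ and $\lambda_4=(\alpha-\mu)/P_2$ in the notation of Lemma~\ref{DiagonalizingMatrix}). A real symmetric $2\times 2$ matrix is positive definite precisely when its determinant and one diagonal entry are positive; combining this with the previous paragraph, $\bold{C}$ is positive definite if and only if the six scalars $\beta+\eta$, $\beta-\eta$, $\alpha+\mu$, $\alpha-\mu$, $P_1$, $P_2$ are all positive.

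It then remains to substitute and check signs. For the $\isoR$-invariant ensemble one reads off $\alpha=\beta=\mathrm{e}^{t^2/4}$, $\mu=\eta=\mathrm{e}^{-t^2/4}$ from (\ref{ALPHA_ISO}--\ref{TAU_ISO}), and a one-line computation gives $P_1=P_2=\mathrm{e}^{t^2/2}-\mathrm{e}^{-t^2/2}$; all six quantities are then positive for every $t>0$, so $\bold{C}_n$ is positive definite. For the degree-$d$ $SO(n+1)$-invariant ensemble I would expand (\ref{ALPHA_SO}--\ref{TAU_SO}) in the variable $s=t^2/4$: one finds $\beta+\eta=2d+O(s)$, $\alpha+\mu=2+O(s^2)$, $\beta-\eta=2d(d-1)s+O(s^2)$, $\alpha-\mu=2ds+O(s^2)$, $P_2=4d^2 s+O(s^2)$, and — the decisive one — $P_1=4d(d-2)s+O(s^2)$. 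For all sufficiently small $t>0$, each of the six is positive precisely when $d\ge 3$ (for $d=1$ the leading term of $P_1$ is negative, and for $d=2$ it vanishes), which is exactly the hypothesis; this gives that $\bold{C}_{n,d}$ is positive definite for small $t>0$.

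The only part of this that is not mechanical manipulation of $2\times 2$ and $4\times 4$ matrices is the last Taylor expansion in the $SO(n+1)$ case: one must expand $\alpha,\mu,\gamma,\tau,\delta,\nu$ to the correct order and notice the cancellation $8d(d-1)-4d^2=4d(d-2)$ in the leading coefficient of $P_1$, which is what produces the factor $(d-2)$ and forces $d\ge 3$. I expect this to be the only genuine obstacle; all the rest follows from formulas already assembled in this section.
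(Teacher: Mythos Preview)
Your argument is correct, but it takes a rather different and considerably longer route than the paper's. The paper's proof is a two--liner: since $\bold{C}_{n,d}$ and $\bold{C}_n$ are covariance matrices of random vectors, they are automatically positive \emph{semi}-definite; hence to upgrade to positive definite one only has to check that the determinant is nonzero. The paper then substitutes (\ref{ALPHA_SO}--\ref{TAU_SO}) and (\ref{ALPHA_ISO}--\ref{TAU_ISO}) into the closed formula (\ref{DETC_GENERAL}), obtaining a single explicit expression for $\det\bold{C}_n$ that is visibly positive for all $t>0$, and a leading--order expansion $\det\bold{C}_{n,d}=\frac{d^{2n^2+n}(d-1)^{n^2+n}(d-2)^n}{12^n}\,t^{2n^2+6n}+O(t^{2n^2+6n+1})$, whose positivity for small $t$ is exactly the condition $d\ge 3$.

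Your approach instead carries out the full orthogonal block--diagonalization of $\tilde{\bold{C}}$ (via the permutation of the paper plus the reflection $x_n\mapsto -x_n$) and reduces the question to the positivity of six scalars $\beta\pm\eta$, $\alpha\pm\mu$, $P_1$, $P_2$. This is perfectly valid---your identification of the two $2\times 2$ block determinants with $P_1$ and $P_2$ is algebraically correct, and your leading--order expansion $P_1=4d(d-2)s+O(s^2)$ is right and pinpoints exactly why $d\ge 3$ is needed. What you gain is finer information: you see precisely which eigenvalue of $\bold{C}_{n,d}$ degenerates at $d=2$ (the one coming from the ``even'' $2\times 2$ block), and you avoid invoking the semi-definiteness fact. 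What you lose is brevity---the paper checks one scalar, you check six. Both arguments end at the same factor $(d-2)$.
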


\begin{proof}
It is a general fact from probability theory that the
covariance matrix of a random vector is positive semi-definite.
For $\bold{C}_n$, equation (\ref{DETC_GENERAL}) becomes
\begin{eqnarray*}
\det(\bold{C}_n)  = \left( {{\rm e}^{\frac{1}{2}\,{t}^{2}}}-{{\rm e}^{-\frac{1}{2}\,{t}^{2}}} \right) ^{n(n-1)} 
\left( {{\rm e}^{\frac{1}{2}\,{t}^{2}}}-{{\rm e}^{-\frac{1}{2}\,{t}^{2}}}+{t}^{2}
 \right)^n  \left( {{\rm e}^{\frac{1}{2}\,{t}^{2}}}-{{\rm e}^{-\frac{1}{2}\,{t}^{2}}}-{t }^{2} \right)^n,
\end{eqnarray*}
which is positive for all $t > 0$.

For $\bold{C}_{n,d}$ we have
\begin{eqnarray*}
\det(\bold{C}_{n,d}) = \frac{d^{2n^2+n}(d-1)^{n^2+n}(d-2)^n}{12^n} \, t^{2n^2+6n} + O(t^{2n^2+6n+1}),
\end{eqnarray*}
which is positive for $d \geq 3$ and $t > 0$ sufficiently small.
\end{proof}

\begin{lemma} $\bold{\Omega}_{n,d}$ and $\bold{\Omega}_n$ are orthogonally
diagonalized by $\bold{Q}\bold{P}$ where $\bold{Q}$ and $\bold{P}$ are the
$2n^2 \times 2n^2$ matrices described in Lemma \ref{DiagonalizingMatrix}.
The eigenvalues $\lambda_{d,1},\lambda_{d,2},\lambda_{d,3},\lambda_{d,3}$ of $\bold{\Omega}_{n,d}$ satisfy
\begin{align}\label{EQN:EIG_EXPANSIONS_FINITE_D}
\lambda_{1,d}^{-1/2} =  \sqrt{\frac{d(d-1)}{2}}t + O(t^3) \quad \lambda_{2,d}^{-1/2} =  \sqrt{2d} + O(t) \quad
\lambda_{3,d}^{-1/2} = \sqrt{d(d-1)} \, t  \quad \lambda_{4,d}^{-1/2} = \sqrt{{\frac {d({d}^{2}-3\,d+2)}{12 }}} \,  t^2 + O(t^3).
\end{align}
The eigenvalues $\lambda_1,\lambda_2,\lambda_3,\lambda_4$ of $\bold{\Omega}_n$ equal
\begin{align}\label{EQN:EIG_GAF}
\lambda_1 = \left( {\rm e}^{\frac{t^2}{4}}-{\rm e}^{-\frac{t^2}{4}} \right) ^{-1}, \quad
\lambda_2=  \left( {\rm e}^{\frac{t^2}{4}}+{\rm e}^{-\frac{t^2}{4}} \right) ^{-
1}
, \quad
\lambda_3 = \frac {{\rm e}^{\frac{t^2}{4}}+{\rm e}^{-\frac{t^2}{4}}}{{t}^{2}+
{\rm e}^{\frac{t^2}{2}}-{\rm e}^{-\frac{t^2}{2}}}
, \quad \mbox{and} \quad
\lambda_4 =  \frac {{\rm e}^{\frac{t^2}{4}}-{\rm e}^{-\frac{t^2}{4}}}{-{t}^{2}+
{\rm e}^{\frac{t^2}{2}}-{\rm e}^{-\frac{t^2}{2}}},
\end{align}
and satisfy
\begin{align}\label{EQN:EIG_EXPANSIONS}
\lambda_1^{-1/2} = \frac{t}{\sqrt{2}}+O \left( {t}^{3} \right) \quad \lambda_2^{-1/2} =  \sqrt {2}+O \left( {t}^{2} \right) \quad
\lambda_3^{-1/2} = t+O \left( {t}^{3} \right) \quad \lambda_4^{-1/2} = \frac{1}{\sqrt{12}}t^2 + O \left( {t}^{3} \right).
\end{align}
\end{lemma}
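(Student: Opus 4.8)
The plan is to derive every assertion from Lemmas~\ref{COV_MATRIX_CALC}, \ref{LEM:POSITIVE_SPECIAL_POINTS}, and~\ref{DiagonalizingMatrix} by substitution and asymptotic expansion; the argument needs no new ingredient. By Lemma~\ref{LEM:POSITIVE_SPECIAL_POINTS}, $\bold{C}_n$ (for all $t>0$) and $\bold{C}_{n,d}$ (for $d\geq 3$ and $t>0$ small) are positive definite, so $\bold{\Omega}_n$ and $\bold{\Omega}_{n,d}$ are well defined with finite eigenvalues; in particular $\beta^2-\eta^2$, $\Delta$, and the denominators in~(\ref{EQN:GEN_EIG}) do not vanish. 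Since by Lemma~\ref{COV_MATRIX_CALC} both covariance matrices are of the generic form~(\ref{EQN_GENERIC_C})--(\ref{EQN_GENERIC_C_DETAILS}), Lemma~\ref{DiagonalizingMatrix} applies directly: $\bold{Q}\bold{P}$ orthogonally diagonalizes $\bold{\Omega}_n$ and $\bold{\Omega}_{n,d}$, and the eigenvalues are obtained by substituting the appropriate values of $\alpha,\dots,\tau$ into the closed formulas~(\ref{EQN:GEN_EIG}).

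For $\bold{\Omega}_n$ I would insert the $\isoR$-invariant values~(\ref{ALPHA_ISO})--(\ref{TAU_ISO}). The two ``$\beta,\eta$'' eigenvalues are immediately $\lambda_1=(e^{t^2/4}-e^{-t^2/4})^{-1}$ and $\lambda_2=(e^{t^2/4}+e^{-t^2/4})^{-1}$. For $\lambda_3,\lambda_4$ the numerators are $\alpha\pm\mu=e^{t^2/4}\pm e^{-t^2/4}$; in each denominator one expands and collects terms, the coefficients of $e^{t^2/2}$ and $e^{-t^2/2}$ reducing to $\pm 1$ and the remaining polynomial terms collapsing to $\pm t^2$, which yields the exact formulas~(\ref{EQN:EIG_GAF}). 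The asymptotics~(\ref{EQN:EIG_EXPANSIONS}) then follow from the expansions $e^{t^2/4}+e^{-t^2/4}=2+O(t^4)$, $e^{t^2/4}-e^{-t^2/4}=\tfrac{t^2}{2}+O(t^6)$, and $e^{t^2/2}-e^{-t^2/2}=2\sinh(t^2/2)=t^2+\tfrac{t^6}{24}+\cdots$, upon taking reciprocals and square roots.

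For $\bold{\Omega}_{n,d}$ the procedure is identical after substituting the $SO(n+1)$-invariant values~(\ref{ALPHA_SO})--(\ref{TAU_SO}), except that each entry now carries a factor $(1\pm t^2/4)^{d-j}$ with $j\in\{0,1,2\}$, so the algebraic simplification and the expansions are carried out with the binomial series $(1+x)^{d-j}=1+(d-j)x+\binom{d-j}{2}x^2+\cdots$ in $x=t^2/4$. One finds $\lambda_{1,d}=\big(d\,[(1+t^2/4)^{d-1}-(1-t^2/4)^{d-1}]\big)^{-1}$ and $\lambda_{2,d}=\big(d\,[(1+t^2/4)^{d-1}+(1-t^2/4)^{d-1}]\big)^{-1}$, which expand at once to the first two expansions in~(\ref{EQN:EIG_EXPANSIONS_FINITE_D}); and for $\lambda_{3,d},\lambda_{4,d}$ one computes the numerators $\alpha\pm\mu$ and the denominators from~(\ref{EQN:GEN_EIG}) and checks that $\lambda_{3,d}^{-1}=d(d-1)\,t^{2}+O(t^{4})$ and $\lambda_{4,d}^{-1}=\tfrac{d(d-1)(d-2)}{12}\,t^{4}+O(t^{5})$, which give the remaining two. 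All of this is purely mechanical and is readily carried out, and double-checked, with a computer algebra system.

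The one point that demands genuine care is the eigenvalue $\lambda_4$ (for both ensembles, and most acutely in the finite-degree case): its denominator in~(\ref{EQN:GEN_EIG}) is a difference of two quantities whose $t^2$-coefficients agree, so the true order of vanishing emerges only after that cancellation, and obtaining the coefficient $\tfrac{d(d-1)(d-2)}{12}$ forces one to expand $(1\pm t^2/4)^{d-1}$ and $(1\pm t^2/4)^{d-2}$ one order beyond what a naive count would suggest. The factor $d-2$ appearing here is precisely why Theorem~\ref{CORR_FINITE_D} is restricted to $d\geq 3$: at $d=2$ this denominator vanishes, consistent with the degeneration of $\det\bold{C}_{n,d}$ recorded in the proof of Lemma~\ref{LEM:POSITIVE_SPECIAL_POINTS}.
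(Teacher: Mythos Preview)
Your proposal is correct and follows exactly the same approach as the paper: invoke Lemma~\ref{DiagonalizingMatrix} for the diagonalization, substitute the covariance values (\ref{ALPHA_SO})--(\ref{TAU_SO}) and (\ref{ALPHA_ISO})--(\ref{TAU_ISO}) into the generic eigenvalue formulas~(\ref{EQN:GEN_EIG}), and expand. The paper's own proof is in fact even terser---it simply cites the substitution and notes that the asymptotics were checked with Maple---so your added detail on the $\lambda_4$ cancellation and its link to the $d\geq 3$ hypothesis is a welcome elaboration, not a departure.
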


\begin{proof}
This follows from Lemma \ref{DiagonalizingMatrix} and by substituting the values of
$\alpha$ through $\tau$ from (\ref{ALPHA_SO}-\ref{TAU_SO}) and (\ref{ALPHA_ISO}-\ref{TAU_ISO}) into (\ref{EQN:GEN_EIG}).
(The asymptotics in (\ref{EQN:EIG_EXPANSIONS_FINITE_D}) determined using the Maple computer algebra system \cite{MAPLE}.  However, they
are simple enough that one can check them by hand.)
\end{proof}

We will need the following calculation in Section \ref{SEC:SHORT_RANGE}:
\begin{lemma} We have
\begin{align}  \label{EIG_AND_DET}
\frac{\left(\lambda_{1,d}\lambda_{2,d}\right)^{-\frac{1}{2}n\left(n-1\right)}
\left(\lambda_{3,d}\lambda_{4,d}\right)^{-\frac{1}{2} n}}{\sqrt{{\rm det} \ \bold{C}_{d,n}}} = d^{-\frac{n}{2}} \, t^{-n} + O(t^{-n+2}) \quad \mbox{and} \quad
\frac{\left(\lambda_1\lambda_2\right)^{-\frac{1}{2}n\left(n-1\right)} \left(\lambda_3\lambda_4\right)^{-\frac{1}{2} n}}{\sqrt{{\rm det} \ \bold{C}_n}} 
 = t^{-n} + O(t^{-n+2}).
\end{align}
\end{lemma}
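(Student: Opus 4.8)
The plan is to feed the eigenvalue expansions from (\ref{EQN:EIG_EXPANSIONS_FINITE_D}) and (\ref{EQN:EIG_EXPANSIONS}), together with the determinant expansions established in the proof of Lemma \ref{LEM:POSITIVE_SPECIAL_POINTS}, directly into the two quotients and read off the leading term. The only subtlety is getting the error down to $O(t^{-n+2})$ rather than the $O(t^{-n+1})$ produced by a naive product of the stated remainders; for this I would first record a parity statement.

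\textbf{Parity.} From the explicit formulas (\ref{ALPHA_SO})--(\ref{TAU_SO}) and (\ref{ALPHA_ISO})--(\ref{TAU_ISO}), each of $\alpha,\beta,\gamma,\mu,\eta,\tau$ is an even real-analytic function of $t$ near $t=0$ and $\delta,\nu$ are odd, and in $\Delta$, in the determinant formula (\ref{DETC_GENERAL}), and in the eigenvalue formulas (\ref{EQN:GEN_EIG}) the variables $\delta,\nu$ occur only through $\delta^2,\nu^2,\delta\nu$. Hence $\det\bold{C}_{n,d}$, $\det\bold{C}_n$, and each eigenvalue $\lambda_{i,d}$, $\lambda_i$ is an even real-analytic function of $t$. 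Combined with the orders read off from (\ref{EQN:EIG_EXPANSIONS_FINITE_D}) and (\ref{EQN:EIG_EXPANSIONS}), this gives that $\lambda_{1,d}^{-1/2},\lambda_{3,d}^{-1/2}$ (resp.\ $\lambda_1^{-1/2},\lambda_3^{-1/2}$) are $t$ times a function analytic in $t^2$ and nonzero at $t=0$, that $\lambda_{2,d}^{-1/2}$ (resp.\ $\lambda_2^{-1/2}$) is itself such a function, and that $\lambda_{4,d}^{-1/2}$ (resp.\ $\lambda_4^{-1/2}$) is $t^2$ times such a function; likewise Lemma \ref{LEM:POSITIVE_SPECIAL_POINTS} shows $\det\bold{C}_{n,d}$ and $\det\bold{C}_n$ have a zero of order exactly $2n^2+6n$ at $t=0$, so $\sqrt{\det\bold{C}_{n,d}}$ and $\sqrt{\det\bold{C}_n}$ are $t^{n^2+3n}$ times an analytic, nonvanishing, even function. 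Multiplying out, $(\lambda_{1,d}\lambda_{2,d})^{-\frac12 n(n-1)}(\lambda_{3,d}\lambda_{4,d})^{-\frac n2}$ is $t^{n(n-1)+3n}=t^{n^2+2n}$ times an even analytic function, so dividing by $\sqrt{\det\bold{C}_{n,d}}$ yields $t^{-n}$ times an even analytic function of $t$, i.e.\ an expression of the form $c_0 t^{-n}+O(t^{-n+2})$; the $\isoR$-invariant case is identical.

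\textbf{Leading coefficient.} It remains to compute $c_0$, for which only the leading coefficients in (\ref{EQN:EIG_EXPANSIONS_FINITE_D})--(\ref{EQN:EIG_EXPANSIONS}) and in Lemma \ref{LEM:POSITIVE_SPECIAL_POINTS} are needed. In the finite-degree case the numerator has leading coefficient
\[
\bigl(\sqrt{\tfrac{d(d-1)}{2}}\,\sqrt{2d}\bigr)^{n(n-1)}\bigl(\sqrt{d(d-1)}\,\sqrt{\tfrac{d(d-1)(d-2)}{12}}\bigr)^{n}=d^{n^2}(d-1)^{(n^2+n)/2}(d-2)^{n/2}\,12^{-n/2},
\]
whereas $\det\bold{C}_{n,d}$ has leading coefficient $d^{2n^2+n}(d-1)^{n^2+n}(d-2)^n\,12^{-n}$, with square root $d^{n^2+n/2}(d-1)^{(n^2+n)/2}(d-2)^{n/2}\,12^{-n/2}$; the ratio is $d^{-n/2}$, as claimed. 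In the $\isoR$-invariant case (\ref{EQN:EIG_EXPANSIONS}) gives numerator leading coefficient $12^{-n/2}$, and expanding the product formula for $\det\bold{C}_n$ from the proof of Lemma \ref{LEM:POSITIVE_SPECIAL_POINTS} using ${\rm e}^{t^2/2}-{\rm e}^{-t^2/2}=t^2+O(t^6)$ gives $\det\bold{C}_n=12^{-n}t^{2n^2+6n}(1+O(t^2))$, so $\sqrt{\det\bold{C}_n}$ has leading coefficient $12^{-n/2}$ and the ratio is $1$.

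The computation is essentially bookkeeping; the two points that need care are the cancellation of the powers of $d$, $d-1$, $d-2$ and $12$ in the leading coefficient, and the parity observation above, without which one would only obtain an error of order $t^{-n+1}$.
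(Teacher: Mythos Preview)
Your argument is correct, but the paper takes a much shorter route. Rather than expanding each eigenvalue and the determinant separately and then tracking the cancellations, the paper observes directly from the closed formulas (\ref{DETC_GENERAL}) and (\ref{EQN:GEN_EIG}) the exact identity
\[
\frac{(\lambda_1\lambda_2)^{-\frac{1}{2}n(n-1)}(\lambda_3\lambda_4)^{-\frac{1}{2}n}}{\sqrt{\det\bold{C}}}
=(\alpha^2-\mu^2)^{-\frac{n}{2}},
\]
valid for the generic covariance matrix $\bold{C}$ of Lemma \ref{COV_MATRIX_CALC}. Indeed, $(\lambda_1\lambda_2)^{-1}=\beta^2-\eta^2$ cancels the first factor of (\ref{DETC_GENERAL}), while $(\lambda_3\lambda_4)^{-1}$ equals the product of the remaining two factors of (\ref{DETC_GENERAL}) divided by $(\alpha+\mu)(\alpha-\mu)$. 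The asymptotics then follow from the single expansion $\alpha^2-\mu^2=dt^2+O(t^4)$ in the $SO(n+1)$ case and $\alpha^2-\mu^2={\rm e}^{t^2/2}-{\rm e}^{-t^2/2}=t^2+O(t^6)$ in the $\isoR$ case. This bypasses both your leading-coefficient bookkeeping and your separate parity argument, since $\alpha$ and $\mu$ are manifestly even and $(\alpha^2-\mu^2)^{-n/2}$ is automatically $t^{-n}$ times an even analytic function.

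Your approach has the merit of being mechanical---it would work even without spotting the identity---but the paper's simplification explains \emph{why} all the powers of $d$, $d-1$, $d-2$, and $12$ cancel so cleanly: they were never really there, only $\alpha$ and $\mu$ survive the quotient.
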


\begin{proof}
Using (\ref{DETC_GENERAL}) and Lemma \ref{DiagonalizingMatrix} for the generic form of the covariance matrix $\bold{C}$ we have
\begin{align}
\frac{\left(\lambda_1\lambda_2\right)^{-\frac{1}{2}n\left(n-1\right)} \left(\lambda_3\lambda_4\right)^{-\frac{1}{2} n}}{\sqrt{{\rm det} \ \bold{C}_n}} =
 (\alpha^2 - \mu^2)^{-\frac{n}{2}}.
\end{align}
The result then follows by substituting (\ref{ALPHA_SO}-\ref{TAU_SO}) and (\ref{ALPHA_ISO}-\ref{TAU_ISO}) and doing an expansion.
\end{proof}

\section{Proof of Part (1) from Theorem \ref{THM:LOCAL}: short-range asymptotics} \label{SEC:SHORT_RANGE}
We apply the Kac-Rice formula to the covariance matrices $\bold{C}_{n,d}$ and
$\bold{C}_n$ and the submatrices $\bold{\Omega}_{n,d}$ and $\bold{\Omega}_n$ of
their inverses, as computed in Section \ref{SEC:COVARIANCE}.  It applies because, by Lemma \ref{LEM:POSITIVE_SPECIAL_POINTS},
$\bold{C}$ is positive definite for all $t > 0$ and $\bold{C}_{n,d}$ is positive definite for all $d \geq 3$ and sufficiently small $t > 0$.  The proof will be
the nearly same for each, so we will work with $\K_{n,d}(t)$ and then explain
what change needs to be made for $\K_n(t)$ at the very end of the section.

We apply the diagonalization of $\bold{\Lambda} \equiv \bold{\Lambda}_{n,d} =
(\bold{Q}\bold{P})^T \bold{\Omega}_{n,d} (\bold{Q} \bold{P})$ from
(\ref{Omega}) and (\ref{EQN_LAMBDA})  to the Kac-Rice formula (\ref{K1}) to
obtain \begin{equation} \label{KDiagonalized} \K_{n,d} \left(t\right)
=\frac{1}{\left(2\pi\right)^{n\left(n+1\right)}\rho_d(\bm{x})\rho_d(\bm{y})\sqrt{\det\
\bold{C}_{n,d}}}  \int \limits_{\mathbb{R}^{2n^2}} |\det \ {\bm{
\xi\left({\tau}\right)}}| |\det \ {\bm{ \eta\left({\tau}\right)}}| \
\mathrm{e}^{-\frac{1}{2} \left({\bm{\Lambda}}
{\bm{\tau}},{\bm{\tau}}\right)} d{\bm{\tau}}.  \end{equation} where
$\begin{displaystyle} \bm{\tau} := \left[ \begin {array}{ccccccc} {
\bm{\tau}_{1}}&{ \bm{\tau}_{2}}&\dots&{ \bm{\tau}_{n}}\end {array}
\right]^\intercal:=\bold{P}^{\intercal} \bold{Q}^{\intercal} {\bm{u}}
\end{displaystyle}$, where $\bm{\tau}_{i}=\left[ \begin {array}{ccccccc} {
{\tau}_{i,1}}&{ {\tau}_{i,2}}&\dots&{ {\tau}_{i,2n}}\end {array} \right]$ for each $1\leq i \leq n$.  In
these new variables, $\bm{\xi}$ and $\bm{\eta}$ become the new
matrices $\bm{  \xi\left({\tau}\right)}$ and $\bm{
\eta\left({\tau}\right)}$, whose entries are defined by \begin{equation}
\xi_{i,j}\left(\tau\right)= \frac{\sqrt{2}}{2}\left(-\tau_{i,2j-1}+
\tau_{i,2j}\right) \quad \mbox{and} \quad
\eta_{i,j}\left(\tau\right)=\frac{\sqrt{2}}{2}\left(\tau_{i,2j-1}+
\tau_{i,2j}\right) \quad \mbox{for} \quad i,j\leq n. \end{equation}

The reason for diagonalizing $\bm{\Omega}_{n,d}$ was to change the exponent into a form conducive to forming $n$ sets of $2n$-dimensional spherical coordinates so that $\left(\bm{\Lambda} \bm{\tau},\bm{\tau}\right)$ becomes $\sum\limits_{k=1}^n r_k^2$. \\ Let $\bm{w}:=[\begin{array} {cccccccc} r_1 & r_2 & \dots & r_n & \bm{\theta}_1 & \bm{\theta}_2 & \dots & \bm{\theta}_n \end{array}]$, where $\bm{\theta}_i=[\begin{array}{cccc} \theta_{i,1} & \theta_{i,2} & \dots & \theta_{i,2n-1}\end{array}]$. Let

\begin{equation} \label{SphericalChange}                                             
\tau_{i,j}=\begin{cases} \lambda_{1,d}^{-\frac{1}{2}}r_i \left( \prod\limits_{k=1}^{\frac{j-1}{2}} \sin \theta_{i,k} \right) \left(\cos \theta_{i,\frac{j+1}{2}}\right)  & \mbox{if } j \mbox{ is odd} \ \mbox{and } j \ne 2n-1,\   \\
 \lambda_{4,d}^{-\frac{1}{2}} r_i \left( \prod\limits_{k=1}^{n-1} \sin \theta_{i,k} \right)\left(\cos \theta_{i,n}  \right) & \mbox{if } j = 2n,\\
\lambda_{2,d}^{-\frac{1}{2}}r_i  \left( \prod\limits_{k=1}^{n-1+\frac{j}{2}} \sin \theta_{i,k} \right) \left(\cos \theta_{i,n+\frac{j}{2}} \right)  & \mbox{if } j \mbox{ is even} \ \mbox{and } j \ne 2n, \mbox{and}  \\
 \lambda_{3,d}^{-\frac{1}{2}}r_i\left( \prod\limits_{k=1}^{2n-1} \sin \theta_{i,k} \right) &\mbox{if } j=2n-1. \ 
 \end{cases}
 \end{equation}
 
Thus, $d\bm{\tau}$ becomes $\begin{displaystyle} \left(\lambda_{1,d}\lambda_{2,d}\right)^{-\frac{1}{2}n\left(n-1\right)} \left(\lambda_{3,d}\lambda_{4,d}\right)^{-\frac{1}{2} n}\prod \limits_{l=1}^n r_{l}^{2n-1} \  d\mu_{\left(\mathbb{S}^{2n-1}\right)^n} d \bm{r}\end{displaystyle}$ , where $\mathbb{S}^{2n-1}$ denotes the unit sphere in $\mathbb{R}^{2n}$ and $d\mu_{\left(\mathbb{S}^{2n-1}\right)^n}$ denotes the product
measure on $\left(\mathbb{S}^{2n-1}\right)^n$ obtained from the standard spherical measure $d\mu_{\mathbb{S}^{2n-1}}$ on $\mathbb{S}^{2n-1}$.  Let $\phi_{i,j}$ be the trigonometric product in $\tau_{i,j}$ so that $\begin{displaystyle} \tau_{i,j}=\lambda_{h\left(j\right),d}^{-\frac{1}{2}} r_i \phi_{i,j} \end{displaystyle}$. 
After this variable change, we see that  \begin{equation}\label{RSubs}   \xi_{i,j}\left(\bm{r},\bm{\theta}\right)=\frac{\sqrt{2}}{2} r_i \left(- \lambda_{m,d}^{-\frac{1}{2}}  \phi_{i,2j-1} +  \lambda_{m+1,d}^{-\frac{1}{2}}\phi_{i,2j} \right) {\ \ \rm and \ \ \ }  \eta_{i,j}\left(\bm{r},\bm{\theta}\right)= \frac{\sqrt{2}}{2} r_i \left(\lambda_{m,d}^{-\frac{1}{2}} \phi_{i,2j-1} +  \lambda_{m+1,d}^{-\frac{1}{2}}\phi_{n,2j} \right)\end{equation} where $m=1$ when $j \ne n$ and $m=3$ when $j=n$. \\

Thus, in the new spherical coordinates, we have: 
\begin{equation}  \label{Spherical} 
\K_{n,d}\left(t\right) =
\frac{\left(\lambda_{1,d}\lambda_{2,d}\right)^{-\frac{1}{2}n\left(n-1\right)}
\left(\lambda_{3,d}\lambda_{4,d}\right)^{-\frac{1}{2}
n}}{\left(2\pi\right)^{n\left(n+1\right)}
\rho_d(\bm{x})\rho_d(\bm{y})\sqrt{\det\  \bold{C}_{n,d}}}
\int\limits_{\mathbb{R}^{2n^2}} \mid {\rm det \ } \bm{
\xi}{\left(\bm{r},\bm{\theta}\right)} \mid \mid {\rm det \ } \bm{
\eta}{\left(\bm{r},\bm{\theta}\right)} \mid \mathrm{e}^{-\frac{1}{2}
\sum\limits_{k=1}^n r_k^2} \prod \limits_{l=1}^n r_{l}^{2n-1}
d\mu_{\left(\mathbb{S}^{2n-1}\right)^n} d\bm{r}. 
\end{equation}

Using the asymptotic behavior of $\lambda_{1,d}^{-1/2}, \lambda_{2,d}^{-1/2},
\lambda_{3,d}^{-1/2},$ and $\lambda_{4,d}^{-1/2}$ expressed in
(\ref{EQN:EIG_EXPANSIONS_FINITE_D}), we notice
that each the elements of the $n$th column of each determinant vanishes
linearly with $t$. Therefore, we factor out $t$ from each column to prevent
these columns from vanishing in the limit as $t$ goes to~$0$. Also note that
each element of row $i$ in both $\bm{\xi}$ and $\bm{\eta}$ are linear with
$r_i$.  Thus, we let $
\bm{\hat{\xi}}\left(\bm{\theta},t\right)$ and $
\bm{\hat{\eta}}\left(\bm{\theta},t\right)$ denote the resulting
matrices when $t$ is factored from the $n$th column and $r_i$ is factored from
each row of each matrix, $\bm{\xi}$ and $\bm{\eta}$, respectively.  Using
Fubini's Theorem, we can now split the integral from (\ref{Spherical}) into an
integral over the radii and an integral over the angles:

\begin{align} \label{SplitIntegrals}
\K_{n,d}\left(t\right) = \frac{\left(\lambda_{1,d}\lambda_{2,d}\right)^{-\frac{1}{2}n\left(n-1\right)}
\left(\lambda_{3,d}\lambda_{4,d}\right)^{-\frac{1}{2}
n}}{\left(2\pi\right)^{n\left(n+1\right)}
\rho_d(\bm{x})\rho_d(\bm{y})\sqrt{\det\  \bold{C}_{n,d}}}\left(\int\limits_{\mathbb{R}_{\geq 0}^n} \prod \limits_{l=1}^n r_{l}^{2n+1} \mathrm{e}^{-\frac{1}{2}\sum\limits_{k=1}^n \left(r_k^2\right)} d\bm{r} \right) \\ \cdot  \left( t^{2}\!\!\!\!\!\! \int\limits_{\left(\mathbb{S}^{2n-1}\right)^n}\!\!\!\!\!\! \mid \mbox{det } \hat{\bm{\xi}}\left(\bm{\theta},t\right)\mid \mid \mbox{det } \hat{\bm{\eta}}\left(\bm{\theta},t\right)\mid  \ d\mu_{\left(\mathbb{S}^{2n-1}\right)^n} \right).
\end{align}
Using the definition of the Gamma function, (\ref{SplitIntegrals}) simplifies to:
\begin{align}
\K_{n,d}\left(t\right)= \left(\frac{ 2^{n^2} \Gamma\left(n+1 \right)^n }{\left(2\pi\right)^{n\left(n+1\right)}
\rho_d(\bm{x})\rho_d(\bm{y})} \right) \left( \frac{\left(\lambda_{1,d}\lambda_{2,d}\right)^{-\frac{1}{2}n\left(n-1\right)}
\left(\lambda_{3,d}\lambda_{4,d}\right)^{-\frac{1}{2}
n}}{\sqrt{\det\  \bold{C}_{n,d}}} t^{2} \right) \\ \cdot  \left( \int\limits_{\left(\mathbb{S}^{2n-1}\right)^n} \mid \mbox{det } \hat{\bm{\xi}}\left(\bm{\theta},t\right)\mid \mid \mbox{det } \hat{\bm{\eta}}\left(\bm{\theta},t\right)\mid  \ d\mu_{\left(\mathbb{S}^{2n-1}\right)^n} \right). \nonumber
\end{align}

By (\ref{EIG_AND_DET}) we have that 
\begin{equation} \label{TaylorExpansionE}  \frac{\left(\lambda_{1,d}\lambda_{2,d}\right)^{-\frac{1}{2}n\left(n-1\right)}
\left(\lambda_{3,d}\lambda_{4,d}\right)^{-\frac{1}{2}
n}}{\sqrt{\det\  \bold{C}_{n,d}}} t^{2}  =  d^{-\frac{n}{2}} \, t^{2-n} + O\left(t^{4-n}\right).
\end{equation}
Meanwhile, by (\ref{EQN:EIG_EXPANSIONS_FINITE_D}), each entry of $\hat{\bm{\xi}}$ and $\hat{\bm{\eta}}$ is of the form: constant (potentially 0) plus $O(t)$. Therefore,
\begin{equation}
\int\limits_{\left(\mathbb{S}^{2n-1}\right)^n} \mid \mbox{det }\hat{\bm{\xi}}\left(\bm{\theta},t\right)\mid \mid \mbox{det }\hat{\bm{\eta}}\left(\bm{\theta},t\right)\mid  \ 
d\mu_{\left(\mathbb{S}^{2n-1}\right)^n} = D_n + O\left(t\right),
\end{equation}
where
\begin{equation} \label{LimitCalc} D_n = \lim\limits_{t\rightarrow 0} \int\limits_{\left(\mathbb{S}^{2n-1}\right)^n} \mid \mbox{det }\hat{\bm{\xi}}\left(\bm{\theta},t\right)\mid \mid \mbox{det }\hat{\bm{\eta}}\left(\bm{\theta},t\right)\mid  \ d\mu_{\left(\mathbb{S}^{2n-1}\right)^n}. \end{equation}
Finally,
we also have from (\ref{Density_FINITE_D}) that 
\begin{equation} \rho_d(\bm{x}) = \rho_d(\bm{y}) = \pi^{-\frac{n+1}{2}} \Gamma\left( \frac{n+1}{2} \right) d^{\frac{n}{2}} + O(t^2).\end{equation}

Therefore, 
\begin{align}
\K_{n,d}(t) = A_{n,d}\ t^{2-n} + O(t^{4-n}) \qquad \mbox{where} \qquad
A_{n,d} = \left(\frac{ 2^{n^2}  \pi^{n+1} \Gamma\left(n+1 \right)^n }{\left(2\pi\right)^{n\left(n+1\right)}
\Gamma\left( \frac{n+1}{2} \right)^2 d^{\frac{3}{2} n}}  \right) D_n.
\end{align}

We now compute the constant $D_n$.
From Equations (\ref{RSubs}), (\ref{EQN:EIG_EXPANSIONS_FINITE_D}) and (\ref{EQN:EIG_EXPANSIONS}):
   \begin{align}
\lim\limits_{t\rightarrow 0} \hat{\xi}_{i,j}\left(\bm{\theta},t\right) &= \lim\limits_{t\rightarrow 0} \hat{\eta}_{i,j}\left(\bm{\theta},t\right) = \sqrt{d} \,\, \phi_{i,2j}  \quad \quad \quad \ \ \ \text{for } j<n , \ \text{and} \\ \lim\limits_{t\rightarrow 0} -\hat{\xi}_{i,j}\left(\bm{\theta},t\right) &= \lim\limits_{t\rightarrow 0} \hat{\eta}_{i,j}\left(\bm{\theta},t\right) = \sqrt{\frac{d(d-1)}{2}} \,\, \phi_{i,2n-1} \quad \text{for } j=n. \label{MODIFIED_EQN}
\end{align}
Let $\begin{displaystyle}
\bm{\mu}\left(\bm{\theta}\right)\end{displaystyle}$ be the
resulting matrix when $\sqrt{d}$ is factored out of the first through $n-1$-st columns and  $\sqrt{\frac{d(d-1)}{2}}$ is factored out of the $n$th column
of $\lim\limits_{t\rightarrow 0}
\hat{\bm{\xi}}\left(\bm{\theta},t\right)$. If we do the same
process with $\lim\limits_{t\rightarrow 0}
\hat{\bm{\mu}}\left(\bm{\theta},t\right)$, we obtain the same
result with the sign changed in the $n$th column. Therefore,
\begin{equation}\label{LimitResults} D_n =  \frac{d^n (d-1)}{2} \,
\int\limits_{\left(\mathbb{S}^{2n-1}\right)^n} |\mbox{det }
\bm{\mu}\left(\bm{\theta}\right)|^2 \
d\mu_{\left(\mathbb{S}^{2n-1}\right)^n}. \end{equation}

From (\ref{SphericalChange}), we notice that each entry of row $i$ of $\bm{\mu}\left(\bm{\theta}\right)$ contains a factor of $\begin{displaystyle}\prod\limits_{j=1}^{n} \sin \theta_{i,j} \end{displaystyle}$. Thus, we can take this factor out of each row of the matrix, removing any dependence of the determinant on $\theta_{i,j}$ for $j\leq n$. Let $\bm{\nu}\left(\bm{\theta}\right)$ denote the matrix that remains after removing these factors.

We can then split the integral into two, an integral over $\mathbb{B}^n$, where $\mathbb{B}:= [0,\pi]$, corresponding to $\theta_{i,j}$ for $j\leq n$, and an integral over $\left(\mathbb{S}^{n-1}\right)^{n}$, corresponding to $\theta_{i,j}$ for $j> n$:
\begin{equation} \label{SplitIntegral2}
D_n =  \frac{d^n (d-1)}{2} \,  \int\limits_{\mathbb{B}^n} \prod\limits_{i,j\leq n} \left( \sin \theta_{i,j}\right)^{2n+1-j} \ d {\rm Leb}_{\mathbb{B}^n} \int\limits_{\left(\mathbb{S}^{n-1}\right)^n} {\mid \mbox{det } \bm{\nu}\left(\bm{\theta}\right) \mid}^2\ d\mu_{\left(\mathbb{S}^{n-1}\right)^n}.
\end{equation}
Here we have used that $\begin{displaystyle} d\mu_{\mathbb{S}^{2n-1}} = \prod\limits_{j=1}^{n} \sin \theta_j^{2n-1-j}  d{\rm Leb}_{\mathbb{B}} \ d\mu_{\mathbb{S}^{n-1}}. \end{displaystyle}$
The former integral in this product can be calculated recursively with integration by parts to be:
\begin{equation}\label{FormerIntegral}
\int\limits_{\mathbb{B}^n} \prod\limits_{i,j\leq n} \left( \sin \theta_{i,j}\right)^{2n+1-j} \ 
d{\rm Leb}_{\mathbb{B}^n} = \left(\prod\limits_{j=1}^n \left(\int\limits_0^{\pi} \left( \sin \theta \right)^{2n+1-j} d\theta \right)\right)^n = \left(\pi ^{\lceil{\frac{n}{2}  \rceil}} 2^{\lfloor{\frac{n}{2}} \rfloor} \frac{n!!}{2n!!}\right)^n.
\end{equation}
The calculation of the latter integral follows from Proposition \ref{kthMoment} of Appendix \ref{kthMomentProof}:
\begin{equation} \label{LatterIntegral}
\int\limits_{\left(\mathbb{S}^{n-1}\right)^n} {\mid \mbox{det } \bm{\nu} \mid}^2\ 
d\mu_{\left(\mathbb{S}^{n-1}\right)^n} = \left(\frac{\Gamma\left(\frac{n}{2}\right)}{\Gamma\left(\frac{n+2}{2}\right)}\right)^{n-1} \frac{\Gamma\left(\frac{n+1}{2}\right)\Gamma\left(\frac{n}{2}\right)}{\Gamma\left(\frac{1}{2}\right)} \left( \frac{n \pi^{\frac{n}{2}}}{ \Gamma\left(\frac{n+2}{2}\right)} \right)^n  = \frac{\Gamma\left(n+1\right)\pi^{\frac{n^2}{2}}}{\Gamma\left(\frac{n+2}{2}\right)^n}.
\end{equation}

Therefore,
\begin{align}
A_{n,d} =   \left(\frac{ 2^{n^2}  \pi^{n+1} \Gamma\left(n+1 \right)^n }{\left(2\pi\right)^{n\left(n+1\right)}
\Gamma\left( \frac{n+1}{2} \right)^2} \right) D_n
&=   \left(\frac{ 2^{n^2}  \pi^{n+1} \Gamma\left(n+1 \right)^n }{\left(2\pi\right)^{n\left(n+1\right)}
\Gamma\left( \frac{n+1}{2} \right)^2 d^{\frac{3}{2} n}} \right) \left( \frac{d^n (d-1)}{2} \right)
\left(\pi^{ \lceil{\frac{n}{2}  \rceil}} 2^{ \lfloor{\frac{n}{2} \rfloor}}
\frac{n!!}{2n!!}\right)^n \frac{\Gamma\left(n+1\right)\pi^{\frac{n^2}{2}}}{\Gamma\left(\frac{n+2}{2}\right)^n} \\
&= \left(\frac{d-1}{d^{\frac{n}{2}}}\right) \frac{\sqrt{\pi}\ \Gamma\left(\frac{n+2}{2}\right)}{2\ \Gamma\left( \frac{n+1}{2} \right)} = C_{n,d}.
\end{align}
Thus, 
\begin{eqnarray*}
\mathcal{K}_{n,d}\left(t\right) = \left(\frac{d-1}{d^{\frac{n}{2}}}\right) \frac{\sqrt{\pi}\ \Gamma\left(\frac{n+2}{2}\right)}{2\ \Gamma\left( \frac{n+1}{2} \right)} t^{2-n} + O\left(t^{3-n}\right),
\end{eqnarray*}
 as stated in Part (1) from Theorem \ref{THM:LOCAL}.

The only differences when computing $\K_n(t)$ instead of $\K_{n,d}(t)$  are:
\begin{enumerate}
\item $\lambda_1^{-1/2} = \sqrt{2} + O(t^2)$ instead of $\sqrt{2d} + O(t^2)$ and $\lambda_3^{-1/2} = t + O(t^2)$ rather than  $\lambda_3^{-1/2} = \sqrt{\frac{d-1}{d}}t + O(t^2)$,
\item the factor of $d^{-\frac{1}{2} n}$ in (\ref{EIG_AND_DET}) is missing, and 
\item the factors of $d^{\frac{1}{2} n}$  are missing from the expression for the density of the zeros.
\end{enumerate}
One can readily check that 
this results
in the factor of $\frac{d-1}{d^{\frac{n}{2}}}$ being removed from the constant:
\begin{eqnarray*}
\mathcal{K}_{n}\left(t\right) =  \frac{\sqrt{\pi}\ \Gamma\left(\frac{n+2}{2}\right)}{2\ \Gamma\left( \frac{n+1}{2} \right)} t^{2-n} + O\left(t^{3-n}\right)
\end{eqnarray*}

\qed \, (Part (1) of Theorem \ref{THM:LOCAL}).

\section{Proof of Part (2) from Theorem \ref{THM:LOCAL}: long-range asymptotics} \label{SEC:LONG_RANGE}

It will be convenient to apply the Kac-Rice formulae to the ensemble $\bm{g}$
given in (\ref{G}), which has the same zeros as the $\isoR$-invariant ensemble
$\bm{f}$.  Let $\bold{C}_{n,\bm{g}}$ denote the covariance matrix applied to
random vector (\ref{Vector}) for this ensemble.  Recall that $\bm{x}$ and
$\bm{y}$ are given by (\ref{EQN:DEF_XY}).  The following covariances can be
computed from those in (\ref{ALPHA_ISO}-\ref{TAU_ISO}) and the product rule:
\begin{eqnarray}\label{GAF_COV_FIRST}
E\left(g(\bm{x}) g(\bm{x})\right) &=& E\left(g(\bm{y}) g(\bm{y})\right)= 1,\label{GCOV_FIRST} \\
E\left(g(\bm{x}) \frac{\partial g(\bm{x})}{\partial x_j }\right) &=& E\left(g(\bm{y}) \frac{\partial g(\bm{y})}{\partial y_j }\right) = 0, \\
E\left(\frac{\partial g(\bm{x})}{\partial x_j }\frac{\partial g(\bm{x})}{\partial x_j }\right) &=& E\left(\frac{\partial g(\bm{y})}{\partial y_j }\frac{\partial g(\bm{y})}{\partial y_j }\right) = 1, \qquad  \\
E\left(\frac{\partial g(\bm{x})}{\partial x_i }\frac{\partial g(\bm{x})}{\partial x_j }\right) &=& E\left(\frac{\partial g(\bm{y})}{\partial y_i }\frac{\partial g(\bm{y})}{\partial y_j }\right) =  0 \qquad \mbox{if} \qquad i \neq j,  \label{GCOV_MIDDLE} \\
E\left(g(\bm{x})g(\bm{y})\right) &=& {\rm e}^{-\frac{1}{2}  ||\bm{x}-\bm{y}||^2} = {\rm e}^{-\frac{t^2}{2}}, \\
 E\left(g(\bm{x}) \frac{\partial g(\bm{y})}{\partial y_j }\right) &=& - E\left(g(\bm{y}) \frac{\partial g(\bm{x})}{\partial x_j }\right) = {\rm e}^{-\frac{1}{2}  ||\bm{x}-\bm{y}||^2} \left(x_j-y_j\right)
= \begin{cases} 0 & \mbox{if } j \neq n \\ - t e^{-\frac{t^2}{2}}  & \mbox{if } j = n.\end{cases}, \\
E\left(\frac{\partial g(\bm{x})}{\partial x_i }\frac{\partial g(\bm{y})}{\partial y_i }\right) &=& {\rm e}^{-\frac{1}{2}  ||\bm{x}-\bm{y}||^2}\left( 1-\left
(x_i-y_i\right)^2 \right)
= \begin{cases} {\rm e}^{-\frac{t^2}{2}}   \quad  \mbox{if } \quad i \neq n \\
 {\rm e}^{-\frac{t^2}{2}}(1-t^2)     \quad \mbox{if } \quad i = n
\end{cases}, \qquad \mbox{and} \\
E\left(\frac{\partial g(\bm{x})}{\partial x_j }\frac{\partial g(\bm{y})}{\partial y_k }\right) &=&  - {\rm e}^{-\frac{1}{2}  ||\bm{x}-\bm{y}||^2} \left(x_j-y_j\right)\left(x_k-y_k\right) =  0.  \label{GAF_COV_LAST}
\end{eqnarray}
Remark that $\bold{C}_{n,\bm{g}}$ has the structure asserted in (\ref{COV_MATRIX_CALC}) and that
\begin{eqnarray*}
\det(\bold{C}_{n,\bm{g}}) =  
\left( 1-{{\rm e}^{-{t}^{2}}} \right) ^{n(n-1)}
 \left(1+ {{\rm e}^{-\frac{1}{2}\,{t}^{2}}}{t}^{2}-{
{\rm e}^{-{t}^{2}}} \right)^n  \left( 1-{{\rm e}^{-\frac{1}{2}\,{t}^{2}}}{t}^{
2}-{{\rm e}^{-{t}^{2}}} \right)^n >  0
\end{eqnarray*}
for all $t > 0$, so that $\bold{C}_{n,\bm{g}}$ is positive definite.

The proof will rely upon two facts:
\begin{align}
\left(\det \bold{C}_{n,\bm{g}}\right)^{-\frac{1}{2}}  = 1+O(t^4 \mathrm{e}^{-t^2}) \quad \mbox{and} \label{FACT1} \\
||\bold{I} - \bm{\Omega}_{n,\bm{g}}||_{\infty} = O(t^2 \mathrm{e}^{-t^2}), \label{FACT2}
\end{align}
where $|| \ ||_{\infty}$ denotes the maximum entry of the matrix.
The former can be obtained from expression (\ref{DETC_GENERAL}). The latter 
follows from the calculations above
and Lemma
\ref{OmegaLem} expressing $\bold{\Omega}_{n,\bm{g}}$ in terms of the entries of
$\bold{C}_{n,\bm{g}}$.

The covariance matrix for random vector (\ref{Vector_density}) is
the identity, by (\ref{GCOV_FIRST}-\ref{GCOV_MIDDLE}) above. Thus,
the Kac-Rice Formula for the density of zeroes of $\bm{g}(\bm{x})$ gives
\begin{align} 1 &= \frac{ \rho(\bm{x})\rho(\bm{y})}{ \rho(\bm{x})\rho(\bm{y})}=\frac{1}{\rho(\bm{x})\rho(\bm{y})} \left(  \frac{1}{(2\pi)^{\frac{n(n+1)}{2}}} \int\limits_{\mathbb{R}^{n^2}} |\det \bm{\xi} | \mathrm{e}^{-\frac{1}{2}(\bm{\xi},\bm{\xi})} d\bm{\xi}\right) \left( \frac{1}{(2\pi)^{\frac{n(n+1)}{2}}} \int\limits_{\mathbb{R}^{n^2}} |\det \bm{\eta} | \mathrm{e}^{-\frac{1}{2}(\bm{\eta},\bm{\eta})} d\bm{\eta}\right) \nonumber \\ &=\label{IIntegral} \frac{1}{(2\pi)^{n(n+1)} \rho(\bm{x})\rho(\bm{y})} \int\limits_{\mathbb{R}^{2n^2}} | \det \bm{\xi}| |\det \bm{\eta}| \mathrm{e}^{-\frac{1}{2} \left(\bm{u},\bm{u}\right)} d\bm{u}. \end{align}

Since $\bold{C}_{n,\bm{g}}$ is positive definite,
the Kac-Rice formula for two-point correlations (\ref{K1}) and Equation (\ref{IIntegral}) give
\begin{align} |\mathcal{K}_n(t)-1| &= \left|\left(\frac{1}{\left(2\pi\right)^{n\left(n+1\right)} \rho(\bm{x})\rho(\bm{y})\sqrt{\det \bold{C}_{n,\bm{g}}}} \int\limits_{\mathbb{R}^{2n^2}} | \det \bm{\xi}| |\det \bm{\eta}| \mathrm{e}^{-\frac{1}{2} \left(\bm{\Omega}_{n,\bm{g}}\bm{u},\bm{u}\right)} d\bm{u}\right) - 1\right| \nonumber \\ 
&= \frac{1}{(2\pi)^{n(n+1)} \rho(\bm{x})\rho(\bm{y})\sqrt{\det \bold{C}_{n,\bm{g}}}} \\ &\cdot  \left| \, \int\limits_{\mathbb{R}^{2n^2}} | \det \bm{\xi}| |\det \bm{\eta}| \mathrm{e}^{-\frac{1}{2} \left(\bm{\Omega_{n,\bm{g}}}\bm{u},\bm{u}\right)} d\bm{u}  - \sqrt{\det \bold{C}_{n,\bm{g}}} \int\limits_{\mathbb{R}^{2n^2}} | \det \bm{\xi}| |\det \bm{\eta}| \mathrm{e}^{-\frac{1}{2} \left(\bm{u},\bm{u}\right)} d\bm{u} \right| \nonumber \\
&\leq \frac{\sqrt{\det \bold{C}_{n,\bm{g}}}}{(2\pi)^{n(n+1)} \rho(\bm{x})\rho(\bm{y})} \left| \,  \int\limits_{\mathbb{R}^{2n^2}}| \det \bm{\xi}| |\det \bm{\eta}| \mathrm{e}^{-\frac{1}{2} \left(\bm{\Omega_{n,\bm{g}}}\bm{u},\bm{u}\right)} d\bm{u} - \int\limits_{\mathbb{R}^{2n^2}} | \det \bm{\xi}| |\det \bm{\eta}| \mathrm{e}^{-\frac{1}{2} \left(\bm{u},\bm{u}\right)} d\bm{u} \right| \label{PART1} \\
&+ \frac{\left|1 - (\det \bold{C}_{n,\bm{g}})^{-\frac{1}{2}} \right|}{(2\pi)^{n(n+1)} \rho(\bm{x})\rho(\bm{y})}  \int\limits_{\mathbb{R}^{2n^2}} | \det \bm{\xi}| |\det \bm{\eta}| \mathrm{e}^{-\frac{1}{2} \left(\bm{u},\bm{u}\right)} d\bm{u}. \label{PART2}
\end{align}
Equation (\ref{PART2}) is $O(t^4 \mathrm{e}^{-t^2})$ by (\ref{FACT1}).

From Lemma \ref{COCV3} Part 2, using $\bold{A} = \bold{I}$ and $\bold{B} = \bm{\Omega_{n,\bm{g}}}$, we have
\begin{align}
\left| \int\limits_{\mathbb{R}^{2n^2}} | \det \bm{\xi}| |\det \bm{\eta}| \mathrm{e}^{-\frac{1}{2} \left(\bm{\Omega}_{n,\bm{g}}\bm{u},\bm{u}\right)}  -  \int\limits_{\mathbb{R}^{2n^2}} | \det \bm{\xi}| |\det \bm{\eta}| \mathrm{e}^{-\frac{1}{2} \left(\bold{I}\bm{u},\bm{u}\right)} d\bm{u} \right| 
= O\left(||\bold{I}-\bm{\Omega}_{n,\bm{g}}||_{\infty}^{1/2}\right) = O\left(t \mathrm{e}^{-\frac{t^2}{2}}\right).
\end{align}

Therefore, $|\mathcal{K}(t) - 1| = O\left(t \mathrm{e}^{-\frac{t^2}{2}}\right)$, so we obtain the desired result.

\qed \, (Part (2) of Theorem \ref{THM:LOCAL}).


\section{Proof of Theorem \ref{THM:UNIVERSALITY}: Universality} \label{SEC:UNIVERSALITY}

This section is devoted to a proof of  Theorem
\ref{THM:UNIVERSALITY}.   It will be divided into three parts: 1)
Reduction to a local version, 2) Statement and proofs of two lemmas, and 3)
Proof of the local version.

\subsection{Reduction to a local version of Theorem \ref{THM:UNIVERSALITY}}\label{SUBSEC_REDUCTION}

Let the homogeneous coordinates on $\mathbb{RP}^k$ be denoted
$[Z_1,\ldots,Z_{k+1}]$.  After applying a suitable isometry from $SO(k+1)$ we
can assume that $p = [0:\cdots:0:1]$, allowing us to work in the affine (local)
coordinates
\begin{align}\label{LOCAL_COORDS_RPK}
z_1 = \frac{Z_1}{Z_{k+1}}, \quad \ldots, \quad z_k = \frac{Z_k}{Z_{k+1}}
\end{align}
having $[0:\cdots:0:1] \in \mathbb{RP}^k$ as their origin.

In these local coordinates, the tangent space $T_p(M)$ becomes an
$n$-dimensional linear subspace of $\mathbb{R}^k$ containing the two points
$\bm{x} \neq \bm{y}$.  We can now rotate about $\bm{0}$ in these local
coordinates by an element of $SO(k)$  \big(corresponding to an element of
$SO(k+1)$ that fixes $[0:\cdots:0:1]$\big) allowing us to assume that
\begin{enumerate}
\item $T_p M = {\rm span}({\bm e}_1,\ldots,{\bm e}_n)$, where ${\bm e}_1,\ldots,{\bm e}_k$ are the standard basis vectors on $\mathbb{R}^k$, and 
\item $\bm{x} = (0,\ldots,0,s,t)$ and $\bm{y} = (0,\ldots,0,u)$ in the local coordinates $(z_1,\ldots,z_n)$ on $T_p M \equiv \mathbb{R}^n$,
\end{enumerate}
where $(s,t) \neq (0,u)$.
Since the ensemble (\ref{System}) on $\mathbb{RP}^k$ is invariant under elements of
$SO(k+1)$ and since we have rotated the submanifold $M$ and the points $\bm{x}$
and $\bm{y}$ under the same composition of elements of $SO(k+1)$, the
correlation function remains the same.

By our choice 1, above, $M$ is locally expressed as a graph of a $C^2$ function $\bm{\psi}: \mathbb{R}^n \rightarrow \mathbb{R}^{k-n}$
that satisfies
\begin{eqnarray}\label{PSI_VANISHING_TO_ORDER2}
\bm{\psi}(\bm{0}) = \bm{0} \qquad \mbox{and} \qquad D\bm{\psi}(\bm{0}) = \bm{0}.
\end{eqnarray}
The orthogonal projection  $\proj_p: T_p M \equiv \mathbb{R}^n  \rightarrow M$
is given by $\proj_p({\bm w}) = \left(({\bm w}),{\bm \psi}({\bm w})\right)$ for any
${\bm w} \in \mathbb{R}^n$.

In affine coordinates (\ref{LOCAL_COORDS_RPK}), the $SO(k+1)$-invariant polynomials are
\begin{equation} \label{AffineEnsemble2}
f_d(\bm{z}) = \sum\limits_{|\bm{\alpha}| \leq d} \sqrt{\binom{d}{\bm{\alpha}}} a_{\bm{\alpha}} \bm{z}^{\bm{\alpha}} \quad \mbox{where} \quad \binom{d}{\bm{\alpha}} = \frac{d!}{(d-|\bm{\alpha}|)!\prod\limits_{i=1}^{n} \alpha_{i}!}.
\end{equation}
and the $a_{\bm{\alpha}}$ are iid on the standard normal distribution $\mathcal{N}\left(0, 1\right)$.

The correlation between zeros
\begin{eqnarray*}
K_{n,d,M}\left(\proj_p\left(\frac{\bm x}{\sqrt{d}}\right),\proj_p \left(\frac{\bm y}{\sqrt{d}}
\right)\right)
\end{eqnarray*}
is the same as the correlation between zeros for the pull-back of this ensemble to the tangent space $T_pM \equiv \mathbb{R}^n \subset \mathbb{R}^k$ under $\proj_p$, which is given
by systems of $n$ functions chosen iid of the form
\begin{eqnarray}\label{TANGENT_DIST}
h_{d,\bm{\psi}}\left(\frac{\bm{x}}{\sqrt{d}}\right) := f_d\left(\frac{\bm{x}}{\sqrt{d}},\bm{\psi}\left(\frac{\bm{x}}{\sqrt{d}}\right)\right).
\end{eqnarray}
This follows because one need not use round balls in the definition (\ref{DefK}) of the correlation function--any
sequence of neighborhoods that is sufficiently nice for computing a Radon-Nikodym derivative suffices (see Remark \ref{RMK_NBHDS}).  If one uses round balls
$N_\delta\left(\proj_p\left(\frac{\bm x}{\sqrt{d}}\right)\right)$ and $N_\delta\left(\proj_p\left(\frac{\bm y}{\sqrt{d}}\right)\right)$ in the definition of $K_{n,d,M}\left(\proj_p\left(\frac{\bm x}{\sqrt{d}}\right),\proj_p \left(\frac{\bm y}{\sqrt{d}}\right)\right)$, then their preimages under the $C^2$ mapping $\proj_p$
will be suitable neighborhoods for defining the correlation function for the pull-back (\ref{TANGENT_DIST}).

Thus, we have reduced the statement of Theorem \ref{THM:UNIVERSALITY} to:
\begin{thm}[\bf Local version of Theorem \ref{THM:UNIVERSALITY}]\label{LOCAL_UNIVERSALITY}
Let  $K_{n,d,\bm{\psi}} \equiv K_{d,\bm{\psi}}$ denote the correlation function for systems of $n$ functions chosen iid of the form
(\ref{TANGENT_DIST}) and let $K_n$ denote the correlation function for the $\isoR$-invariant system~(\ref{GAF}).  

For any $s,t,u \in \mathbb{R}$, if  $\bm{x} = (0,\ldots,0,s,t) \in \mathbb{R}^n$, $\bm{y} = (0,\ldots,0,u) \in \mathbb{R}^n$, and $\bm{x} \neq \bm{y}$, then 
\begin{eqnarray*}
K_{n,d,\bm{\psi}}\left(\left(\frac{\bm{x}}{\sqrt{d}}\right),\left(\frac{\bm y}{\sqrt{d}} \right)\right) =  K_n\big(\bm{x}, \bm{y} \big) + O\left(\frac{1}{\sqrt{d}}\right),
\end{eqnarray*}
with the constant implicit in the $O$ notation depending uniformly on compact
subsets of $\mathbb{R}^2 \times \mathbb{R} \setminus
\{(s,t) = (0,u)\}$.
\end{thm}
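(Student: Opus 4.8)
The plan is to express $K_{n,d,\bm{\psi}}$ through the two-point Kac--Rice formula \eqref{K1} after an overall rescaling, and then to show that every ingredient of that formula converges, as $d\to\infty$, to its counterpart for the $\isoR$-invariant ensemble. First I would remove the $1/\sqrt{d}$ from the arguments by setting
\[
\bm{g}_d(\bm{X}):=\bm{h}_{d,\bm{\psi}}\!\left(\frac{\bm{X}}{\sqrt{d}}\right),\qquad\text{so that}\qquad g_{d,i}(\bm{X})=f_{d,i}\!\left(\frac{\bm{X}}{\sqrt{d}},\ \bm{\psi}\!\left(\frac{\bm{X}}{\sqrt{d}}\right)\right),
\]
with $\bm{h}_{d,\bm{\psi}}$ as in \eqref{TANGENT_DIST}. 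A zero of $\bm{h}_{d,\bm{\psi}}$ in $N_\delta(\bm{x}/\sqrt{d})$ corresponds to a zero of $\bm{g}_d$ in $N_{\delta\sqrt{d}}(\bm{x})$, and since the correlation function \eqref{DefK} is insensitive to the exact shape of the shrinking neighbourhoods used to define it (Remark \ref{RMK_NBHDS}), one gets $K_{n,d,\bm{\psi}}(\bm{x}/\sqrt{d},\bm{y}/\sqrt{d})=K_{\bm{g}_d}(\bm{x},\bm{y})$, the correlation function for $\bm{g}_d$ at the \emph{fixed} points $\bm{x},\bm{y}$. The components $g_{d,i}$ are only $C^2$ (because $\bm{\psi}$ is), but the Kac--Rice formula \eqref{K1} needs only that, together with positive-definiteness of the covariance matrix of \eqref{Vector}; both requirements will be met.

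The crux is computing the covariance matrix $\bm{C}_d$ of the Gaussian vector \eqref{Vector} for $\bm{g}_d$ at $\bm{x},\bm{y}$. Writing $\bm{X}':=(\bm{X}/\sqrt{d},\bm{\psi}(\bm{X}/\sqrt{d}))\in\mathbb{R}^k$, one has $E\big(g_{d,i}(\bm{X})\,g_{d,i}(\bm{Y})\big)=(1+\bm{X}'\cdot\bm{Y}')^{d}$, and the entries involving gradients follow by differentiating this identity and using the chain rule to pass from $\partial/\partial X_j'$ to $\partial/\partial X_j$. Because $\bm{\psi}$ is $C^2$ with $\bm{\psi}(\bm{0})=\bm{0}$ and $D\bm{\psi}(\bm{0})=\bm{0}$, we have $\|\bm{\psi}(\bm{w})\|=O(\|\bm{w}\|^2)$ and $\|D\bm{\psi}(\bm{w})\|=O(\|\bm{w}\|)$ uniformly for $\bm{w}$ near $\bm{0}$; hence $\bm{X}'\cdot\bm{Y}'=\tfrac{1}{d}\,\bm{X}\cdot\bm{Y}+O(1/d^{2})$ and every chain-rule correction carrying a factor of $\bm{\psi}$ or $D\bm{\psi}$ is of strictly lower order. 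Together with $(1+\tfrac{1}{d}\,\bm{X}\cdot\bm{Y}+O(1/d^{2}))^{d}=\mathrm{e}^{\bm{X}\cdot\bm{Y}}(1+O(1/d))$, this shows, entry by entry, that $\bm{C}_d=\bm{C}_n+O(1/d)$, where $\bm{C}_n$ is the covariance matrix of \eqref{Vector} for the $\isoR$-invariant ensemble $\bm{f}$ at the same $\bm{x},\bm{y}$, computed from \eqref{ISO_FIRST}--\eqref{ISO_LAST}. The matrix $\bm{C}_n$ is positive definite (as the covariance of a nondegenerate Gaussian vector; for $\bm{x}\neq\bm{y}$ this reduces, via the $\isoR$-invariance of the zero set, to Lemma \ref{LEM:POSITIVE_SPECIAL_POINTS}), so for all large $d$ so is $\bm{C}_d$, and \eqref{K1} applies to both ensembles. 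Since $\det\bm{C}_n$ is bounded below, the perturbation $\bm{C}_d=\bm{C}_n+O(1/d)$ gives $(\det\bm{C}_d)^{-1/2}=(\det\bm{C}_n)^{-1/2}+O(1/d)$ and $\bm{\Omega}_d=\bm{\Omega}_n+O(1/d)$, as $\bm{\Omega}$ is a principal submatrix of $\bm{C}^{-1}$; likewise the one-point densities satisfy $\rho_d(\bm{x})\to\rho(\bm{x})$ and $\rho_d(\bm{y})\to\rho(\bm{y})$ at rate $O(1/\sqrt{d})$ by \eqref{K_density} and Lemma \ref{COCV3}.

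To conclude I would write both $K_{\bm{g}_d}(\bm{x},\bm{y})$ and $K_n(\bm{x},\bm{y})$ using \eqref{K1}. The prefactors agree up to $O(1/\sqrt{d})$; the Kac--Rice integrals $\int_{\mathbb{R}^{2n^2}}|\det\bm{\xi}|\,|\det\bm{\eta}|\,\mathrm{e}^{-\frac12(\bm{\Omega}\bm{u},\bm{u})}\,d\bm{u}$ are finite and stay uniformly bounded (the least eigenvalue of $\bm{\Omega}_d$ being bounded below for large $d$); and Lemma \ref{COCV3}, applied with the pair $\bm{\Omega}_n,\bm{\Omega}_d$, bounds the difference of the two integrals by $O\big(\|\bm{\Omega}_d-\bm{\Omega}_n\|_\infty^{1/2}\big)=O(1/\sqrt{d})$. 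Combining these yields $K_{n,d,\bm{\psi}}(\bm{x}/\sqrt{d},\bm{y}/\sqrt{d})=K_n(\bm{x},\bm{y})+O(1/\sqrt{d})$. For the uniformity statement I would note that every estimate above is uniform for $(s,t,u)$ in a compact subset of $\mathbb{R}^2\times\mathbb{R}\setminus\{(s,t)=(0,u)\}$: the bounds $\|\bm{\psi}(\bm{w})\|=O(\|\bm{w}\|^2)$ and $\|D\bm{\psi}(\bm{w})\|=O(\|\bm{w}\|)$ are uniform near the origin, while the positivity constants of $\bm{C}_n$ and $\bm{\Omega}_n$ and the continuous dependence of $\bm{C}_n$, $\det\bm{C}_n$, $\bm{\Omega}_n$ on $(\bm{x},\bm{y})$ degenerate only as $\bm{x}\to\bm{y}$, which is excluded.

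The main obstacle is the covariance estimate of the second step: one has to Taylor-expand $(1+\bm{X}'\cdot\bm{Y}')^{d}$ and its first and second derivatives and verify, through the chain rule, that every term produced by $\bm{\psi}$ is $O(1/d)$ relative to the leading term — this is the one place where the $C^2$ regularity of $M$ enters, and it is exactly what is available. The appearance of the rate $O(1/\sqrt{d})$, rather than $O(1/d)$, is then forced by the square root in Lemma \ref{COCV3}, the single step at which the Kac--Rice integrand fails to depend smoothly on the matrix $\bm{\Omega}$.
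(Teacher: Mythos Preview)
Your approach is essentially the paper's: apply the two-point Kac--Rice formula \eqref{K1} to both ensembles at the fixed points $\bm{x},\bm{y}$, establish $\bm{C}_d=\bm{C}_n+O(1/d)$ and hence $\bm{\Omega}_d=\bm{\Omega}_n+O(1/d)$ (this is Lemma~\ref{LEM_CONV_COV_MATRICES} in the paper), and then invoke Lemma~\ref{COCV3} for both the one-point densities and the correlation integral to get the $O(1/\sqrt{d})$ rate.

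One point needs tightening. Your reduction of the positive-definiteness of $\bm{C}_n$ to Lemma~\ref{LEM:POSITIVE_SPECIAL_POINTS} ``via the $\isoR$-invariance of the zero set'' does not work as stated: the covariance kernel $E(f(\bm{z})f(\bm{w}))=e^{\bm{z}\cdot\bm{w}}$ of the ensemble $\bm{f}$ is \emph{not} invariant under isometries, so the covariance matrix of the vector \eqref{Vector} at $(0,\ldots,0,s,t),(0,\ldots,0,u)$ is not obtained from the one at the symmetric points by an orthogonal conjugation. To salvage the argument you must first pass to $\bm{g}=e^{-\|\cdot\|^2/2}\bm{f}$ (whose covariance \eqref{Covariance} \emph{is} isometry-invariant), note that multiplication by a smooth nonvanishing scalar induces an invertible linear map on the vector \eqref{Vector} and hence preserves positive-definiteness, and then move the pair by an isometry. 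The paper sidesteps this by computing $\det\bm{C}_n$ directly at the asymmetric points in Lemma~\ref{GAF_POS_DEF}.
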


We will need the following more detailed notation in the next two subsections.
Let $\bm{\psi}(\bm{x}) = (\psi_1(\bm{x}),\ldots,
\psi_{k-n}(\bm{x}))$.  If we write $\bm{\alpha} = (\bm{\beta},\bm{\gamma})$ with
$\bm{\beta} \in \mathbb{Z}_+^n$ and $\bm{\gamma} \in \mathbb{Z}_+^{k-n}$, then (\ref{TANGENT_DIST}) becomes
\begin{eqnarray} 
h_{d,\bm{\psi}} \left(\frac{\bm{x}}{\sqrt{d}}\right) &=& \sum\limits_{|\bm{\beta}| + |\bm{\gamma}| \leq d}
b_{(\bm{\beta},\bm{\gamma})}
\left(\frac{\bm{x}}{\sqrt{d}}\right)^{\bm{\beta}}
\left(\bm{\psi}\left(\frac{\bm{x}}{\sqrt{d}}\right)\right)^{\bm{\gamma}} \\
\quad \mbox{where} \quad b_{(\bm{\beta},\bm{\gamma})} &:=& \sqrt{\binom{d}{\bm{\beta} \,\, \bm{\gamma}}} a_{(\bm{\beta},\bm{\gamma})} \quad \mbox{and} \quad \binom{d}{\bm{\beta} \,\, \bm{\gamma}} = \frac{d!}{(d-|\bm{\beta}|-|\bm{\gamma}|)!\prod\limits_{i=1}^{n} \beta_{i}! \prod\limits_{i=1}^{k-n} \gamma_{i}!}. \nonumber
\end{eqnarray}
As before, the coefficients $a_{(\bm{\beta},\bm{\gamma})}$ are iid on the standard normal distribution $\mathcal{N}\left(0, 1\right)$.

\subsection{Two lemmas}

\begin{lemma}\label{GAF_POS_DEF}
For any $\bm{x} = (0,0,\ldots,s,t)$ and $\bm{y} = (0,0,\ldots,0,u)$ in $\mathbb{R}^n$ with $\bm{x} \neq \bm{y}$ we have:
\begin{enumerate}

\item 
The covariance matrix $\bold{C}$ corresponding to random vector
(\ref{Vector_density}) from the Kac-Rice formula for density (\ref{K_density}) applied to the $\isoR$-invariant ensemble~(\ref{GAF}) (at $\bm x$ or $\bm y$) is
positive definite.  The submatrix $\bold{\Omega}$ of $\bold{C}^{-1}$
defined in (\ref{K_density}) is also positive definite.

\item 
The covariance matrix $\bold{C}$ corresponding to random vector
(\ref{Vector}) from the Two Point Kac-Rice formula (\ref{K1}) applied to the $\isoR$-invariant ensemble~(\ref{GAF}) is
positive definite.  The submatrix $\bold{\Omega}$ of $\bold{C}^{-1}$
defined in (\ref{K1}) is also positive definite.
\end{enumerate}
\end{lemma}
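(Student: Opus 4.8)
\emph{Proof proposal.} The plan is to reduce both parts to positive definiteness of the covariance matrix $\bold{C}$ alone. Indeed, in each part the matrix $\bold{\Omega}$ is, by its definition in (\ref{K_density}) and (\ref{K1}), a principal submatrix of $\bold{C}^{-1}$ (delete a set of rows together with the same set of columns), and a principal submatrix of a symmetric positive definite matrix is again positive definite; so once $\bold{C}$ is shown positive definite, $\bold{C}^{-1}$ and hence $\bold{\Omega}$ are automatically positive definite. Next I would replace the $\isoR$-invariant ensemble $\bm{f}$ of (\ref{GAF}) by the ensemble $\bm{g}$ of (\ref{G}), which has the same zeros. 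Since $g_i(\bm{z}) = \mathrm{e}^{-\|\bm z\|^2/2} f_i(\bm z)$ and $\nabla g_i(\bm z) = \mathrm{e}^{-\|\bm z\|^2/2}\big(\nabla f_i(\bm z) - \bm z\, f_i(\bm z)\big)$, the random vector (\ref{Vector_density}) (for Part 1), resp.\ (\ref{Vector}) (for Part 2), built from $\bm{g}$ is obtained from the one built from $\bm{f}$ by an invertible block-lower-triangular linear map; hence the two covariance matrices are congruent, and one is positive definite iff the other is. Being a covariance matrix, $\bold{C}$ is automatically positive semidefinite, so only nondegeneracy must be checked, and by the iid choice of components it suffices to check nondegeneracy of the single-component block $\tilde{\bold C}$ (of size $n+1$ in Part 1, size $2n+2$ in Part 2).

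For Part 1 this is immediate: the covariance kernel of each $g_i$ is $K(\bm a,\bm b) = \mathrm{e}^{-\|\bm a - \bm b\|^2/2}$, and differentiating at coincident arguments gives $E(g_i(\bm x) g_i(\bm x)) = 1$, $E(g_i(\bm x)\partial_j g_i(\bm x)) = 0$, $E(\partial_i g_i(\bm x)\partial_j g_i(\bm x)) = \delta_{ij}$ (these are (\ref{GCOV_FIRST})--(\ref{GCOV_MIDDLE}), valid at \emph{every} point by translation invariance of $K$). Thus $\tilde{\bold C} = \bold{I}_{n+1}$ and $\bold{C} = \bold{I}_{n(n+1)}$, which is positive definite, proving Part 1 (and $\bold{\Omega} = \bold I_{n^2}$).

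For Part 2 I would invoke the isometry invariance of $\bm{g}$ from Proposition \ref{InvarianceGAF}. Since $\bm x \neq \bm y$, set $\tau := \|\bm x - \bm y\| > 0$ and choose an isometry $I(\bm w) = R\bm w + \bm c$ of $\mathbb{R}^n$, $R \in O(n)$, carrying $(\bm x,\bm y)$ to the special pair $\bm x_0 = (0,\ldots,0,-\tau/2)$, $\bm y_0 = (0,\ldots,0,\tau/2)$. By the chain rule the random vector (\ref{Vector}) for $\bm{g}$ at $(\bm x_0,\bm y_0)$ is an invertible (indeed orthogonal, block-diagonal) linear image of the corresponding vector for $\bm g \circ I$ at $(\bm x,\bm y)$, and $\bm g \circ I$ has the same law as $\bm g$; hence the covariance matrix of (\ref{Vector}) at $(\bm x,\bm y)$ is congruent to the matrix $\bold{C}_{n,\bm g}(\tau)$ computed at the special pair in Section \ref{SEC:LONG_RANGE}, namely
\[
\det \bold{C}_{n,\bm g}(\tau) = \big(1 - \mathrm{e}^{-\tau^2}\big)^{n(n-1)}\,\big(1 + \tau^2 \mathrm{e}^{-\tau^2/2} - \mathrm{e}^{-\tau^2}\big)^{n}\,\big(1 - \tau^2 \mathrm{e}^{-\tau^2/2} - \mathrm{e}^{-\tau^2}\big)^{n}.
\]
Each factor is strictly positive for $\tau > 0$: the first two visibly so, and for the third, writing $w = \tau^2$, positivity of $1 - w\mathrm{e}^{-w/2} - \mathrm{e}^{-w}$ is equivalent to $\phi(w) := \mathrm{e}^{w} - w\mathrm{e}^{w/2} - 1 > 0$, which holds since $\phi(0) = 0$ and $\phi'(w) = \mathrm{e}^{w/2}\big(\mathrm{e}^{w/2} - 1 - w/2\big) > 0$ for $w > 0$. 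Hence $\bold{C}_{n,\bm g}(\tau)$, and therefore $\bold{C}$, is positive definite, which finishes Part 2.

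The only places needing care are the bookkeeping verifications that each change of variables above is invertible (so that congruence transfers positive definiteness) and the elementary estimate $\phi(w) > 0$; there is no substantive obstacle, since the heavy determinant computation has already been carried out in Section \ref{SEC:LONG_RANGE}.
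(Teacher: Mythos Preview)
Your proof is correct, but it follows a genuinely different route from the paper's. The paper works directly with the $\bm f$-ensemble at the given points $\bm x=(0,\ldots,0,s,t)$ and $\bm y=(0,\ldots,0,0,u)$: it writes out $\tilde{\bold C}$ explicitly using (\ref{ISO_FIRST})--(\ref{ISO_LAST}), permutes it into $n-2$ identical $2\times 2$ blocks plus one $6\times 6$ block (displayed as (\ref{WEIRD_DET})), and computes each block determinant as an explicit function of $r=s^2+(t-u)^2$, checking positivity by hand. Your approach instead (i) passes to the $\bm g$-ensemble via the invertible lower-triangular change of variables, which reduces Part~1 immediately to $\bold C=\bold I$, and (ii) for Part~2 uses the isometry invariance of $\bm g$ to transport the pair $(\bm x,\bm y)$ to the symmetric pair $(0,\ldots,0,-\tau/2),(0,\ldots,0,\tau/2)$, thereby replacing a fresh $6\times 6$ determinant computation by the determinant of $\bold C_{n,\bm g}$ already computed in Section~\ref{SEC:LONG_RANGE}. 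This is more conceptual and reuses existing work (there is no circularity, since Section~\ref{SEC:LONG_RANGE} precedes the lemma and establishes that determinant independently). The paper's approach, by contrast, is self-contained at the level of this lemma and makes the dependence on $(s,t,u)$ through $r$ explicit, which is convenient when the same $6\times 6$ block reappears in the complex analogue in Section~\ref{SEC:COMPLEX_CASE}.
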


\begin{proof}
We give the proof of Part 2 leaving the necessary modifications for Part 1 to the reader.

It is a general fact from probability theory that the
covariance matrix of a random vector is positive semi-definite.  Thus, it will
be sufficient for us to check that $\det(\bold{C}) > 0$.
We substitute 
$\bm{x} = (0,0,\ldots,s,t)$ and $\bm{y} = (0,0,\ldots,0,u)$ into the covariances computed in Equations 
(\ref{ISO_FIRST}-\ref{ISO_LAST}) 
from the proof of Lemma \ref{COV_MATRIX_CALC}, obtaining
that
$\bold{C}$ is of the form $\diag_n(\tilde{\bold{C}})$, where
 $\begin{displaystyle}\bold{\tilde{C}}= \left[ \begin{matrix}
\bold{A} & \bold{B}^{\intercal} \\
\bold{B}  & \bold{D} \\ \end{matrix} \right]
\end{displaystyle}$  and $\bold{A}, \bold{B}$, and $\bold{D}$ are the following $(n+1) \times (n+1)$ matrices:
 \begin{align}
\bold{A} =  e^{s^2+t^2} \left[ \begin{matrix}
1 &  0  & \dots & s & t \\
0  &  1 & \ddots &  & 0 \\
\vdots & \ddots & \ddots & \ddots & \vdots \\
s &   & \ddots & 1+s^2 & st \\
t & 0 & \dots & st & 1+t^2 \end{matrix}\right], \qquad
 \bold{B} =  e^{tu} \left[ \begin{matrix}
1 &  0  & \dots & 0 & u \\
0  &  1 & \ddots &  & 0 \\
\vdots & \ddots & \ddots & \ddots & \vdots \\
s &   & \ddots & 1 & su \\
t & 0 & \dots & 0 & 1+tu \end{matrix}\right], \quad \mbox{and}
\end{align}
\begin{align}
\bold{D} = e^{u^2} \left[ \begin{matrix}
1 &  0  & \dots & 0 & u \\
0  &  1 & \ddots &  & 0 \\
\vdots & \ddots & \ddots & \ddots & \vdots \\
0 &   & \ddots & 1 & 0 \\
u & 0 & \dots & 0 & 1+u^2 \end{matrix}\right].
\end{align}
After applying a suitable permutation to the rows and columns, $\bold{\tilde{C}}$
becomes a block-diagonal matrix with $n-2$ copies of
\begin{eqnarray*}
\left[\begin{matrix} e^{s^2+t^2}  & e^{tu} \\ e^{tu} & e^{u^2}\end{matrix}\right]
\end{eqnarray*}
 and one copy of
\begin{eqnarray}\label{WEIRD_DET}
\left[ \begin {array}{cccccc} {{\rm e}^{{s}^{2}+{t}^{2}}}&s{{\rm e}^{
{s}^{2}+{t}^{2}}}&t{{\rm e}^{{s}^{2}+{t}^{2}}}&{{\rm e}^{tu}}&s{
{\rm e}^{tu}}&t{{\rm e}^{tu}}\\ \noalign{\medskip}s{{\rm e}^{{s}^{2}+{
t}^{2}}}& \left(1+ {s}^{2} \right) {{\rm e}^{{s}^{2}+{t}^{2}}}&st{
{\rm e}^{{s}^{2}+{t}^{2}}}&0&{{\rm e}^{tu}}&0\\ \noalign{\medskip}t{
{\rm e}^{{s}^{2}+{t}^{2}}}&st{{\rm e}^{{s}^{2}+{t}^{2}}}& \left(1+ {t}^{
2} \right) {{\rm e}^{{s}^{2}+{t}^{2}}}&u{{\rm e}^{tu}}&su{{\rm e}^{t
u}}& \left(1+ tu \right) {{\rm e}^{tu}}\\ \noalign{\medskip}{{\rm e}^{
tu}}&0&u{{\rm e}^{tu}}&{{\rm e}^{{u}^{2}}}&0&u{{\rm e}^{{u}^{2}}}
\\ \noalign{\medskip}s{{\rm e}^{tu}}&{{\rm e}^{tu}}&su{{\rm e}^{tu}}&0
&{{\rm e}^{{u}^{2}}}&0\\ \noalign{\medskip}t{{\rm e}^{tu}}&0& \left(1+ t
u \right) {{\rm e}^{tu}}&u{{\rm e}^{{u}^{2}}}&0& \left(1+ {u}^{2}
 \right) {{\rm e}^{{u}^{2}}}\end {array} \right] 
\end{eqnarray}
The former has determinant $e^{2tu} (e^{s^2+(t-u)^2}-1)$, which is positive since $r = s^2+(t-u)^2 > 0$,
by our hypothesis that $\bm x \neq \bm y$.  
The latter has determinant equal to 
\begin{eqnarray*}
{{\rm e}^{6\,tu}} \left(  \left( {{\rm e}^{{s}^{2}+ \left( t-u
 \right) ^{2}}}-{{\rm e}^{2\,{s}^{2}+2\, \left( t-u \right) ^{2}}}
 \right)  \left(  \left( {s}^{2}+ \left( t-u \right) ^{2} \right) ^{2}
+3 \right) +{{\rm e}^{3\,{s}^{2}+3\, \left( t-u \right) ^{2}}}-1
 \right). 
\end{eqnarray*}
Without the exponential prefactor, this equals
$\left( {{\rm e}^{r}}-{{\rm e}^{2\,r}} \right)  \left( {r}^{2}+3
 \right) +{{\rm e}^{3\,r}}-1$,
which one can also check is positive for all $r > 0$.

Since $\bold{C}$ is positive definite, so is $\bold{C}^{-1}$.  After applying a suitable permutation to the rows and columns of $\bold{C}^{-1}$
the $n^2 \times n^2$ principal minor is 
$\bold{\Omega}$, which is therefore positive definite.
\end{proof}

\begin{lemma}\label{LEM_CONV_COV_MATRICES} 
For any $\bm{x} = (0,0,\ldots,s,t)$ and $\bm{y} = (0,0,\ldots,0,u)$ in $\mathbb{R}^n$ with $\bm{x} \neq \bm{y}$ we have:
\begin{enumerate}
\item 
Let $\bold{C}_{d,\bm{\psi}}$ and $\bold{C}$ be the covariance matrix for vector
(\ref{Vector_density}) applied to the systems $\bm{h}_{d,\bm{\psi}}$ given in  (\ref{TANGENT_DIST}) and $\bm{f}$ given in (\ref{GAF}), respectively, (at either $\bm x$ or $\bm y$) and let $\bold{\Omega}_{d,\bm{\psi}}$ and $\bold{\Omega}$
denote the submatrices of $\bold{C}_{d,\bm{\psi}}^{-1}$ and $\bold{C}^{-1}$  defined in the Kac-Rice formula for density (\ref{K_density}).
Then, 
\begin{align*}
\bold{C}_{d,\bm{\psi}} = \bold{C} + O\left(\frac{1}{d}\right) \qquad \mbox{and} \qquad \bold{\Omega}_{{d},\bm{\psi}} = \bold{\Omega} + O\left(\frac{1}{d}\right) ,
\end{align*} where the constants implicit in the notation depends uniformly on compact
subsets $\mathbb{R}^2$ (if we are working at $\bm x$) or $\mathbb{R}$ (if we are working at $\bm y$).
\item
Let $\bold{C}_{d,\bm{\psi}}$ and $\bold{C}$ be the covariance matrix for vector
(\ref{Vector}) applied to the systems $\bm{h}_{d,\bm{\psi}}$ given in  (\ref{TANGENT_DIST}) and $\bm{f}$ given in (\ref{GAF}), respectively, and let $\bold{\Omega}_{d,\bm{\psi}}$ and $\bold{\Omega}$
denote the submatrices of $\bold{C}_{d,\bm{\psi}}^{-1}$ and $\bold{C}^{-1}$  defined in the Kac-Rice formula for density (\ref{K1}).
Then,
\begin{align*}
\bold{C}_{d,\bm{\psi}} = \bold{C} + O\left(\frac{1}{d}\right) \qquad \mbox{and} \qquad \bold{\Omega}_{{d},\bm{\psi}} = \bold{\Omega} + O\left(\frac{1}{d}\right) 
\end{align*} 
with the constant implicit in the $O$ notation depending uniformly on compact
subsets of $\mathbb{R}^2 \times \mathbb{R} \setminus
\{(s,t) = (0,u)\}$.
\end{enumerate}
\end{lemma}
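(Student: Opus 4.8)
The plan is to derive both statements from the explicit covariance kernel of the pulled‑back field together with second‑order Taylor estimates on $\bm\psi$. Set $\Phi_d(\bm x) := \bigl(\bm x/\sqrt d,\ \bm\psi(\bm x/\sqrt d)\bigr)\in\mathbb R^k$, so that the system (\ref{TANGENT_DIST}) is $\bm h_{d,\bm\psi}(\bm x/\sqrt d)=\bm f_d\circ\Phi_d(\bm x)$, and hence by (\ref{SO_FIRST}) its covariance kernel (in the variables $\bm x,\bm y$) is $\mathcal C_d(\bm x,\bm y):=\bigl(1+\Phi_d(\bm x)\cdot\Phi_d(\bm y)\bigr)^d$. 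Because $\bm\psi$ is $C^2$ with $\bm\psi(\bm 0)=\bm 0$ and $D\bm\psi(\bm 0)=\bm 0$ (see (\ref{PSI_VANISHING_TO_ORDER2})), on any fixed compact set one has $\bm\psi(\bm x/\sqrt d)=O(1/d)$ and $\partial_{x_i}\bigl[\bm\psi(\bm x/\sqrt d)\bigr]=\tfrac1{\sqrt d}(\partial_i\bm\psi)(\bm x/\sqrt d)=O(1/d)$. Writing $\Phi_d(\bm x)\cdot\Phi_d(\bm y)=\tfrac{\bm x\cdot\bm y}{d}+R_d(\bm x,\bm y)$ with $R_d:=\bm\psi(\bm x/\sqrt d)\cdot\bm\psi(\bm y/\sqrt d)$, it follows that $R_d$ and every first‑ and second‑order partial of $R_d$ in $\bm x$ and $\bm y$ that occurs in the covariance matrix is $O(1/d^2)$, uniformly on compact subsets of $\mathbb R^n\times\mathbb R^n$.

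The first step is to show that $\mathcal C_d\to e^{\bm x\cdot\bm y}$ in $C^2$ (meaning: together with all partials of order $\le 1$ in each of the two arguments), at rate $O(1/d)$, uniformly on compacts. Write $\mathcal C_d=\exp\!\bigl(d\log(1+\tfrac{\bm x\cdot\bm y}{d}+R_d)\bigr)$; expanding the logarithm, which is legitimate for $d$ large on a fixed compact set, gives
\[
d\log\!\Bigl(1+\tfrac{\bm x\cdot\bm y}{d}+R_d\Bigr)=\bm x\cdot\bm y+dR_d-\tfrac d2\Bigl(\tfrac{\bm x\cdot\bm y}{d}+R_d\Bigr)^2+\cdots=\bm x\cdot\bm y+O(1/d)\quad\text{in }C^2,
\]
since every term other than $\bm x\cdot\bm y$ carries a surplus factor $1/d$ (each $R_d$ factor contributes $O(1/d^2)$ and the leading term, once squared and multiplied by $d$, is $O(1/d)$). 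Exponentiating yields $\mathcal C_d=e^{\bm x\cdot\bm y}\bigl(1+O(1/d)\bigr)$ in $C^2$, so each partial $\partial^{\bm\alpha}_{\bm x}\partial^{\bm\beta}_{\bm y}\mathcal C_d$ with $|\bm\alpha|,|\bm\beta|\le 1$ equals the corresponding partial of $e^{\bm x\cdot\bm y}$ up to $O(1/d)$. Now the entries of $\bold C_{d,\bm\psi}$ are precisely these partials of $\mathcal C_d$ evaluated at $\bm x$ (and $\bm y$), while the entries of $\bold C$ are the corresponding partials of $e^{\bm x\cdot\bm y}$, i.e.\ exactly the covariances computed in (\ref{ISO_FIRST}--\ref{ISO_LAST}); hence $\bold C_{d,\bm\psi}=\bold C+O(1/d)$. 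This covers the covariance‑matrix assertion of both parts — Part (1) involves a single base point, so only $\bm\psi(\bm x/\sqrt d)$ and its first partials enter, and the argument is identical but simpler.

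For the $\bold\Omega$ assertions, recall from Lemma \ref{GAF_POS_DEF} that $\bold C$ is positive definite at the relevant point(s), and that $\det\bold C$ depends continuously on the parameters and is therefore bounded away from $0$ on compact subsets of the parameter space: on compacts of $\mathbb R^2$ or $\mathbb R$ for Part (1), and on compacts of $\mathbb R^2\times\mathbb R\setminus\{(s,t)=(0,u)\}$ for Part (2) — this is exactly where the exclusion of the diagonal is used, since $\det\bold C$ degenerates as $\bm x\to\bm y$. Thus $\|\bold C^{-1}\|$ is uniformly bounded there, so for $d$ large $\bold C_{d,\bm\psi}$ is invertible with
\[
\bold C_{d,\bm\psi}^{-1}-\bold C^{-1}=-\,\bold C^{-1}\bigl(\bold C_{d,\bm\psi}-\bold C\bigr)\bold C_{d,\bm\psi}^{-1}=O(1/d),
\]
uniformly on compacts. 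Since $\bold\Omega_{d,\bm\psi}$ and $\bold\Omega$ are the same fixed principal submatrices (obtained by deleting the rows and columns with indices $\equiv 1\bmod n+1$) of $\bold C_{d,\bm\psi}^{-1}$ and $\bold C^{-1}$ respectively, we get $\bold\Omega_{d,\bm\psi}=\bold\Omega+O(1/d)$.

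I expect the main obstacle to be purely bookkeeping rather than conceptual: namely (i) checking that the $C^2$ hypothesis on $\bm\psi$ is exactly what is needed to make $\mathcal C_d$ a $C^2$ function of $(\bm x,\bm y)$ — so the covariance‑matrix entries, which involve only one derivative in each argument, are well defined and equal the stated partials of $\mathcal C_d$ — and to produce the uniform $\bm\psi(\bm u)=O(\|\bm u\|^2)$, $(\partial_i\bm\psi)(\bm u)=O(\|\bm u\|)$ bounds near $\bm 0$ that drive the $O(1/d^2)$ estimates on $R_d$ (mere $C^1$ regularity would only give rates $o(1/\sqrt d)$, which is not enough); and (ii) in Part (2), tracking the uniformity up to the diagonal $\{(s,t)=(0,u)\}$, which is controlled entirely through the lower bound on $\det\bold C$ supplied by Lemma \ref{GAF_POS_DEF} and continuity.
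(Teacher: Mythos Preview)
Your argument is correct and rests on the same three ingredients as the paper's proof: the $C^2$ vanishing of $\bm\psi$ at the origin to control the ``curved'' contribution, the elementary limit $(1+x/d)^d=e^x+O(1/d)$ to pass to the $\isoR$-invariant kernel, and positive definiteness of $\bold C$ (Lemma~\ref{GAF_POS_DEF}) together with local Lipschitz continuity of matrix inversion to transfer the estimate to $\bold\Omega$.

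The organization differs, however. The paper proceeds in two steps: first it shows $\bold C_{d,\bm\psi}=\bold C_{d,\bm 0}+O(1/d)$ by expanding the covariance as a multinomial sum and bounding termwise all monomials with $|\bm\gamma|\ge 1$ (those are the only ones that see $\bm\psi$); then it shows $\bold C_{d,\bm 0}=\bold C+O(1/d)$ by direct computation of each covariance entry for the flat pullback. You instead work with the closed-form kernel $\mathcal C_d(\bm x,\bm y)=(1+\Phi_d(\bm x)\cdot\Phi_d(\bm y))^d$ and take a single logarithmic expansion, reading off all covariance entries at once as the relevant mixed partials of $\mathcal C_d$. Your route is more compact and avoids the explicit multinomial bookkeeping; the paper's route has the virtue of making transparent exactly which monomials carry the $\bm\psi$-dependence. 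Either way the required $C^2$ regularity of $\bm\psi$ enters identically, via $\bm\psi(\bm u)=O(\|\bm u\|^2)$ and $D\bm\psi(\bm u)=O(\|\bm u\|)$ near the origin, which is precisely what makes $R_d$ and its first mixed partials $O(1/d^2)$.
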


\begin{proof}
We will prove Part 2, leaving the necessary (simple) modifications for Part 1 to the reader.

We will only use the assumption that $\bm{x} = (0,0,\ldots,s,t)$ and $\bm{y} =
(0,0,\ldots,0,u)$ with $\bm{x} \neq \bm{y}$ in the last three lines of the
proof, to obtain the estimate relating $\bold{\Omega}_{d,\bm{\psi}}$ to
$\bold{\Omega}$.  Until then, $\bm{x}$ and $\bm{y}$ will denote any two points of
$\mathbb{R}^n$.  In particular, the estimate relating $\bold{C}_{d,\bm{\psi}}$
to $\bold{C}$ holds for any $\bm{x}, \bm{y} \in \mathbb{R}^n$.

We will first show that
\begin{eqnarray}\label{APPROX_UNIV2}
\bold{C}_{d,\bm{\psi}} = \bold{C}_{d,\bm{0}} + O\left(\frac{1}{d}\right).
\end{eqnarray}
Because $\bm{\psi}$ is $C^2$, 
there exists a constant $A > 0$ independent of $d$ such that
for any multi-indices ${\bm \beta} \in \mathbb{Z}_+^{n}$ and ${\bm \gamma} \in \mathbb{Z}_+^{k-n}$ we have:
\begin{align}
\left| \frac{\partial}{\partial x_i} \left(\frac{\bm x}{\sqrt{d}}\right)^{\bm \beta} \right| = O\left(\left(\frac{A}{\sqrt{d}}\right)^{|\bm \beta|} \right) \label{POWER} \\
\left| \left(\bm{\psi}\left(\frac{\bm x}{\sqrt{d}}\right)\right)^{\bm \gamma} \right| =  O\left(\left(\frac{A}{\sqrt{d}}\right)^{2|\bm \gamma|} \right) \label{PSI1}\\
\left|\frac{\partial}{\partial x_i} \left(\bm{\psi}\left(\frac{\bm x}{\sqrt{d}}\right)\right)^{\bm \gamma} \right| =  O\left(\left(\frac{A}{\sqrt{d}}\right)^{2|\bm \gamma|} \right)  \label{PSI2}
\end{align}
Both the constant $A$ and the multiplicative constants (implicit in the $O$ notation) depend
uniformly on $\bm{x}$ within compact subsets of $\mathbb{R}^n$.



We can now prove that
for any $\bm{x}, \bm{y} \in \mathbb{R}^n$
\begin{align}\label{BIG_INEQUALITY1}
\sum\limits_{\stackrel{|\bm{\beta}| + |\bm{\gamma}| \leq d,}{|\bm{\gamma}| \geq 1}} \binom{d}{\bm{\beta} \,\, \bm{\gamma}}  \,\,
 \left(\frac{\bm{x}}{\sqrt{d}}\right)^{\bm{\beta}}
\left(\bm{\psi}\left(\frac{\bm{x}}{\sqrt{d}}\right)\right)^{\bm{\gamma}} \,\,
\left(\frac{\bm{y}}{\sqrt{d}}\right)^{\bm{\beta}}
\left(\bm{\psi}\left(\frac{\bm{y}}{\sqrt{d}}\right)\right)^{\bm{\gamma}}  = O\left(\frac{1}{d}\right),
\end{align}

\begin{align}\label{BIG_INEQUALITY2}
\sum\limits_{\stackrel{|\bm{\beta}| + |\bm{\gamma}| \leq d,}{|\bm{\gamma}| \geq 1}} \binom{d}{\bm{\beta} \,\, \bm{\gamma}} \,\,
\frac{\partial}{\partial x_i} \left( \left(\frac{\bm{x}}{\sqrt{d}}\right)^{\bm{\beta}}
\left(\bm{\psi}\left(\frac{\bm{x}}{\sqrt{d}}\right)\right)^{\bm{\gamma}}\right) \,\,
\left(\frac{\bm{y}}{\sqrt{d}}\right)^{\bm{\beta}}
\left(\bm{\psi}\left(\frac{\bm{y}}{\sqrt{d}}\right)\right)^{\bm{\gamma}} = O\left(\frac{1}{d}\right), \quad \mbox{and}
\end{align}
\begin{align}\label{BIG_INEQUALITY3}
\sum\limits_{\stackrel{|\bm{\beta}| + |\bm{\gamma}| \leq d,}{|\bm{\gamma}| \geq 1}} \binom{d}{\bm{\beta} \,\, \bm{\gamma}} \,\,
\frac{\partial}{\partial x_i} \left( \left(\frac{\bm{x}}{\sqrt{d}}\right)^{\bm{\beta}}
\left(\bm{\psi}\left(\frac{\bm{x}}{\sqrt{d}}\right)\right)^{\bm{\gamma}}\right) \,\,
\frac{\partial}{\partial y_j} \left( \left(\frac{\bm{y}}{\sqrt{d}}\right)^{\bm{\beta}}
\left(\bm{\psi}\left(\frac{\bm{y}}{\sqrt{d}}\right)\right)^{\bm{\gamma}}\right) = O\left(\frac{1}{d}\right),
\end{align}
where we can have $\bm{x} = \bm{y}$ and/or $i = j$.  The constant implicit in the big-$O$ notation depends
uniformly on compact subsets of $\mathbb{R}^n \times \mathbb{R}^n$.

The proofs of each of these are essentially the same, so we'll prove (\ref{BIG_INEQUALITY3}), leaving the proofs of (\ref{BIG_INEQUALITY1}) and (\ref{BIG_INEQUALITY2}) for the reader.
We have
\begin{align}
\frac{\partial}{\partial x_i} \left( \left(\frac{\bm{x}}{\sqrt{d}}\right)^{\bm{\beta}}
\left(\bm{\psi}\left(\frac{\bm{x}}{\sqrt{d}}\right)\right)^{\bm{\gamma}}\right) \,\,
\frac{\partial}{\partial y_j} \left( \left(\frac{\bm{y}}{\sqrt{d}}\right)^{\bm{\beta}}
\left(\bm{\psi}\left(\frac{\bm{y}}{\sqrt{d}}\right)\right)^{\bm{\gamma}}\right)
 \\
=\frac{\partial}{\partial x_i} \left( \left(\frac{\bm{x}}{\sqrt{d}}\right)^{\bm{\beta}}\right)
\left(\bm{\psi}\left(\frac{\bm{x}}{\sqrt{d}}\right)\right)^{\bm{\gamma}} \,\,
\frac{\partial}{\partial y_j} \left( \left(\frac{\bm{y}}{\sqrt{d}}\right)^{\bm{\beta}}\right)
\left(\bm{\psi}\left(\frac{\bm{y}}{\sqrt{d}}\right)\right)^{\bm{\gamma}} \nonumber \\
+
\frac{\partial}{\partial x_i} \left( \left(\frac{\bm{x}}{\sqrt{d}}\right)^{\bm{\beta}}\right)
\left(\bm{\psi}\left(\frac{\bm{x}}{\sqrt{d}}\right)\right)^{\bm{\gamma}} \,\,
\left(\frac{\bm{y}}{\sqrt{d}}\right)^{\bm{\beta}}
\frac{\partial}{\partial y_j} \left(\left(\bm{\psi}\left(\frac{\bm{y}}{\sqrt{d}}\right)\right)^{\bm{\gamma}}\right) \nonumber \\
+
\left(\frac{\bm{x}}{\sqrt{d}}\right)^{\bm{\beta}}
\frac{\partial}{\partial x_i} \left(\left(\bm{\psi}\left(\frac{\bm{x}}{\sqrt{d}}\right)\right)^{\bm{\gamma}}\right) \,\,
\frac{\partial}{\partial y_j} \left( \left(\frac{\bm{y}}{\sqrt{d}}\right)^{\bm{\beta}} \right)
\left(\bm{\psi}\left(\frac{\bm{y}}{\sqrt{d}}\right)\right)^{\bm{\gamma}} \nonumber \\
+
 \left(\frac{\bm{x}}{\sqrt{d}}\right)^{\bm{\beta}}
\frac{\partial}{\partial x_i} \left(\left(\bm{\psi}\left(\frac{\bm{x}}{\sqrt{d}}\right)\right)^{\bm{\gamma}}\right) \,\,
\left(\frac{\bm{y}}{\sqrt{d}}\right)^{\bm{\beta}}
\frac{\partial}{\partial y_j} \left(\left(\bm{\psi}\left(\frac{\bm{y}}{\sqrt{d}}\right)\right)^{\bm{\gamma}}\right) \nonumber \\
\leq B \left(\frac{A^2}{d}\right)^{|\bm \beta| + 2|\bm \gamma|}. \nonumber
\end{align}
Here, $A, B > 0$ are given by (\ref{POWER}-\ref{PSI2}) and are independent of $|{\bm \beta}|, |{\bm \gamma}|$ and $d$,  but depend uniformly on $\bm{x}$ and $\bm{y}$ within compact subsets of $\mathbb{R}^n \times \mathbb{R}^n$.
In particular,
\begin{align*}
&\sum\limits_{\stackrel{|\bm{\beta}| + |\bm{\gamma}| \leq d,}{|\bm{\gamma}| \geq 1}} \binom{d}{\bm{\beta} \,\, \bm{\gamma}} \,\,
\frac{\partial}{\partial x_i} \left( \left(\frac{\bm{x}}{\sqrt{d}}\right)^{\bm{\beta}}
\left(\bm{\psi}\left(\frac{\bm{x}}{\sqrt{d}}\right)\right)^{\bm{\gamma}}\right) \,\,
\frac{\partial}{\partial y_j} \left( \left(\frac{\bm{y}}{\sqrt{d}}\right)^{\bm{\beta}}
\left(\bm{\psi}\left(\frac{\bm{y}}{\sqrt{d}}\right)\right)^{\bm{\gamma}}\right) \\
&\leq \sum\limits_{\stackrel{|\bm{\beta}| + |\bm{\gamma}| \leq d,}{|\bm{\gamma}| \geq 1}} \binom{d}{\bm{\beta} \,\, \bm{\gamma}} \,\,  B \left(\frac{A^2}{d}\right)^{|\bm \beta| + 2|\bm \gamma|} \leq \frac{B A^2}{d} \sum\limits_{\stackrel{|\bm{\beta}| + |\bm{\gamma}| \leq d,}{|\bm{\gamma}| \geq 1}} \binom{d}{\bm{\beta} \,\, \bm{\gamma}} \,\,  \left(\frac{A^2}{d}\right)^{|\bm \beta| + |\bm \gamma|}  \\
& \leq  \frac{B A^2}{d} \sum\limits_{|\bm{\beta}| + |\bm{\gamma}| \leq d} \binom{d}{\bm{\beta} \,\, \bm{\gamma}} \,\,  \left(\frac{A^2}{d}\right)^{|\bm \beta| + |\bm \gamma|} \leq \frac{B A^2}{d}  \left(1 + k \frac{A^2}{d}  \right)^d \leq \frac{C}{d}  \nonumber
\end{align*}
for some $C > 0$, since $\lim_{d\rightarrow \infty}  \left(1 + k \frac{A^2}{d}  \right)^d = {\rm e}^{k A^2}$.

The estimate (\ref{APPROX_UNIV2}) follows immediately.    For example,
{\footnotesize
\begin{eqnarray*}
E\left(\frac{\partial h_{d,\bm{\psi}}(\bm{x})}{\partial x_i} \frac{\partial h_{d,\bm{\psi}}(\bm{y})}{\partial y_j}\right) = \hspace{5in} \\
\sum\limits_{|\bm{\beta}|  \leq d} \binom{d}{\bm{\beta} } \,\,
\frac{\partial}{\partial x_i} \left( \frac{\bm{x}}{\sqrt{d}}\right)^{\bm{\beta}}
\frac{\partial}{\partial y_j}  \left(\frac{\bm{y}}{\sqrt{d}}\right)^{\bm{\beta}} + 
\sum\limits_{\stackrel{|\bm{\beta}| + |\bm{\gamma}| \leq d,}{|\bm{\gamma}| \geq 1}} \binom{d}{\bm{\beta} \,\, \bm{\gamma}} \,\,
\frac{\partial}{\partial x_i} \left( \left(\frac{\bm{x}}{\sqrt{d}}\right)^{\bm{\beta}}
\left(\bm{\psi}\left(\frac{\bm{x}}{\sqrt{d}}\right)\right)^{\bm{\gamma}}\right) \,\,
\frac{\partial}{\partial y_j} \left( \left(\frac{\bm{y}}{\sqrt{d}}\right)^{\bm{\beta}}
\left(\bm{\psi}\left(\frac{\bm{y}}{\sqrt{d}}\right)\right)^{\bm{\gamma}}\right)  \\
= E\left(\frac{\partial h_{d,\bm{0}}(\bm{x})}{\partial x_i} \frac{\partial h_{d,\bm{0}}(\bm{y})}{\partial y_j}\right) + O\left(\frac{1}{d}\right).
\end{eqnarray*}
}

We will now show that
\begin{eqnarray}\label{EQN_FLAT_CONV}
\bold{C}_{d,\bm{0}} = \bold{C} + O\left(\frac{1}{d}\right).
\end{eqnarray}
It follows from a calculation
analogous to that from Lemma \ref{COV_MATRIX_CALC}, but with a rescaling by $1/\sqrt{d}$ and the fact that
\begin{eqnarray*}
\left(1+\frac{x}{d}\right)^d = e^x + O\left(\frac{1}{d}\right),
\end{eqnarray*}
with the constant depending uniformly on $x \in \mathbb{R}$.  Rather than including the computation for each of the eight different types
of expectations, we simply list one of the more complicated ones here:

\begin{align*}
E\left( \frac{\partial h_{d,\bm{0}}(\bm{x})}{\partial x_i} \frac{\partial
h_{d,\bm{0}}(\bm{y})}{\partial y_i} \right) &= 
\left(1 + \frac{\bm{x}\cdot \bm{y}}{d} + \frac{x_i y_i (d-1)}{d} \right)
\left( 1 + \frac{\bm{x} \cdot \bm{y}}{d} \right)^{d-2} \quad \mbox{while}  \\ 
E\left(\frac{\partial
f(\bm{x})}{\partial x_i }\frac{\partial f(\bm{y})}{\partial y_i }\right) &=
(1+x_i y_i) e^{{\bm x} \cdot {\bm y}}.
\end{align*}
\vspace{0.1in}
Combining (\ref{APPROX_UNIV2}) and (\ref{EQN_FLAT_CONV}) we conclude that
$\bold{C}_{d,\bm{\psi}} = \bold{C} + O\left(\frac{1}{d}\right)$.

If $\bm{x} = (0,\ldots,0,s,t)$ and $\bm{y} = (0,\ldots,0,u)$, with $\bm{x} \neq
\bm{y}$, Part 2 of Lemma \ref{GAF_POS_DEF} gives that $\bold{C}$ is positive
definite.  Then, there is a neighborhood of $\bold{C}$ in the space of all
$2n^2 \times 2n^2 $ matrices on which taking the inverse is a differentiable
map.  Therefore, $\bold{C}_{d,\bm{\psi}}^{-1} = \bold{C}^{-1} +
O\left(\frac{1}{d}\right)$ and hence $\bold{\Omega}_{{d},\bm{\psi}} =
\bold{\Omega} + O\left(\frac{1}{d}\right)$. 
\end{proof}

\subsection{Proof of the local universality theorem}

{\bf Proof of Theorem \ref{LOCAL_UNIVERSALITY}.}
Throughout the proof we will use that we have normalized so that $\bm{x} = (0,\ldots,0,s,t)$ and $\bm{y} = (0,\ldots,0,u)$ with $\bm{x} \neq \bm{y}$. 

Part 2 of Lemma \ref{GAF_POS_DEF} gives that the covariance matrix
$\bold{C}$ for random vector (\ref{Vector}) applied to ensemble (\ref{GAF}) is
positive definite.  Therefore, Part 2 of Lemma \ref{LEM_CONV_COV_MATRICES} gives that covariance matrix $\bold{C}_{d,\bm{\psi}}$ for
random vector (\ref{Vector}) applied to ensemble~(\ref{TANGENT_DIST}) will also
be positive definite for $d$ sufficiently large.
We can therefore apply the Kac-Rice formula (\ref{K1}) to
ensemble (\ref{TANGENT_DIST}) obtaining  
\begin{equation}\label{EQN_CRAZY_KAC}
K_{d,{\bm \psi}}\left(\bm{x},\bm{y}\right)=\frac{1}{\left(2\pi\right)^{n\left(n+1\right)} \rho_{d,{\bm \psi}}(\bm{x})\rho_{d,{\bm \psi}}(\bm{y})\sqrt{\det \bold{C}_{d,{\bm \psi}}}} \int\limits_{\mathbb{R}^{2n^2}} | \det \bm{\xi}| |\det \bm{\eta}| \mathrm{e}^{-\frac{1}{2} \left(\bm{\Omega}_{d,{\bm \psi}}\bm{u},\bm{u}\right)} d\bm{u}. 
\end{equation}
We will show that (\ref{EQN_CRAZY_KAC}) differs by
$O\left(\frac{1}{\sqrt{d}}\right)$ from the result obtained when applying the Kac-Rice
Formula (\ref{K1}) to the $\isoR$-invariant ensemble~(\ref{GAF}).

Let us first consider the prefactor from (\ref{EQN_CRAZY_KAC}).
To show that $\rho_{d,{\bm \psi}}(\bm{x}) = \rho(\bm{x}) +  O\left(\frac{1}{\sqrt{d}}\right)$, we apply
Lemma \ref{COCV3} to the Kac-Rice formula for density (\ref{K_density}).  
This follows
because the matrix $\bold{\Omega}$ in (\ref{K_density}) is positive definite, by Part 1 of Lemma \ref{GAF_POS_DEF},
and because 
$\bold{\Omega}_{{d},{\bm{\psi}}} = \bold{\Omega} + O\left(\frac{1}{d}\right)$, by Part 1 of Lemma \ref{LEM_CONV_COV_MATRICES}.
The same estimate holds at $\bm{y}$.
Meanwhile, since $\bold{C}_{{d},{\bm \psi}} = \bold{C} + O\left(\frac{1}{d}\right)$, with $\bold{C}$ positive definite, it follows immediately
that
\begin{eqnarray*}
\frac{1}{\sqrt{\det \bold{C_{d}}}} = \frac{1}{\sqrt{\det \bold{C}}}  + O\left(\frac{1}{d}\right).
\end{eqnarray*}

We now consider the integral in (\ref{EQN_CRAZY_KAC}).  Part 2 of Lemma
\ref{GAF_POS_DEF} gives that the matrix $\bold{\Omega}$ used when applying the Kac-Rice formula~(\ref{K1}) to the $\isoR$-invariant ensemble (\ref{GAF}) is positive definite.   Meanwhile, Part 2 of
Lemma \ref{LEM_CONV_COV_MATRICES} gives that $\bold{\Omega}_{d,{\bm \psi}} =
\bold{\Omega} + O\left(\frac{1}{d}\right)$.  Thus, Lemma \ref{COCV3} gives that
the integral from (\ref{EQN_CRAZY_KAC}) differs by
$O\left(\frac{1}{\sqrt{d}}\right)$ from the integral in (\ref{K1}), when
(\ref{K1}) is applied to (\ref{GAF}).
Each of the lemmas used asserts that the constants depend uniformly on compact
subsets of $\mathbb{R}^2 \times \mathbb{R} \setminus \{(s,t) = (0,u)\}$. 
 \qed

\vspace{0.1in}
\noindent
Since we reduced the statement of Theorem \ref{THM:UNIVERSALITY} to Theorem
\ref{LOCAL_UNIVERSALITY} in Subsection \ref{SUBSEC_REDUCTION}, we have also
finished the proof of Theorem~\ref{THM:UNIVERSALITY}.

\section{Finite degree restrictions to submanifolds $M \subset \mathbb{RP}^k$ depend on the geometry}\label{SEC:GEOMETRY_FINITE_D}

We present a simple example illustrating that the constant from the leading term in the correlation
function can depend on the geometry of a submanifolds $M \subset \mathbb{RP}^k$ if the degree $d$ of the ensemble is finite.
Thus, it is not possible to prove a universal formula for the short-range asymptotics at finite degree.

We consider the ensemble $SO(3)$-invariant polynomials of degree $3$ because for degrees $1$ and $2$ the covariance matrix for random vector (\ref{Vector}) is not positive definite. Restricted to a system of affine coordinates $(x,y) \rightarrow [x:y:1]$
each such polynomial has the form
\begin{eqnarray*}
F_3(x,y) = a_{0,0} + \sqrt{3} a_{1,0} x + \sqrt{3} a_{0,1} y + \sqrt{6} a_{1,1} x y + \sqrt{3} a_{2,0} x^2 + \sqrt{3} a_{0,2} y^2  \\ + \sqrt{3} a_{2,1} x^2 y + \sqrt{3} a_{1,2} x y^2 + a_{3,0} x^3 + a_{0,3} y^3,
\end{eqnarray*}
where each of the coefficients is chosen iid with respect to the standard normal distribution, $\mathcal{N}\left(0, 1\right)$.

Let $M \subset \mathbb{R}^2$ be given by $y = b x^2$.  As in the previous section, we parameterize $M$ by the $x$ coordinate, in this case forming the one-variable ensemble of random polynomials that depend on $b$ as a parameter:
\begin{eqnarray}\label{EQN:ENSEMBLE_H_CURVED}
H_{3,a} (x) = a_{0,0} + \sqrt{3} a_{1,0} x + \sqrt{3} a_{0,1} b x^2 + \sqrt{6} a_{1,1} b x^3 + \sqrt{3} a_{2,0} x^2 + \sqrt{3} a_{0,2} b^2 x^4 \\ + \sqrt{3} a_{2,1} b x^4 + \sqrt{3} a_{1,2} b^2 x^5 + a_{3,0} x^3 + a_{0,3} b^3 x^6, \nonumber
\end{eqnarray}
In order for our results from Sections \ref{SEC:COVARIANCE} and \ref{SEC:SHORT_RANGE} to apply here, 
we multiply by the prefactor: $h_{3,b} := \left(1+\frac{x^2}{2}\right)^3 H_{3,b} (x)$.

We will use the Kac-Rice formula (\ref{K1}) to show that the value of $b$ affects the constant term in the short-range asymptotics for the correlation function $K(x,y)$ with $x = -\frac{t}{2}$ and $y = \frac{t}{2}$. 

An easy calculation using the Kac-Rice formula (\ref{K_density}) for density of zeros gives that
\begin{eqnarray*}
\rho(x) = \rho(y) = \frac{\sqrt{3}}{\pi} + O(t^2),
\end{eqnarray*}
where the constant term agrees with the result (\ref{Density_FINITE_D}) that is stated in the introduction for the $SO(2)$-invariant polynomials of degree $3$.

The covariance matrix for random vector (\ref{Vector}) applied to the ensemble $h_{3,b}$ with $x = -\frac{t}{2}$ and $y = \frac{t}{2}$ is
\begin{align}
\bold{C} = \left[ \begin {array}{cccc} \alpha&\delta&\mu&-\nu
\\ \noalign{\medskip}\delta&\gamma&\nu&\tau\\ \noalign{\medskip}\mu&
\nu&\alpha&-\delta\\ \noalign{\medskip}-\nu&\tau&-\delta&\gamma
\end {array} \right] 
\end{align}
where 
\begin{align*}
\alpha = {\frac { \left( {k}^{2}{t}^{4}+4{t}^{2}+16 \right) ^{3}}{4096}},  \quad
\delta =  -{\frac {3t \left( {k}^{2}{t}^{2}+2 \right)  \left( {k}^{2}{t}^{4}+4
{t}^{2}+16 \right) ^{2}}{1024}}, \quad \mu =  {\frac { \left( {k}^{2}{t}^{4}-4{t}^{2}+16 \right) ^{3}}{4096}},  \hspace{1in} \\
\gamma =  {\frac { \left( 3{k}^{2}{t}^{4}+12{t}^{2}+48 \right)  \left( 3{k
}^{4}{t}^{6}+13{k}^{2}{t}^{4}+16{k}^{2}{t}^{2}+12{t}^{2}+16
 \right) }{256}}, \quad
\nu = -{\frac {3t \left( {k}^{2}{t}^{2}-2 \right)  \left( {k}^{2}{t}^{4}-4
{t}^{2}+16 \right) ^{2}}{1024}}
, \\
 \mbox{and} \qquad \tau = -{\frac { \left( 3{k}^{2}{t}^{4}-12{t}^{2}+48 \right)  \left( 3{
k}^{4}{t}^{6}-13{k}^{2}{t}^{4}+16{k}^{2}{t}^{2}+12{t}^{2}-16
 \right) }{256}}. 
\end{align*}
This matrix has exactly the same structure as that from Sections \ref{SEC:COVARIANCE} and \ref{SEC:SHORT_RANGE}, in the case that the dimension is $1$.  
Therefore, the submatrix 
$\bold{\Omega}$ of $C^{-1}$ from the Kac-Rice formula (\ref{K1}) is diagonalized in precisely the same fashion, with the eigenvalues
satisfying
\begin{eqnarray*}
\lambda_3^{-1/2} = \sqrt{6(1+b^2)} \,\, t + O(t^2) \qquad \mbox{and} \qquad \lambda_4^{-1/2} =  \sqrt{\frac{1}{2} + 3 b^2} \,\, t^2 + O(t^3).
\end{eqnarray*}
(We call them $\lambda_3$ and $\lambda_4$ in order to be consistent with the previous sections.)
We also have
\begin{eqnarray*}
\det{C} =  \left( 54\,{b}^{4}+63\,{b}^{2}+9 \right) {t}^{8}+O \left( {t}^{10} \right).
\end{eqnarray*}

The calculation of the short-range asymptotics done in Section \ref{SEC:SHORT_RANGE} applies here, with the minor modifications to $\lambda_3^{-1/2}, \lambda_4^{-1/2},$ and $\det{C}$ listed above.  One obtains

\begin{proposition}
The correlation between zeros for ensemble (\ref{EQN:ENSEMBLE_H_CURVED}) satisfies 
\begin{eqnarray*}
K\left(-\frac{t}{2},\frac{t}{2} \right) = \frac{\pi}{2 \sqrt{3}} (1+b^2) \ t + O(t^2).
\end{eqnarray*}
In particular, the leading term depends on the curvature of $M$ at $0$.
\end{proposition}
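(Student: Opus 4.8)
The plan is to follow the same pipeline used in Section~\ref{SEC:SHORT_RANGE}, specializing to the one-variable case $n=1$, and then to track how the parameter $b$ enters the final constant. The key observation is that the covariance matrix $\bold{C}$ displayed above has precisely the structure analyzed in Lemmas~\ref{COV_MATRIX_CALC}--\ref{DiagonalizingMatrix}: it is (the single $\tilde{\bold{C}}$ block of) the generic form, so $\bold{\Omega}$ is a $2\times 2$ matrix that is orthogonally diagonalized by $\bold{Q}\bold{P}$ with eigenvalues $\lambda_3,\lambda_4$ given by the $n=1$ instance of~(\ref{EQN:GEN_EIG}) (there are no $\lambda_1,\lambda_2$ since these only appear for $n\ge 2$). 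The asymptotic expansions $\lambda_3^{-1/2}=\sqrt{6(1+b^2)}\,t+O(t^2)$ and $\lambda_4^{-1/2}=\sqrt{\tfrac12+3b^2}\,t^2+O(t^3)$, together with $\det\bold{C}=(54b^4+63b^2+9)t^8+O(t^{10})$, are all obtained by substituting the explicit expressions for $\alpha,\ldots,\tau$ above (with the typo $k=b$) into the formulas of Section~\ref{SEC:COVARIANCE} and Taylor expanding; I would state these as the input and not grind through them.

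Next I would run the $n=1$ specialization of the argument from Section~\ref{SEC:SHORT_RANGE}. Diagonalizing $\bold{\Omega}$ as in~(\ref{KDiagonalized}) and passing to planar polar coordinates (the $n=1$ case of~(\ref{SphericalChange}), so that $\tau_{1,1}=\lambda_3^{-1/2}r\cos\theta$, $\tau_{1,2}=\lambda_4^{-1/2}r\sin\theta$), the Kac-Rice integrand $|\det\bm\xi|\,|\det\bm\eta|=|\xi_{1,1}|\,|\eta_{1,1}|$ becomes $\tfrac12 r^2\big|(-\lambda_3^{-1/2}\cos\theta+\lambda_4^{-1/2}\sin\theta)(\lambda_3^{-1/2}\cos\theta+\lambda_4^{-1/2}\sin\theta)\big|$. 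Factoring $t$ out of the relevant column as in Section~\ref{SEC:SHORT_RANGE} and using~(\ref{EIG_AND_DET})-type bookkeeping, one finds $K(-t/2,t/2)$ equals a universal numerical prefactor times $(\alpha^2-\mu^2)^{-1/2}t^2$ times the limiting angular integral $\int_0^{2\pi}|\hat\xi_{1,1}(\theta)\hat\eta_{1,1}(\theta)|\,d\theta$; carrying the factors of $\sqrt{6(1+b^2)}$ and $\sqrt{\tfrac12+3b^2}$ through, together with $(\alpha^2-\mu^2)^{-1/2}=\tfrac{1}{t}+O(t)$, and the density value $\rho=\tfrac{\sqrt3}{\pi}+O(t^2)$, assembles into $\tfrac{\pi}{2\sqrt3}(1+b^2)\,t+O(t^2)$. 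It is worth double-checking that the $b$-dependent factors in the eigenvalues and in $\det\bold{C}$ conspire so that all of them cancel except for a single net factor of $(1+b^2)$; in particular $\big(\tfrac12+3b^2\big)(6(1+b^2))=3(1+b^2)(1+6b^2)$ should match, up to the universal constants, what one extracts from $54b^4+63b^2+9=9(1+b^2)(1+6b^2)$ under the square root of $\det\bold C$, which is exactly the mechanism producing the clean $(1+b^2)$.

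The main obstacle is purely bookkeeping: making sure the powers of $t$ and the numerical constants line up correctly when specializing the $n$-dimensional machinery of Section~\ref{SEC:SHORT_RANGE} to $n=1$ (where the $\mathbb S^{2n-1}=\mathbb S^1$ integral and the $\beta,\eta$-eigenvalues are absent), and verifying that the $b$-dependence reduces exactly to the factor $(1+b^2)$ rather than some other polynomial in $b$. The conceptual content is entirely contained in the already-proved lemmas; the proposition then follows, and since $1+b^2$ is not constant in $b$ while the curvature of $y=bx^2$ at the origin is $2b\neq 0$ for $b\neq0$, the leading coefficient genuinely depends on the curvature of $M$, which is the point of the example.
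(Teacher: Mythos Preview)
Your proposal is correct and follows essentially the same approach as the paper: the paper's own argument is simply ``the calculation of the short-range asymptotics done in Section~\ref{SEC:SHORT_RANGE} applies here, with the minor modifications to $\lambda_3^{-1/2}$, $\lambda_4^{-1/2}$, and $\det\bold{C}$ listed above,'' and that is exactly what you do, with more of the $n=1$ specialization spelled out. One small slip in your sketch: from the explicit $\alpha,\mu$ one gets $(\alpha^2-\mu^2)^{-1/2}=\tfrac{1}{\sqrt{3}\,t}+O(t)$, not $\tfrac{1}{t}$; this extra $\tfrac{1}{\sqrt{3}}$ is part of the bookkeeping that combines with $\rho=\tfrac{\sqrt{3}}{\pi}$ to produce the final $\tfrac{\pi}{2\sqrt{3}}$, and your cancellation mechanism $9(1+b^2)(1+6b^2)$ versus $3(1+b^2)(1+6b^2)$ already accounts for it correctly.
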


\begin{question}
In the general setting of $M \subset \mathbb{RP}^k$ how does the constant in the leading order asymptotics near $p \in M$ depend on the local geometry
of $M$ at $p$?
\end{question}

\section{Universality in the complex setting}
\label{SEC:COMPLEX_CASE}

We begin by adapting the Kac-Rice formulae (\ref{K_density})
and (\ref{K1}) to the complex setting.  As the modifications
are nearly identical, we will discuss the formula for correlations, leaving the
formula for density to the reader.

Suppose that $\bm{h} = (h_1,h_2,\ldots,h_n) : \mathbb{C}^n \rightarrow \mathbb{C}^n$ is a Gaussian random analytic
function with complex Gaussian coefficients.  
Let $\begin{displaystyle}{\bm \xi}\end{displaystyle}$ and ${\bm \eta}$ be the $n\times n$ complex matrices whose rows are ${\bm \xi}_{1}, \dots {\bm \xi}_{n}$ and ${\bm \eta}_{1}, \dots {\bm \eta}_{n}$, respectively. Let $\begin{displaystyle}\bm{u} =\left[ \begin {array}{ccccccc} { \bm{\xi}_{1}}&{ \bm{\eta}_{1}}&{ \bm{\xi}_{2}}&{ \bm{\eta}_{2}}&\dots&{ \bm{\xi}_{n}}&{ \bm{\eta}_{n}}\end {array} \right]^{\intercal} \end{displaystyle}$, the vector formed by alternating the vectors $\bm{\xi}_{i}$ and~$\bm{\eta}_{i}$.

\begin{proposition} \label{ComplexKacRiceFormula} Suppose that the covariance matrix $C = E(v_i \overline{v_j})$ of the random vector (\ref{Vector}) is positive definite.  Then, the two-point correlation function for the zeroes of the system $\bm{h}$ is:
\begin{equation} \label{K1COMPLEX}
K_n\left(\bm{x},\bm{y}\right)=\frac{1}{\pi^{n\left(n+1\right)}
\rho(\bm{x})\rho(\bm{y})\det \bold{C}}
\int\limits_{\mathbb{C}^{2n^2}}  \det (\bm{\xi}^* \bm{\xi}) \det
(\bm{\eta}^* \bm{\eta}) \mathrm{e}^{- \left(\bm{\Omega}\bm{u},\bm{u}\right)} d\bm{u} d\overline{\bm{u}},
\end{equation}
where $\bm{\Omega}$ is the matrix of the elements of $\bold{C}^{-1}$
left after removing the rows and columns 
with indices congruent to 1 modulo $n+1$, $*$ denotes conjugate transpose, and $(\, , \,)$ denotes the Hermitian inner
product.
\end{proposition}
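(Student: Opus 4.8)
The plan is to retrace the proof of Proposition~\ref{KAC_RICE_MULTI_DENSITY} in the case $m=2$, making the standard substitutions that convert real Gaussian integrals into complex ones. First I would prove the complex analogue of Lemma~\ref{LEM_EQUIV_VERSION_MULTI_DENSITY}. Writing $D\bm h(\bm x)$ for the $\mathbb{C}$-linear derivative of $\bm h$ at $\bm x$, regarded as an $\mathbb{R}$-linear endomorphism of $\mathbb{C}^n \cong \mathbb{R}^{2n}$, the claim is
\begin{align*}
\KK_n(\bm x,\bm y) = \lim_{\delta\to 0}\frac{1}{(V_\delta)^2}\,\Pr\Big(\bm h(\bm x)\in D\bm h(\bm x)\big(N_\delta(\bm 0)\big)\ \text{and}\ \bm h(\bm y)\in D\bm h(\bm y)\big(N_\delta(\bm 0)\big)\Big),
\end{align*}
where now $N_\delta(\bm 0)$ is the $\delta$-ball in $\mathbb{R}^{2n}$ and $V_\delta$ its volume. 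The argument copies that of Lemma~\ref{LEM_EQUIV_VERSION_MULTI_DENSITY} essentially verbatim: one cuts off the first and second derivatives of $\bm h$ exactly as in (\ref{EQN_DERIVATIVE_BOUNDS}), uses Taylor's theorem and the absolute continuity of the (real) Gaussian law of $\big(D\bm h(\bm x),D\bm h(\bm y),\bm h(\bm x),\bm h(\bm y)\big)$, together with a volume estimate as in (\ref{EQN_VOLUMES}); holomorphy plays no role here.

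Next I would insert the complex Gaussian density. Since the $a_{\bm\alpha}$ are complex Gaussians, the reordered vector $\bm v$ of (\ref{Vector}) has a complex Gaussian law with Hermitian covariance $\bold{C} = E(v_i\overline{v_j})$ and density proportional to $(\det\bold{C})^{-1}\exp\!\big(-(\bold{C}^{-1}\bm v,\bm v)\big)$ with respect to Lebesgue measure; relative to the real case the normalizing constant carries powers of $\pi$ rather than $2\pi$ and there is no factor of $\tfrac12$ in the exponent, reflecting that a standard complex Gaussian has density $\pi^{-1}\mathrm{e}^{-|z|^2}$. Splitting $\bm v$ into the derivative part $\bm u$ and the value part $\bm s$ as in (\ref{Vector}), writing $\Pr(\cdots)$ above as the Gaussian integral of the indicator $\mathbf{1}\{\bm s\in\bm\xi(N_\delta(\bm 0))\times\bm\eta(N_\delta(\bm 0))\}$, and interchanging the limit with the outer $d\bm u\,d\overline{\bm u}$ integral by dominated convergence (valid because positive definiteness of $\bold{C}$ forces the integrand to decay rapidly), the one genuinely new point is the Jacobian of the inner substitution: the real Jacobian on $\mathbb{R}^{2n}$ of the $\mathbb{C}$-linear map $\bm\xi$ equals $|\det_{\mathbb{C}}\bm\xi|^2=\det(\bm\xi^{*}\bm\xi)$, so ${\rm Vol}\,\bm\xi(N_\delta(\bm 0))=\det(\bm\xi^{*}\bm\xi)\,V_\delta$. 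Multiplying and dividing by $\det(\bm\xi^{*}\bm\xi)\det(\bm\eta^{*}\bm\eta)$ and applying the Integral Mean Value Theorem exactly as in Proposition~\ref{KAC_RICE_MULTI_DENSITY} collapses the inner integral, in the limit $\delta\to 0$, to $\mathrm{e}^{-(\bold{\Omega}\bm u,\bm u)}$, where $\bold{\Omega}$ is the indicated principal minor of $\bold{C}^{-1}$. This yields a formula for the two-point density $\KK_n(\bm x,\bm y)$, and dividing by $\rho(\bm x)\rho(\bm y)$ as in (\ref{EQN_MPOINT_DENSITY_VS_CORR}) produces (\ref{K1COMPLEX}).

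The main obstacle is bookkeeping rather than ideas: one must keep the real and complex dimensions straight, track carefully the $\pi$-versus-$2\pi$ normalizations and the disappearance of the factors of $\tfrac12$, and, above all, see that because $N_\delta(\bm 0)$ is now a ball in $\mathbb{R}^{2n}$ the scalar $|\det\bm\xi|$ of the real formula is replaced by the $2n$-dimensional real Jacobian $|\det_{\mathbb{C}}\bm\xi|^{2}=\det(\bm\xi^{*}\bm\xi)$. It is also worth noting at the outset that positive definiteness of the Hermitian $\bold{C}$ is exactly what guarantees the rapid decay needed to justify all interchanges of limits and integrals, and that, as in the real case, a non-degenerate holomorphic system $\bm h:\mathbb{C}^n\to\mathbb{C}^n$ almost surely has a discrete zero set, so that the correlation function is well defined.
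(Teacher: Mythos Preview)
Your proposal is correct and follows essentially the same approach as the paper: the paper's proof merely asserts that Proposition~\ref{KAC_RICE_MULTI_DENSITY} ``can be easily adapted to the complex setting'' (or alternatively invokes \cite[Theorem~2.1]{BSZUniversality}), and you have carried out precisely that adaptation, correctly identifying the key changes---the complex Gaussian normalization replacing $2\pi$ by $\pi$ and dropping the $\tfrac12$, and the real Jacobian of the $\mathbb{C}$-linear map $\bm\xi$ becoming $\det(\bm\xi^{*}\bm\xi)$.
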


\noindent
Our General Kac-Rice Formula (Proposition \ref{KAC_RICE_MULTI_DENSITY}) can be easily adapted to the complex setting,
from which Proposition~\ref{ComplexKacRiceFormula} follows easily.
Alternatively, Proposition \ref{ComplexKacRiceFormula}
follows from \cite[Theorem 2.1]{BSZUniversality} by using the suitable
Gaussian density $D_k(0,\xi,z)$ in their formula 32 and normalizing by the
density at the two points $\bm{x}$ and $\bm{y}$. 

With these modifications to the Kac-Rice formulae, the proof of
Theorem~\ref{THM:UNIVERSALITYC} adapts nearly verbatim from the proof of
Theorem~\ref{THM:UNIVERSALITY}.  We list below the simple modifications that need
to be checked:
\begin{enumerate}
\item  The proof of Lemma \ref{COCV3} adapts easily to the integral expression in (\ref{K1COMPLEX}) and to the analogous formula for the densities $\rho(\bm{x})$ and $\rho(\bm{y})$.

\item  The proof of Lemma~\ref{GAF_POS_DEF} is easily adapted.  More specifically, the determinant of the $6 \times 6$ block
analogous to (\ref{WEIRD_DET}) equals:
\begin{eqnarray*}
{{\rm e}^{6\,{\rm Re} \left( t\overline{u} \right) }} \left( {{\rm e}^{
 \left(  \left| s \right|  \right) ^{2}+ \left(  \left| t-u \right| 
 \right) ^{2}}}-{{\rm e}^{2\, \left(  \left| s \right|  \right) ^{2}+2
\, \left(  \left| t-u \right|  \right) ^{2}}} \right)  \left(  \left( 
 \left(  \left| s \right|  \right) ^{2}+ \left(  \left| t-u \right| 
 \right) ^{2} \right) ^{2}+3 \right) +{{\rm e}^{3\, \left(  \left| s
 \right|  \right) ^{2}+3\, \left(  \left| t-u \right|  \right) ^{2}}}- 1,
\end{eqnarray*}
which is positive for $(s,t) \neq (0,u)$.

\item  The proof of Lemma \ref{LEM_CONV_COV_MATRICES} applies after verifying that, when expressed in local coordinates,
the covariance matrix for the rescaled $SU(n+1)$-invariant polynomials differs from that of the $\isoC$-invariant ensemble by $O\left(\frac{1}{d}\right)$.  
(These covariances are listed in \cite[Sections 2.4 and 4]{BSZUniversality}.)

\end{enumerate}

\appendix

\section{Proof of Lemma \ref{COCV3}}\label{SEC:Lemma8}

We will need the following lemma to prove Lemma \ref{COCV3}.
\begin{lemma} \label{Tails}  
For any positive definite $mn^2 \times mn^2$ matrix $\bold{A}$, there exists a constant $D>0$ such that, for $R$ sufficiently large, 
\begin{equation}  \label{TAILS_DENSITY}
\left| \int\limits_{||{\bm u}|| > R} \prod_{i=1}^m  | \det \bm{\xi}^i|  \mathrm{e}^{-\frac{1}{2} (\bold{A} {\bm u}, \bm{u})} d\bm{u}\right| \leq  \mathrm{e}^{-D R}.
\end{equation}
(Here $\bm{u}$ is as in (\ref{EQN_DEF_U}).
\end{lemma}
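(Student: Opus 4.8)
The plan is the standard polynomial-times-Gaussian tail estimate. First I would note that the integrand $\prod_{i=1}^m |\det \bm{\xi}^i|\, \mathrm{e}^{-\frac{1}{2}(\bold{A}\bm{u},\bm{u})}$ is nonnegative, so the absolute value on the left-hand side of (\ref{TAILS_DENSITY}) is superfluous. Next, recall from (\ref{EQN_DEF_U}) that $\bm{u}$ is the concatenation of all the rows $\bm{\xi}^i_1,\ldots,\bm{\xi}^i_n$ of the matrices $\bm{\xi}^i$; in particular each row $\bm{\xi}^i_j$ is a block of $n$ coordinates of $\bm{u}$, so $\|\bm{\xi}^i_j\| \le \|\bm{u}\|$. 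By Hadamard's inequality $|\det \bm{\xi}^i| \le \prod_{j=1}^n \|\bm{\xi}^i_j\| \le \|\bm{u}\|^n$, and hence $\prod_{i=1}^m |\det \bm{\xi}^i| \le \|\bm{u}\|^{mn}$. Thus the integrand is dominated by $\|\bm{u}\|^{mn}\, \mathrm{e}^{-\frac{1}{2}(\bold{A}\bm{u},\bm{u})}$.

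Since $\bold{A}$ is positive definite there is a constant $c > 0$ (its smallest eigenvalue) with $(\bold{A}\bm{u},\bm{u}) \ge c\|\bm{u}\|^2$ for all $\bm{u}$. Passing to polar coordinates in $\mathbb{R}^{mn^2}$ and writing $r = \|\bm{u}\|$, $N := mn(n+1)-1$, and $\omega$ for the surface area of the unit sphere in $\mathbb{R}^{mn^2}$, the left-hand side of (\ref{TAILS_DENSITY}) is at most
\begin{equation*}
\omega \int_R^\infty r^{N}\, \mathrm{e}^{-\frac{c}{2}r^2}\, dr .
\end{equation*}
The function $r^{N}\mathrm{e}^{-\frac{c}{4}r^2}$ is bounded on $[0,\infty)$ by some constant $M$ depending only on $\bold{A}, m, n$, and for $r \ge R$ one has $r^2 \ge Rr$, so the above is bounded by
\begin{equation*}
\omega M \int_R^\infty \mathrm{e}^{-\frac{c}{4}r^2}\, dr \le \omega M \int_R^\infty \mathrm{e}^{-\frac{c}{4}Rr}\, dr = \frac{4\,\omega M}{cR}\, \mathrm{e}^{-\frac{c}{4}R^2}.
\end{equation*}

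Finally, since $\frac{c}{4}R^2$ grows faster than any linear function of $R$ while the prefactor $\frac{4\omega M}{cR} \to 0$, for $R$ sufficiently large the last expression is at most $\mathrm{e}^{-R}$; so (\ref{TAILS_DENSITY}) holds with $D = 1$. There is no real obstacle here: the argument is the routine Gaussian tail bound, and the only mild care needed is bookkeeping the polynomial degree $N$ arising from the product of determinants together with the Jacobian of polar coordinates. Because the lemma asks only for the comparatively weak bound $\mathrm{e}^{-DR}$ (rather than the Gaussian bound $\mathrm{e}^{-DR^2}$ that the same computation actually yields), even these crude estimates are more than enough.
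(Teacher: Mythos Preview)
Your argument is correct and follows essentially the same route as the paper: bound the product of determinants by a power of $\|\bm u\|$, use the smallest eigenvalue of $\bold A$ to dominate the quadratic form, pass to polar coordinates, and estimate the resulting one-dimensional tail integral. The only cosmetic differences are that the paper uses the crude bound $\prod_i|\det\bm\xi^i|\le n!^m\|\bm u\|^{mn^2}$ from the permutation expansion (where you use Hadamard), and handles the radial integral via $r^a\le e^{ar}$ and an $\mathrm{erfc}$ bound rather than your splitting trick; neither difference matters.
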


\begin{proof}
First, we note that $(\bold{A}\bm{u}, \bm{u})\geq \lambda_{\rm min} ||\bm{u}||^2$. The left side of inequality (\ref{TAILS_DENSITY}) is bounded by
\begin{equation}
T := \left| \int\limits_{||\bm{u}|| > R} \prod_{i=1}^m  | \det \bm{\xi}^i| \mathrm{e}^{-\frac{1}{2} \lambda_{\rm min} ||\bm{u}||^2} d\bm{u}\right| \leq \mathrm{Vol}(\mathbb{S}^{mn^2-1}) n!^m \int\limits_{R}^{\infty} r^{2 m n^2 - 1}\mathrm{e}^{-\frac{\lambda_{\rm min}}{2} r^2} dr,
\end{equation}
using that $\prod_{i=1}^m  | \det \bm{\xi}^i| \leq n!^m ||\bm{u}||^{mn^2}$. Let
$a := 2m n^2  - 1$ and $b :=\frac{\lambda_{\rm min}}{2}$.

\begin{equation}
\int\limits_{R}^{\infty} r^{a}\mathrm{e}^{-b r^2} dr \leq \int\limits_{R}^{\infty} \mathrm{e}^{-b r^2 + a r} dr = \frac{\sqrt{\pi} \mathrm{e}^{\frac{a^2}{4b}} }{2} \mathrm{erfc}\left(\sqrt{b}R - \frac{a}{2\sqrt{b}} \right).
\end{equation}
The result then follows because $\begin{displaystyle}\mathrm{erfc}(x) := \frac{2}{\sqrt{\pi}}\int\limits_{x}^{\infty} \mathrm{e}^{-t^2} dt \leq e^{-x^2} \end{displaystyle}$.
\end{proof}

\noindent
{\bf Proof of Lemma \ref{COCV3}.}
We first split the left side of the inequality into integrals with $||\bm{u}|| < R$ and $||\bm{u}|| > R$ for some~$R$:
\begin{align} \label{SeparateTails}  & \left| \int\limits_{\mathbb{R}^{mn^2}}  \prod_{i=1}^m  | \det \bm{\xi}^i| \mathrm{e}^{-\frac{1}{2} \left(\bold{B}\bm{u},\bm{u}\right)}  -  \int\limits_{\mathbb{R}^{mn^2}}  \prod_{i=1}^m  | \det \bm{\xi}^i| \mathrm{e}^{-\frac{1}{2} \left(\bold{A}\bm{u},\bm{u}\right)} d\bm{u} \right| \\
 &\leq   \left| \int\limits_{||\bm{u}|| < R}  \prod_{i=1}^m  | \det \bm{\xi}^i| \mathrm{e}^{- \frac{1}{2} (\bold{A}\bm{u},\bm{u})} \left(\mathrm{e}^{-\frac{1}{2} \left((\bold{B}-\bold{A}) \bm{u},\bm{u}\right)}-1\right) d\bm{u}\right| \\ 
&+ \left| \int\limits_{||\bm{u}|| > R}  \prod_{i=1}^m  | \det \bm{\xi}^i| \mathrm{e}^{-\frac{1}{2} (\bold{B} \bm{u}, \bm{u})} d\bm{u}\right| +  \left| \int\limits_{||\bm{u}|| > R} \prod_{i=1}^m  | \det \bm{\xi}^i| \mathrm{e}^{-\frac{1}{2} (\bold{A} \bm{u}, \bm{u})} d\bm{u}\right| . \end{align}
\noindent
For $\bold{B}$ sufficiently close to $\bold{A}$, {\rm min} $(\bold{B}\bm{u},\bm{u})\sqrt{2} \geq (\bold{A}\bm{u},\bm{u}) $. Thus, Lemma \ref{Tails} gives that the two latter summands of (\ref{SeparateTails}) are both bound by $\mathrm{e}^{-D R}$.

We use H\"older's Inequality to bound the first summand in (\ref{SeparateTails}) with
\begin{equation} \ \label{Holders}
 \left| \int\limits_{||\bm{u}|| < R}  \prod_{i=1}^m  | \det \bm{\xi}^i| \mathrm{e}^{- \frac{1}{2} (\bold{A}\bm{u},\bm{u})} d\bm{u} \right| \left|\left| \mathrm{e}^{-\frac{1}{2} \left((\bold{B}-\bold{A}) \bm{u},\bm{u}\right)}-1\right|\right|_{L^{\infty} (||\bm{u}|| < R)} = D_5 \left|\left| \mathrm{e}^{-\frac{1}{2} \left((\bold{B}-\bold{A}) \bm{u},\bm{u}\right)}-1\right|\right|_{L^{\infty} (||\bm{u}|| < R)}.
\end{equation}

Let $\bold{L} = \bold{A} - \bold{B}$. Since $|| \mathrm{e}^{\frac{1}{2} x} - 1 || < x$ as $x$ approaches $1$,
\begin{align}
\left|\left| \mathrm{e}^{\frac{1}{2} \left(\bold{L} \bm{u},\bm{u}\right)}-1 \right|\right|_{L^{\infty} (||\bm{u}|| < R)} \leq \left|\left| {\left(\bold{L} \bm{u},\bm{u}\right)} \right|\right|_{L^{\infty} (||\bm{u}|| < R)}\leq mn^2\left|\left|\bold{L}\right|\right|_{\infty} ||\bm{u}||^2_{||\bm{u}||<R} \leq   mn^2 R^2\left|\left|\bm{L}\right|\right|_{\infty}  .
\end{align}

Therefore, we have
\begin{equation} \label{FinalREquation}   \left| \int\limits_{\mathbb{R}^{2n^2}}| \det \bm{\xi}| |\det \bm{\eta}| \mathrm{e}^{- \frac{1}{2} (\bold{A}\bm{u},\bm{u})} \left(\mathrm{e}^{-\frac{1}{2} \left((\bold{B}-\bold{A}) \bm{u},\bm{u}\right)}-1\right) d\bm{u}\right| \leq  mn^2 D_5 R^2 \left|\left|\bold{A-B}\right|\right|_{\infty} + 2 \mathrm{e}^{-D R}
\end{equation}
The result follows if we set $R = ||\bold{A}-\bold{B}||_{\infty}^{-1/4}$.
\qed

\section{Moments of the Volume of a Random Unit Parallelotope} \label{kthMomentProof}
The derivation of the formula for the density (\ref{Density}) and the short-range asymptotics from Theorems \ref{CORR_FINITE_D} and \ref{CORR_ISOM_INV} require the following formula:
\begin{proposition} \label{kthMoment}
Consider $n$ random unit vectors in $\mathbb{R}^n$ chosen independently with respect to spherical measure. The $k$th moment of the volume $V$ of the parallelotope formed by these vectors is \begin{displaymath} E[V^k]  =  \left(\frac{\Gamma\left(\frac{n}{2}  \right)}{\Gamma\left(\frac{n+k}{2}  \right)}\right)^{n-1} \prod\limits_{i=1}^k \frac{\Gamma\left(\frac{n-1+i}{2} \right)}{\Gamma\left( \frac{i}{2}\right)}. \end{displaymath}
\end{proposition}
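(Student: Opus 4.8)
The plan is to reduce the problem to the well-understood distribution of the determinant of a Gaussian random matrix by means of a polar-coordinates decoupling. Let $g_1,\dots,g_n$ be iid standard Gaussian vectors in $\mathbb{R}^n$ and write $g_i = r_i u_i$ with $r_i = \|g_i\|$ and $u_i = g_i/\|g_i\|$. It is classical that $r_i$ and $u_i$ are independent, that $u_i$ is distributed according to normalized spherical measure on $\mathbb{S}^{n-1}$, and that $r_i^2$ has the $\chi^2_n$ distribution. Setting $G = [\,g_1\mid\cdots\mid g_n\,]$, multilinearity of the determinant gives $|\det G| = \big(\prod_{i=1}^n r_i\big)\,V$, where $V$ is exactly the volume of the parallelotope spanned by the unit vectors $u_1,\dots,u_n$. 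Since the $r_i$ are independent of each other and of the $u_i$, taking $k$th moments yields $E\big[|\det G|^k\big] = \big(E[r^k]\big)^n\, E[V^k]$, where $r \stackrel{d}{=} \|g\|$ for a single standard Gaussian $g\in\mathbb{R}^n$. Thus it suffices to compute $E\big[|\det G|^k\big]$ and $E[r^k]$ and divide.

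For the scalar factor, $r^2 \sim \chi^2_n$ gives $E[r^k] = E\big[(r^2)^{k/2}\big] = 2^{k/2}\,\Gamma\!\big(\tfrac{n+k}{2}\big)\big/\Gamma\!\big(\tfrac{n}{2}\big)$ from the standard formula for fractional moments of a chi-square variable. For the determinant I would use the base-times-height factorization $|\det G| = \prod_{i=1}^n d_i$, where $d_i$ is the distance from $g_i$ to the linear span of $g_1,\dots,g_{i-1}$. Conditioned on $g_1,\dots,g_{i-1}$, rotational invariance of the Gaussian law shows that $d_i$ is distributed as the norm of a standard Gaussian in $\mathbb{R}^{n-i+1}$, independently of the earlier columns; hence $d_i^2 \sim \chi^2_{n-i+1}$ and the $d_i$ are mutually independent. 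Therefore $E\big[|\det G|^k\big] = \prod_{i=1}^n 2^{k/2}\,\Gamma\!\big(\tfrac{n-i+1+k}{2}\big)\big/\Gamma\!\big(\tfrac{n-i+1}{2}\big) = 2^{nk/2}\prod_{j=1}^n \Gamma\!\big(\tfrac{j+k}{2}\big)\big/\Gamma\!\big(\tfrac{j}{2}\big)$ after reindexing $j = n-i+1$.

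Dividing, the powers of $2$ cancel and we obtain $E[V^k] = \big(\Gamma(\tfrac n2)/\Gamma(\tfrac{n+k}{2})\big)^n\prod_{j=1}^n \Gamma\!\big(\tfrac{j+k}{2}\big)\big/\Gamma\!\big(\tfrac{j}{2}\big)$, and what remains is Gamma-function bookkeeping. Writing $\prod_{j=1}^n\Gamma\!\big(\tfrac{j+k}{2}\big) = \prod_{l=k+1}^{n+k}\Gamma\!\big(\tfrac l2\big)$ and using the telescoping identity $\big(\prod_{l=k+1}^{n+k}\Gamma(\tfrac l2)\big)\big(\prod_{l=1}^{k}\Gamma(\tfrac l2)\big) = \prod_{l=1}^{n+k}\Gamma(\tfrac l2) = \big(\prod_{l=n+1}^{n+k}\Gamma(\tfrac l2)\big)\big(\prod_{l=1}^{n}\Gamma(\tfrac l2)\big)$, one gets $\prod_{j=1}^n\Gamma\!\big(\tfrac{j+k}{2}\big)\big/\Gamma\!\big(\tfrac j2\big) = \prod_{i=1}^k\Gamma\!\big(\tfrac{n+i}{2}\big)\big/\Gamma\!\big(\tfrac i2\big)$; a second telescoping, $\prod_{i=1}^k\Gamma\!\big(\tfrac{n+i}{2}\big) = \big(\Gamma(\tfrac{n+k}{2})/\Gamma(\tfrac n2)\big)\prod_{i=1}^k\Gamma\!\big(\tfrac{n-1+i}{2}\big)$, converts one factor of $\Gamma(\tfrac n2)/\Gamma(\tfrac{n+k}{2})$ and reproduces the claimed formula. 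There is no serious obstacle in this argument; the only points that require care are stating precisely the conditional distribution and independence of the heights $d_i$ that underlie the chi-square factorization of $|\det G|$, and tracking the exponent $n$ versus $n-1$ through the two reindexings of the Gamma products. (An alternative route, proving the identity by induction on $n$ after conditioning on the last vector, is also available but seems to require essentially the same computations.)
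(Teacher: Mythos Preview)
Your proof is correct and takes a genuinely different route from the paper's. The paper works intrinsically on the sphere: it adds the unit vectors one at a time, observes that the height of the $i$th unit vector off the span of the previous $i-1$ equals $\cos\theta$ for an angle $\theta$ whose density on $[0,\pi/2]$ is proportional to $\sin^{n-i-1}\theta\,\cos^{i-1}\theta$, and then evaluates $E[V^k]=\prod_{i=1}^{n-1}\big(\int_0^{\pi/2}\sin^{n-i-1}\theta\,\cos^{i+k-1}\theta\,d\theta\big)\big/\big(\int_0^{\pi/2}\sin^{n-i-1}\theta\,\cos^{i-1}\theta\,d\theta\big)$ via Beta integrals. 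Your argument instead lifts the problem to iid Gaussian columns, uses the polar decoupling $g_i=r_iu_i$ to factor $E[|\det G|^k]=(E[r^k])^n E[V^k]$, and computes $E[|\det G|^k]$ through the Bartlett/QR height decomposition with independent $\chi^2$ factors. The paper's approach is more self-contained and geometrically explicit; yours reduces everything to two textbook Gaussian facts (chi-square moments and the Bartlett decomposition) and so is perhaps quicker to justify rigorously, at the cost of introducing auxiliary randomness. The Gamma bookkeeping at the end of your argument, including the two telescopings that convert the exponent $n$ to $n-1$ and shift $n+i$ to $n-1+i$, is carried out correctly.
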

This result is proved in greater generality in \cite{Miles} (see Equation 23); however, we will give a simple derivation of the formula.

\begin{proof}
The method for finding this moment involves fixing each of the vectors $v_i$ that determine the parallelotope one at a time with respect to the parallelotope in $\mathbb{R}^{i-1}$ described by the previous $i-1$ vectors, weighting each newly added vector based on the probability of obtaining a given height off of the previous $\left(i-1\right)-$parallelotope. 

Thus, after fixing the first vector, we see that the height of the second vector off of the first vector in terms of the angle $\theta$ off of the normal to the first is $\cos\left(\theta\right)$ and the probability density of obtaining this height is \begin{displaymath}\frac{\sin^{n-2}\theta}{\int\limits_0^{\frac{\pi}{2}}\sin^{n-2}\theta \ d\theta}.\end{displaymath}
Yet, from the next vector on, the vector can vary in two directions, along two different spheres, one of dimension $i-1$ and the other of dimension $n-i-1$, in order to maintain the same height. Thus, for the $i$th vector, we have that the probability density of obtaining each height $\cos \theta$ for an angle $\theta$ off the normal vector of the base given by the first $i-1$ vectors is \begin{displaymath} \frac{\sin^{n-i-1}\theta\cos^{i-1} \theta}{\int\limits_0^{\frac{\pi}{2}}\sin^{n-i-1}\theta \cos^{i-1} \theta \ d \theta}.\end{displaymath}
To express the $k$th power of the volume, we multiply the $k$th power of each of the heights together.
Thus,
\begin{align} \label{kthMomentParalletope} E[V^k]  = \prod\limits_{i=1}^{n-1} \left(\frac{\int\limits_{0}^{\pi/2} \sin^{n-i-1} \theta \cos^{i-1+k} \theta d\theta}{\int\limits_{0}^{\pi/2} \sin^{n-i-1} \theta \cos^{i-1} \theta d\theta}\right) &= \left(\frac{\Gamma\left(\frac{n}{2}  \right)}{\Gamma\left(\frac{n+k}{2}  \right)}\right)^{n-1} \prod\limits_{i=1}^{n-1} \left( \frac{\Gamma\left(\frac{n-i}{2}  \right) \Gamma\left(\frac{i+k}{2} \right)}{\Gamma\left(\frac{n-i}{2}\right)\Gamma\left(\frac{i}{2}\right)} \right)
\\ &= \left(\frac{\Gamma\left(\frac{n}{2}  \right)}{\Gamma\left(\frac{n+k}{2}  \right)}\right)^{n-1} \prod\limits_{i=1}^k \frac{\Gamma\left(\frac{n-1+i}{2} \right)}{\Gamma\left( \frac{i}{2}\right)} .
\end{align}
\end{proof}

\noindent
{\bf Acknowledgements:}
This work was supported in part by the National Science Foundation 
[DMS-1265172 to P.B, DMS-1102597 and DMS-1348589 to R.K.W.R.].

The key technique for computing short-range asymptotics for
$K(\bm{x},\bm{y})$ in Section \ref{SEC:SHORT_RANGE} was the 2013 Siemens competition and 2014 Intel Science Talent Search
project of Homma, which he developed while a junior and senior at Carmel High School.

We thank Liviu Nicolaescu for his interesting comments.  We also thank the
referee for his or her very careful reading of the paper and suggestions that have improved the exposition.
\bibliographystyle{plain}
\singlespacing
{\bibliography{correlations}}
\end{document}